\numberwithin{equation}{section}
\theoremstyle{definition}
\newtheorem{definition}{Definition}[section]
\theoremstyle{plain}
\newtheorem{Theorem}[definition]{Theorem}
\newtheorem{Proposition}[definition]{Proposition}
\newtheorem{Lemma}[definition]{Lemma}
\newtheorem{Corollary}[definition]{Corollary}
\theoremstyle{remark}
\newtheorem{remark}[definition]{Remark}
\newcommand{\R}{\mathbb R}
\newcommand{\N}{\mathbb N}
\newcommand{\eps}{\varepsilon}
\newcommand{\Ric}{\mathrm{Ric}}
\newcommand{\gec}{{\check g_\eps}}
\newcommand{\comp}{\Subset}
\newcommand{\sse}{\subseteq}
\newcommand{\linfloc}{L^\infty_{\mathrm{loc}}}
\newcommand{\D}{\mathcal{D}}
\newcommand{\diag}{\mathrm{diag}}
\newcommand{\lamin}{\lambda_{\text{min}}}
\newcommand{\enumlabelformat}{\roman}
\newcommand{\enumlabelfont}[1]{#1}
\newlength{\thelabelsep}
\setlist{labelsep=\thelabelsep}
\setlist[enumerate]{font=\enumlabelfont,label=(\enumlabelformat*),leftmargin=2.5em}
\setlist[itemize]{leftmargin=2.5em,label=$-$}
\newcounter{inlineenum}
\renewcommand{\theinlineenum}{\enumlabelformat{inlineenum}}
\newenvironment{inlineenum}
 {\setcounter{inlineenum}{0}%
  \renewcommand{\item}{\refstepcounter{inlineenum}{(\theinlineenum)\hspace{\thelabelsep}}}
 }
 {\ignorespacesafterend}
\newcommand\jp{\partial J^{+}(S)}
\newcommand\ep{E^{+}(S)}
\newcommand\dpep{D^{+}\left(E^{+}(S)\right)}
\newcommand\dpjp{D^{+}\left(\partial J^{+}(S)\right)}
\newcommand\hpjp{H^{+}\left(\partial J^{+}(S)\right)}
\newcommand\hpep{H^{+}\left(E^{+}(S)\right)}
\title{The Hawking--Penrose singularity theorem for $C^{1,1}$-Lorentzian metrics}
\author{Melanie Graf\footnote{University of Vienna, Faculty of Mathematics,
melanie.graf@univie.ac.at,
michael.kunzinger@univie.ac.at, roland.steinbauer@univie.ac.at}, \\
James D.E.~Grant\footnote{Department of Mathematics, University of Surrey, j.grant@surrey.ac.uk},\\
Michael Kunzinger${}^*$,\\ 
Roland Steinbauer${}^*$,\\ 
}
\begin{document}

\date{\today}


\maketitle

\begin{abstract}
We show that the Hawking--Penrose singularity theorem, and the generalisation of
this theorem due to Galloway and Senovilla, continue to hold for Lorentzian
metrics that are of $C^{1, 1}$-regularity. We formulate appropriate weak
versions of the strong energy condition and genericity condition for $C^{1,
1}$-metrics, and of $C^0$-trapped submanifolds. By
regularisation, we show that, under these weak conditions, causal geodesics necessarily become non-maximising. This requires a detailed
analysis of the matrix Riccati equation for the approximating metrics, which may
be of independent interest.

\vskip 1em

\noindent
\emph{Keywords:} Singularity theorems, low regularity, regularisation, causality theory
\medskip

\noindent
\emph{MSC2010:} 83C75, 
        53B30 

\end{abstract}

\section{Introduction}\label{sec:intro}

The classical singularity theorems of General Relativity show that a Lorentzian manifold that satisfies physically ``sensible'' conditions cannot be geodesically complete. In particular, if one attempts to ``extend'' such a manifold, then one cannot extend with a $C^2$-Lorentzian metric. It is then natural to ask whether one can extend with a lower regularity Lorentzian metric. In certain situations with a large amount of symmetry, one can show that even a low level of regularity cannot be maintained. For example, in recent work, Sbierski~\cite{Sbierski} has shown that the Schwarzschild solution cannot be extended as a continuous Lorentzian metric.

Generally speaking, the singularity theorems of Penrose~\cite{Pen}, Hawking~\cite{HawkingIII} and Hawking--Penrose~\cite{HP} hold for
$C^2$-Lorentzian metrics. In~\cite{penrosec11} and~\cite{hawkingc11}, it has been shown, however, that the theorems of Penrose and Hawking hold for metrics that are $C^{1, 1}$, i.e.\ metrics that are differentiable, with all derivatives locally Lipschitz. Such a level of regularity is of significance to us for a variety of reasons. From a mathematical point of view, such metrics have the following properties:
\begin{enumerate}
\item The Levi-Civita connection is locally Lipschitz. This is, therefore, the lowest regularity where the classical Picard--Lindel\"{o}f theorem gives existence and uniqueness of solutions of the geodesic equations for the metric. Moreover, the solution of the geodesic equation depends continuously (in fact, Lipschitz continuously) on the initial data.
\item The curvature of the metric is well-defined in $L^{\infty}_{\mathrm{loc}}$. In particular, Rademacher's theorem implies that the curvature exists almost-everywhere.
\end{enumerate}
From the point of view of physics, the curvature of a metric being bounded but discontinuous, rather than blowing up, would, via the Einstein field equations, give rise to (or be generated by) a finite jump in the energy-momentum tensor of the matter variables. This scenario is quite acceptable physically, and arises in the classical example of the Oppenheimer--Snyder solution~\cite{OppSny} and the whole class of matched spacetimes (see e.g.~\cite{L,MaSeno}). As such, there are both physical and mathematical motivations for studying the class of
$C^{1, 1}$-metrics.

When one attempts to generalise the proof of the singularity theorems to the case of a $C^{1, 1}$-metric, however, the fact that the curvature tensor is only defined almost-everywhere poses significant problems.%
\footnote{A number of technical obstacles for a proof in the $C^{1, 1}$-case are listed in Sect.~6.1 of the review article~\cite{Seno1}, see also \cite[Sec.\ 8.1]{SenGar}.} %

The standard proof of the singularity theorems relies on the existence of conjugate points (or focal points) along suitable classes of geodesics in the Lorentzian manifold. Such conjugate points are shown to exist by a study of Jacobi fields (or, equivalently, Riccati equations) along these geodesics. However, if the curvature tensor is only defined almost-everywhere, it is quite possible that, since a geodesic curve has measure zero, the curvature may not be defined along any given geodesic, so the Jacobi equation (and, hence, the notion of a conjugate point) is not well-defined along said geodesic. In Riemannian geometry, a standard example of a metric that is $C^{1, 1}$ but not $C^2$ is the metric on a hemisphere joined at the equator to a flat cylinder~\cite{Peters, Penrose:ImpulsiveGravitationalWaves}. This metric has strictly positive curvature on the hemisphere and zero curvature on the cylindrical part, which implies that the curvature is not well-defined on the geodesic that traverses the join between the two regions. A similar phenomenon occurs in Lorentzian geometry in the Oppenheimer--Snyder model, where the curvature tensor is not well-defined along the geodesics that generate the boundary between the interior and exterior regions of the solution. As such, the notion of a Jacobi field is not defined along such geodesics.

The importance of conjugate points (or focal points) in the proof of the singularity theorems is the connection with maximising properties of causal geodesics. In particular, a causal geodesic from a point
stops being maximising if and only if either a) there exists a distinct causal geodesic between the same endpoints of the same length or b) the geodesic encounters a conjugate point.%
\footnote{A similar statement holds for causal geodesics emanating from a submanifold of $M$.}
Given suitable geometrical conditions on the Lorentzian metric (e.g.\ a Ricci curvature bound, a ``convergence condition'' such as the existence of a trapped surface, and a completeness condition), one can use Riccati comparison techniques to show that all causal geodesics of a suitable type will encounter conjugate points, and hence stop being maximising curves between their endpoints. It should perhaps be pointed out, however, that the cut-locus of a point in a Lorentzian manifold is necessarily a closed set, of which conjugate points form a subset of zero measure. Therefore, almost all geodesics stop maximising due to their intersection with another geodesic with the same endpoint of the same length. As such, most causal geodesics will no longer be maximising even \emph{before\/} they encounter their first conjugate point. However, since such an intersection of geodesics is related to the~\emph{global\/} geometry of the manifold, there is no way to estimate (in terms of, say, the curvature) the distance that one must traverse along a given curve before one encounters such an intersection. The power of conjugate points (and focal points) is the fact that they lead to geodesics no longer being maximising \emph{and\/} we can estimate when they occur.

\smallskip
In this paper, we show that the Hawking--Penrose singularity theorem~\cite{HP} can be generalised to $C^{1, 1}$-Lorentzian metrics. The Hawking--Penrose theorem is, perhaps, the most refined of the classical singularity theorems, in the sense that it requires the most delicate analysis of the effects of curvature. As a consequence, the technical issues that arise from the lack of a suitable concept of a ``conjugate point'' are considerably more pronounced when one attempts to generalise the Hawking--Penrose theorem to the $C^{1, 1}$-setting, than they were with the Penrose or Hawking theorems. The most general version of the Hawking--Penrose theorem, which is stated in ``causal'' language, states the following:
\begin{Theorem}\cite[pp.~538]{HP}
\label{C2HPCausalityVersion}
Let $(M, g)$ be a spacetime with $g$ a $C^2$-metric with the following properties:
\begin{enumerate}[label={(C.\roman*)}, ref={C.\roman*}]
\item\label{thm:HPCausal:C2:1} $M$ is chronological, i.e., contains no closed timelike curves;
\item\label{thm:HPCausal:C2:2} Every inextendible causal geodesic in $M$ contains conjugate points;
\item\label{thm:HPCausal:C2:3} There is an achronal set $S$ such that $E^+(S)$ or $E^-(S)$ is compact.
\end{enumerate}
Then $(M, g)$ is causally geodesically incomplete.
\end{Theorem}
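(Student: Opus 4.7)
The plan is to argue by contradiction: assume $(M,g)$ is causally geodesically complete and produce a globally maximising inextendible causal geodesic $\sigma\colon\R\to M$ — a ``causal line'' — which by (\ref{thm:HPCausal:C2:2}) must contain a pair of conjugate points and therefore cease to be maximising, a contradiction.

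By time reversal I may assume $F:=E^+(S)$ is the compact set provided by (\ref{thm:HPCausal:C2:3}); achronality of $S$ makes $F$ closed and achronal. The core step is the construction of the line. For each $n\in\N$ I would, using compactness of $F$ and causal geodesic completeness, select $p_n\in F$ together with a future-directed maximising causal geodesic segment $\sigma_n$ starting at $p_n$ of Lorentzian arclength at least $n$: such segments exist because the Lorentzian distance from $F$ attains its supremum on suitable compact slabs to the future of $F$, and completeness lets one step further out as $n$ grows. Compactness of $F$ and of the sphere of future-directed causal directions then lets me extract a subsequence along which the initial data $(p_n,\dot\sigma_n(0))$ converge; the limit curve lemma yields a future-inextendible maximising causal geodesic $\sigma^+$ emanating from some $p\in F$. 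A symmetric construction in the past produces $\sigma^-$, and a splicing argument glues $\sigma^-$ and $\sigma^+$ into an inextendible causal line $\sigma\colon\R\to M$. Chronology (\ref{thm:HPCausal:C2:1}) is crucial here to rule out the alternative in which the limit closes up into a closed causal curve rather than an honest inextendible line.

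With $\sigma$ in hand the contradiction is short: by (\ref{thm:HPCausal:C2:2}) the inextendible geodesic $\sigma$ carries conjugate points $\sigma(s_1),\sigma(s_2)$ with $s_1<s_2$. A standard Jacobi/Riccati argument then shows $\sigma(s)\in I^+(\sigma(s_1))$ for every $s>s_2$, so $\sigma|_{[s_1,s]}$ admits a strictly longer timelike perturbation between its endpoints; this contradicts the maximising property of $\sigma$ inherited from the $\sigma_n$.

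The hard part is the construction of the line, not the conjugate-point step. One must control the sequence $\{\sigma_n\}$ so that its initial data accumulate in the compact set $F$ (here (\ref{thm:HPCausal:C2:3}) is essential); one must ensure that the combined limit is genuinely inextendible and still maximising on every compact subsegment; and one must exclude the closed-causal-curve alternative when splicing $\sigma^+$ with $\sigma^-$ (here (\ref{thm:HPCausal:C2:1}) enters). These limit-curve and compactness steps are essentially robust to lowering the regularity of $g$ from $C^2$ to $C^{1,1}$, so the real work of the paper lies elsewhere: in giving an appropriate weak reformulation of (\ref{thm:HPCausal:C2:2}) when the curvature is only $\linfloc$.
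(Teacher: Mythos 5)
Your high-level strategy — argue by contradiction, manufacture an inextendible maximising causal geodesic (a causal line) from the compactness of $E^+(S)$, then use conjugate points to show it cannot be maximising — is indeed the plan of Hawking and Penrose and of Theorem~\ref{C11HPCausalitybit} in the paper. But the construction of the line is where your proposal breaks, and it breaks in three distinct places.

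First, the existence of segments $\sigma_n$ of Lorentzian arclength at least $n$ starting in $F=E^+(S)$ is not justified and is in general false. Causal geodesic completeness bounds affine parameter, not Lorentzian arclength, and there is nothing in the hypotheses that forces $\sup_{q}d(F,q)=\infty$. Indeed the line one is ultimately after may well be a \emph{null} line, which has zero Lorentzian arclength, so any scheme that selects segments by requiring arclength $\ge n$ cannot produce it. A correct version needs to work with $h$-arclength (for a background Riemannian $h$) and needs a genuinely inextendible \emph{timelike} curve to anchor the construction; this is exactly what Corollary~\ref{cor:27} (the analogue of~\cite[Lemma 2.12]{HP}) provides by producing a future-inextendible timelike curve $\gamma$ inside $D^+(E^+(S))^\circ$, a nontrivial topological step that your proposal has no substitute for.

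Second, the splicing step does not work as described. Your $\sigma^+$ and $\sigma^-$ are limits of independent sequences; their base points in $F$ need not coincide, and even if they did, the concatenation of a past-inextendible and a future-inextendible maximising geodesic through a common point is generically broken and is certainly not automatically maximising between a point on $\sigma^-$ and a point on $\sigma^+$. The actual Hawking--Penrose argument circumvents this by never splicing: it produces a single sequence $\gamma_k$ of \emph{maximising} causal curves from points $p_k$ far in the past on a past-inextendible timelike curve $\lambda\subseteq D^-(E^-(F))^\circ$ to points $q_k$ far in the future on $\gamma\subseteq D^+(E^-(F))^\circ$, where $F=E^+(S)\cap\overline{J^-(\gamma)}$. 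Every $\gamma_k$ must cross the compact set $E^-(F)$, so a single limit curve exists and is already an inextendible maximising causal geodesic (cf.\ Theorem~\ref{prop:MinguzziLimit}); no gluing or corner analysis is needed. The compactness used is that of $E^-(F)$, not of $E^+(S)$ directly, and obtaining that compactness (Lemma~\ref{prop:28}) is itself a substantive step your proposal omits.

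Third, the role of chronology is misattributed. It is not used to exclude a closed-curve limit of a splice; in the Hawking--Penrose proof (and Theorem~\ref{C11HPCausalitybit}), chronology combined with the no-maximising-null-geodesics assumption gives \emph{strong causality} (Lemma~\ref{lem: 19 conj. pts imply strong causality}), which underpins the Cauchy-development machinery (horizons are non-compact, $D(E^-(F))^\circ$ is globally hyperbolic, maximising connecting geodesics exist, the curves $\lambda,\gamma$ are not imprisoned). Because strong causality and the structure of $D(E^-(F))^\circ$ are needed before one ever sees the limit curve, chronology enters much earlier and more pervasively than your sketch suggests. In short, what you flag as the hard part is indeed the hard part, but the mechanism you propose for it does not close.
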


Hawking and Penrose also prove the following more ``analytical'' result:
\footnote{In~\cite{HP}, Theorem~\ref{classicalHP} is proved as a Corollary of Theorem~\ref{C2HPCausalityVersion}. Since the bulk of this paper is dedicated to proving the analogue of Theorem~\ref{classicalHP}, we will hereafter refer to Theorem~\ref{classicalHP} as the ``Hawking--Penrose singularity theorem''.}

\begin{Theorem}\cite[Sec.\ 3, Cor.]{HP}
\label{classicalHP}
A spacetime $(M, g)$ with $C^2$-metric that
\begin{enumerate}[noitemsep, label={(A.\arabic*)}, ref={(A.\arabic*)}]
\item\label{1.1:1} is chronological;
\item satisfies the strong energy condition,
\begin{equation}
\label{smoothstrongenergy}
\Ric(X,X)\geq 0\,\,\forall \,\mathrm{causal}\,X\in TM;
\end{equation}
\item\label{1.1:3} satisfies the genericity condition, i.e., along every causal geodesic $\gamma$ there is a point at which
\begin{equation}
\label{genericityOriginal}
\dot{\gamma}^c \dot{\gamma}^d \dot{\gamma}_{\left[a\right.}R_{\left. b \right]cd\left[e\right.}\dot{\gamma}_{\left. f \right]} \neq 0;
\end{equation}
\item\label{1.1:4} contains at least one of the following
\begin{enumerate}[noitemsep, label={(\roman*)}]
\item a compact achronal set without edge,
\item a closed trapped surface or
\item\label{1.1:4c} a point $p$ such that on every past (or every future) null geodesic from $p$ the expansion $\theta $ of the null geodesics from $p$ becomes negative,
\end{enumerate}
\end{enumerate}
cannot be causally geodesically complete.
\end{Theorem}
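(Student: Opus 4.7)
The natural strategy is to deduce Theorem~\ref{classicalHP} from Theorem~\ref{C2HPCausalityVersion} by contradiction: assume for a moment that $(M,g)$ \emph{is} causally geodesically complete, and show that the hypotheses (A.1)--(A.4) force the three causal hypotheses (\ref{thm:HPCausal:C2:1})--(\ref{thm:HPCausal:C2:3}), contradicting the conclusion of Theorem~\ref{C2HPCausalityVersion}. Since (A.1) is literally (\ref{thm:HPCausal:C2:1}), only (\ref{thm:HPCausal:C2:2}) and (\ref{thm:HPCausal:C2:3}) require real work.

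For (\ref{thm:HPCausal:C2:2}), the plan is to combine the strong energy condition (A.2) with the genericity condition (A.3) through a matrix Riccati / Raychaudhuri analysis along an arbitrary complete causal geodesic $\gamma$. At the distinguished point $p_0\in\gamma$ supplied by (A.3), the curvature term orthogonal to $\dot\gamma$ fails to vanish; propagating forward (and backward) via the Riccati equation and using $\Ric(\dot\gamma,\dot\gamma)\ge 0$ to kill the only term with a favourable sign, one obtains that any smooth solution of the Riccati equation along $\gamma$ must blow up in finite affine parameter on each side of $p_0$, equivalently that conjugate points (focal to $p_0$) appear in both directions. Completeness of $\gamma$ ensures these conjugate points actually lie on the geodesic. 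Both the timelike and the null subcases need to be handled, the null case using the screen-space version of the Raychaudhuri equation.

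For (\ref{thm:HPCausal:C2:3}), the three subcases of (A.4) are treated separately. In (i) the compact achronal edgeless set $S$ satisfies $E^+(S)=S$, so $E^+(S)$ is already compact. In (ii), starting from a closed trapped surface $S$, Raychaudhuri together with the strong energy condition furnishes a uniform upper bound on the affine parameter at which the first focal point to $S$ appears along each outgoing null normal geodesic. Beyond such focal points the null generators enter $I^+(S)$ and therefore leave $E^+(S)$, so $E^+(S)$ lies in the image of a compact subset of the (compactified) null normal bundle under the exponential map; combined with closedness of $E^+(S)$ in $J^+(S)$ this yields compactness. Subcase (iii) is analogous, with a point $p$ playing the role of $S$ and $E^-(p)$ or $E^+(p)$ shown compact according to which time direction the hypothesis selects.

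The most delicate step is the uniformity of the focal-point bound in (ii) and (iii): one must upgrade pointwise curvature information into an affine-parameter estimate that is valid for \emph{all} outgoing null generators simultaneously, and then exploit achronality of $E^\pm(S)$ together with continuous dependence of geodesics on initial data to turn this into genuine compactness. In the $C^2$-setting this is essentially the argument of~\cite{HP}; but precisely this step is what will have to be reconstructed in the $C^{1,1}$-setting of the rest of the paper, where pointwise curvature estimates are unavailable along individual geodesics and must be replaced by regularised Riccati comparisons on an approximating family of smooth metrics.
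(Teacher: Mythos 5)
Your proposal correctly reconstructs the structure of the classical argument: deduce Theorem~\ref{classicalHP} from Theorem~\ref{C2HPCausalityVersion} by showing that (A.1)--(A.4), together with the completeness assumption to be contradicted, imply (\ref{thm:HPCausal:C2:1})--(\ref{thm:HPCausal:C2:3}) --- the genericity and energy conditions yielding conjugate points along every complete causal geodesic via Riccati/Raychaudhuri comparison, and each subcase of (A.4) yielding compactness of $E^{+}(S)$ or $E^{-}(S)$. This is precisely the outline the paper gives for the $C^2$ case (citing \cite{HP} for the details) and is also the blueprint followed in Section~\ref{sec:proof} for the $C^{1,1}$ generalisation, so your proposal takes essentially the same approach as the paper.
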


For $C^2$-metrics, Theorem~\ref{classicalHP} is proved as a corollary of Theorem~\ref{C2HPCausalityVersion}. In particular, the genericity condition~\eqref{genericityOriginal} along with strong energy condition~\eqref{smoothstrongenergy} are used, in conjunction with a matrix Riccati equation for the second fundamental form of a geodesic congruence, to show that any of the conditions~\ref{1.1:4} imply that every inextendible causal geodesic in $M$ contains conjugate points, and that Condition~(\ref{thm:HPCausal:C2:3}) of Theorem~\ref{C2HPCausalityVersion} holds. Therefore, the conditions of Theorem~\ref{classicalHP} imply those of Theorem~\ref{C2HPCausalityVersion}.

In the $C^{1, 1}$-case, which we study in this paper, the logical structure of the argument is very similar. We first prove an appropriate version of Theorem~\ref{C2HPCausalityVersion} for $C^{1, 1}$-metrics. To this end, we first note that Condition~(\ref{thm:HPCausal:C2:2}) in Theorem~\ref{C2HPCausalityVersion} explicitly depends on the concept of a conjugate point, and so cannot be directly generalised to the case of $C^{1, 1}$-metrics. However, an inspection of the proof of the Hawking--Penrose theorem shows that, rather than Condition~(\ref{thm:HPCausal:C2:2}), the property that is actually required for their result is the following:

\begin{enumerate}[label={(\ref{thm:HPCausal:C2:2}{\color{red}${}^{\prime}$})}, ref={\ref{thm:HPCausal:C2:2}${}^{\prime}$}]
\item
\label{thm:HPcausal:C2:2:minimising} Every inextendible causal geodesic in $M$ stops being maximising;
\end{enumerate}
One of our fundamental results is, therefore, Theorem~\ref{C11HPCausalitybit}, which states that, with minor modifications, Theorem~\ref{C2HPCausalityVersion}, with Condition~(\ref{thm:HPCausal:C2:2}) replaced with Condition~(\ref{thm:HPcausal:C2:2:minimising}) continues to hold if the metric $g$ is assumed to be $C^{1, 1}$. The web of causality results required in the proof of Theorem~\ref{C2HPCausalityVersion}, generalised to the $C^{1, 1}$-setting, is summarised in Appendix~\ref{app:C11causality}.

In the $C^{1, 1}$-case, however, the step from Theorem~\ref{C2HPCausalityVersion} to Theorem~\ref{classicalHP} is considerably more complicated. We show that appropriate versions of the curvature conditions~\eqref{smoothstrongenergy} and~\eqref{genericityOriginal} lead to causal geodesics becoming non-maximising between their endpoints. We prove this result by studying appropriate smooth approximations $g_{\eps}$ to the $C^{1, 1}$-metric $g$, where the $g_{\eps}$ satisfy appropriate weakened versions of~\eqref{smoothstrongenergy} and~\eqref{genericityOriginal}. By a refined analysis of the matrix Riccati equation along geodesics with respect to the
$g_{\eps}$-metrics, we are able to show that $g_{\eps}$-causal geodesics develop conjugate points,%
\footnote{Note that the metrics $g_{\eps}$ are smooth, so the classical notion of a conjugate point is well-defined.} %
and, hence, are non-maximising. From this, we argue that $g$-causal geodesics also become non-maximising. At this point, our main results, Theorem~\ref{HPinC1,1} and~Theorem~\ref{ArbitrCodiminC1,1} follow from Theorem~\ref{C11HPCausalitybit}.

The techniques that we develop in going from Theorem~\ref{C11HPCausalitybit} to Theorem~\ref{HPinC1,1} and~Theorem~\ref{ArbitrCodiminC1,1} are the main technical developments in this paper. In particular, the estimates that we develop in Sections~\ref{reg_sec} and~\ref{sec:conjugate_points} are new,%
\footnote{To the best of our knowledge.}
and may well be of independent interest.%
\footnote{In particular, these are~\emph{not\/} estimates that follow from the standard Rauch comparison theorem for Jacobi fields.}

\bigskip

We conclude this introduction by fixing our notation and conventions as well as
introducing an improved version of the smooth Hawking--Penrose theorem that we
will also deal with during this work.

All manifolds will be denoted by $M$ and assumed to be smooth, Hausdorff, second
countable, $n$-dimensional (with $n\geq 3$), and connected. On such $M$ we will
consider Lorentzian metrics $g$ of regularity of at least $C^{1,1}$ and
signature $(-,+\dots,+)$ with Levi-Civita connection $\nabla$ and with a time
orientation fixed by a continuous vector field. We say a curve $\gamma : I\to M$
from some interval $I\sse \R$ to $M$ is timelike (causal, null, future or past
directed) if it is locally Lipschitz and $\dot{\gamma}(t)$,
which exists almost everywhere by Rademacher's theorem, is timelike (causal,
null, future or past directed) almost everywhere. Following standard notation,
for $p,q\in M$ we write $p\ll q$ if there exists a future directed timelike
curve from $p$ to $q$ (and $p\leq q$ if there exists a future directed causal
curve from $p$ to $q$ or $p=q$) and set $I^+(A):=\{q\in M:\, p\ll q\
\mathrm{for\,some}\,p\in A\}$ and $J^+(A):=\{q\in M:\, p\leq q\
\mathrm{for\,some}\,p\in A\}$. We note that we require causal (timelike, \dots)
curves to be Lipschitz, whereas other standard sources use piecewise $C^1$
curves instead (see, e.g., \cite{HE}, \cite{ON83}). However, as was shown in
\cite[Thm.~7]{M}, \cite[Cor.~3.10]{KSSV}, this has no impact on the relations
$\ll$ and $\leq $ for $C^{1,1}$-metrics. We call a $C^{1,1}$-spacetime $(M,g)$
globally hyperbolic if it is causal (i.e., contains no closed causal curves) and
$J(p,q):=J^+(p)\cap J^-(q)$ is compact for all $p,q\in M$. We further define the
Riemann curvature tensor%
\footnote{Note that
we follow the convention of~\cite{HE} for the curvature tensor, which is the
opposite of that employed in~\cite{ON83,hawkingc11,penrosec11}.} %
by $R(X,Y)Z=[\nabla_X,\nabla_Y]Z-\nabla_{[X,Y]}Z$ and the Ricci tensor by
$\Ric(X,Y)=\sum_{i=1}^n\langle E_i,E_i\rangle\langle R(E_i,X)Y,E_i\rangle$,
which in case of $g$ being $C^{1,1}$ are $L^\infty_{\mbox{\scriptsize
loc}}$-tensor fields. Here and in the following $(E_i)_{i=1}^n$ will denote
(local) orthonormal frame fields and $(e_i)_{i=1}^n$ will denote orthonormal frames in individual
tangent spaces $T_pM$. Generally we will consider embedded submanifolds $S$ of
codimension $m$. We define the second fundamental form by
$\mathrm{II}(V, W) := \mbox{nor}(\nabla_V W)$ for all $V, W$ tangent to $S$ and the shape operator derived from a
normal unit field $\nu$ by $S_{\nu}(X)=\nabla_X\nu$. For any tangent vector $v \in T_p M$ we denote by $\gamma_v$ the geodesic with $\gamma_v(0) = p$, $\dot{\gamma}_v(0) = v$. Throughout, a codimension $2$ submanifold of $M$ will be referred to as a ``surface''.

Condition~\ref{1.1:4}(ii) of Theorem~\ref{classicalHP} has been generalized
in~\cite{GS} to
include trapped submanifolds of arbitrary co-dimension $m$ ($1< m< n$) by
adding an additional curvature assumption, which in the classical case
$m=2$ automatically follows from the energy condition.
For a precise formulation let $S$ be a (smooth) spacelike $(n-m)$-dimensional
submanifold and let $e_1(q),\dots,e_{n-m}(q)$ be an orthonormal basis for $T_q S$,
smoothly varying with $q$ in a neighbourhood (in $S$) of $p \in S$.
For a geodesic $\gamma$ starting at $p$ let $E_1,\dots,E_{n-m}$ denote the parallel translates of
$e_1(p),\dots,e_{n-m}(p)$ along $\gamma$. Let $H_S :=
\frac{1}{n-m}\sum_{i=1}^{n-m}\mathrm{II}(e_i,e_i)$ denote the mean curvature
vector field of $S$, and let $\mathbf{k}_S(v):=g(H,v)$ be the
convergence of $v\in TM|_S$. Now a closed spacelike submanifold
$S$ is called \emph{(future) trapped\/} if for any future-directed
null vector $\nu \in T
S^\perp$ the convergence $\mathbf{k}_S(\nu)$ is positive. This is equivalent to the mean curvature vector field $H_S$ being past pointing timelike on all of $S$.
With this definition one has the following extension of the classical Hawking--Penrose theorem
(\cite[Thm.\ 3]{GS}).

\begin{Theorem}\label{classicalArbitrCodim}
 A spacetime $(M,g)$ with $C^2$-metric satisfying conditions~\ref{1.1:1}--\ref{1.1:3} of Theorem~\ref{classicalHP} and
 \begin{enumerate}
  \item [(A.4)]
  \begin{enumerate}
  \item [(iv)] contains a spacelike (future) trapped submanifold
 $S$ of co-dimension $2 < m < n$ such that additionally
 \begin{equation}
  \sum_{i=1}^{n-m} \langle R(E_i,\dot{\gamma})\dot{\gamma},E_i \rangle
  \geq 0
 \end{equation}
 for any future directed null geodesic with $\dot{\gamma}(0)$ orthogonal
 to $S$,
 \end{enumerate}
 \end{enumerate}
cannot be causally geodesically complete.
\end{Theorem}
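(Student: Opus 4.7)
The plan is to derive Theorem~\ref{classicalArbitrCodim} from the causality version of the Hawking--Penrose theorem (Theorem~\ref{C2HPCausalityVersion}), in direct parallel to the classical derivation of Theorem~\ref{classicalHP}. I would argue by contradiction, assuming that $(M,g)$ is causally geodesically complete, and verify the three hypotheses (\ref{thm:HPCausal:C2:1})--(\ref{thm:HPCausal:C2:3}) of Theorem~\ref{C2HPCausalityVersion}. Condition~(\ref{thm:HPCausal:C2:1}) is given. Condition~(\ref{thm:HPCausal:C2:2}), the existence of conjugate points along every inextendible causal geodesic, is independent of the submanifold $S$: it follows from completeness together with the strong energy condition~\eqref{smoothstrongenergy} and the genericity condition~\eqref{genericityOriginal} via the classical matrix Riccati argument of Hawking and Penrose~\cite{HP}, with no changes stemming from the codimension of $S$.

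The only step requiring new input is the verification of~(\ref{thm:HPCausal:C2:3}), namely the compactness of $E^+(S)$. Following the standard strategy, I would first show that every future-directed null geodesic $\gamma$ emanating normally from $S$ develops a focal point to $S$ within a uniform affine parameter bound $\lambda_0 > 0$. Combined with geodesic completeness and the usual limit-curve argument, this places $E^+(S)$ inside the compact image, under the normal exponential map from $S$, of the set of pairs $(p,\nu)$ with $p \in S$, $\nu \in T_p^\perp S$ future null (suitably normalized via an auxiliary timelike frame), and affine parameter in $[0,\lambda_0]$, yielding compactness.

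The focal point is established as follows. Let $\gamma$ be a future null geodesic with $\gamma(0) = p \in S$ and $\dot\gamma(0) = \nu \in T_p^\perp S$, and let $E_1, \dots, E_{n-m}$ be the parallel transports along $\gamma$ of an orthonormal basis of $T_p S$. Consider the $(n-m)\times(n-m)$ $S$-Jacobi tensor $A(\lambda)$ whose columns are, in this frame, the Jacobi fields along $\gamma$ generated by variations of $\gamma$ through normal null geodesics with base point varying in $S$, with initial data $A(0) = \mathrm{Id}$ and $\mathrm{tr}\,A'(0) = -(n-m)\,\mathbf{k}_S(\nu)$. Setting $B := A' A^{-1}$ and $\theta := \mathrm{tr}\,B$, the matrix Riccati equation $B' + B^2 + R_\nu = 0$, with $(R_\nu)_{ij} = \langle R(E_i,\dot\gamma)\dot\gamma, E_j\rangle$, yields on tracing
\begin{equation*}
\theta'(\lambda) = -\mathrm{tr}(B^2) - \sum_{i=1}^{n-m}\langle R(E_i,\dot\gamma)\dot\gamma, E_i\rangle \leq -\frac{\theta(\lambda)^2}{n-m},
\end{equation*}
where the last step uses the Cauchy--Schwarz bound $\mathrm{tr}(B^2) \geq \theta^2/(n-m)$ and the non-negativity of the curvature sum supplied by hypothesis (A.4)(iv). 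Since $S$ is trapped, $\theta(0) = -(n-m)\,\mathbf{k}_S(\nu) < 0$, and compactness of $S$ gives a uniform bound $\theta(0) \leq -c < 0$ over all normalized future null $\nu \in T^\perp S$. Elementary Riccati blow-up then forces $\theta(\lambda) \to -\infty$, and hence a focal point of $\gamma$ relative to $S$, at some $\lambda \leq (n-m)/c =: \lambda_0$.

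The principal obstacle here is conceptual rather than computational: the $S$-Jacobi tensor is only $(n-m)$-dimensional, whereas the full null screen space of $\gamma$ is $(n-2)$-dimensional. Consequently, the partial trace of the tidal operator appearing in the Raychaudhuri equation for $\theta$ is \emph{not\/} in general the full null-null component of the Ricci tensor, and the strong energy condition by itself does not close the differential inequality. In the classical codimension-$2$ case one has $n-m = n-2$ and the partial trace coincides with $\Ric(\dot\gamma,\dot\gamma)$, so the extra hypothesis~(A.4)(iv) is vacuous. For $m > 2$, this additional non-negativity assumption is precisely what is needed to absorb the missing curvature terms and close the Riccati inequality. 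Once the focal point has been produced, the passage from focal point to non-maximality of $\gamma$ past that point, and thence to the compactness of $E^+(S)$, is routine.
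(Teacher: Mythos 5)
Your overall strategy---reduce to Theorem~\ref{C2HPCausalityVersion} by verifying conditions~(\ref{thm:HPCausal:C2:1})--(\ref{thm:HPCausal:C2:3}), with the only new work being the compactness of $E^+(S)$, which you obtain from a uniform focal-point bound along normal null geodesics---is exactly the one the paper (via~\cite{GS}) takes, and the compactness argument from the focal-point bound is fine. The gap is in how you produce the focal point.

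You define an $(n-m)\times(n-m)$ ``$S$-Jacobi tensor'' $A(\lambda)$ and assert that $B:=A'A^{-1}$ obeys $B'+B^2+R_\nu=0$ where $R_\nu$ is the $(n-m)\times(n-m)$ block $(\langle R(E_i,\dot\gamma)\dot\gamma,E_j\rangle)_{i,j\le n-m}$. This equation does not hold. The $S$-Jacobi fields satisfy the Jacobi equation with the \emph{full} tidal operator on the $(n-2)$-dimensional screen space $[\dot\gamma]^\perp$, and for $m>2$ the operator $R(\cdot,\dot\gamma)\dot\gamma$ does not preserve the $(n-m)$-dimensional parallel-transported subspace spanned by $E_1,\dots,E_{n-m}$: even with $J(0),\dot J(0)$ in that span, $\ddot J(0)=-R(J(0),\dot\gamma)\dot\gamma$ generically leaves it. Consequently there is no square $(n-m)\times(n-m)$ matrix $A$ satisfying $\ddot A+R_\nu A=0$; the honest $S$-Jacobi tensor is an $(n-2)\times(n-m)$ map with no inverse, and $B=A'A^{-1}$ is undefined. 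Your Raychaudhuri inequality for $\theta$ is therefore not a consequence of any valid matrix Riccati equation, and the blow-up argument does not go through. You correctly observe that the partial trace over the $E_i$ is not $\Ric(\dot\gamma,\dot\gamma)$ and that hypothesis~(A.4)(iv) controls exactly this partial trace, but you misdiagnose the obstruction as merely a missing curvature bound: the Riccati framework itself does not close on a proper screen subspace.

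This is precisely why Galloway--Senovilla (and the paper, in Lemma~\ref{GS_refine}, which is the $-\delta$-perturbed version needed for the $C^{1,1}$ case) replace the Riccati/Raychaudhuri machinery by the index form with explicit \emph{non-Jacobi} test fields $X_i:=(1-t/b)E_i$, $1\le i\le n-m$. One computes directly
\[
\sum_{i=1}^{n-m} I(X_i,X_i)=(n-m)\Big(\frac{1}{b}-\mathbf{k}_S(\nu)\Big)-\int_0^b\Big(1-\frac{t}{b}\Big)^2\sum_{i=1}^{n-m}\langle R(E_i,\dot\gamma)\dot\gamma,E_i\rangle\,dt,
\]
and the index only ever sees the partial curvature sum supplied by~(A.4)(iv), never the full tidal operator, so the fact that the $E_i$-span is not invariant under $R(\cdot,\dot\gamma)\dot\gamma$ is immaterial. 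For $b>1/\mathbf{k}_S(\nu)$ the right-hand side is negative, forcing a focal point of $S$ along $\gamma|_{[0,b]}$; compactness of $S$ then gives a uniform $b$, and the rest of your argument (and of the paper's Proposition~\ref{trappedsubmfistrapped}) yields compactness of $E^+(S)$. So your road map is right, but the key analytic step needs to be replaced by the index form computation.
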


\noindent{} In Section \ref{sec:proof}, this result will also be shown to hold in the $C^{1, 1}$-setting.

\medskip

This paper is organised in the following way. In
Section~\ref{sec:mainresult}, we first define the appropriate weak notions of
curvature conditions on Lorentzian metrics and convergence conditions on
$C^0$-submanifolds that are required for our study of metrics that are $C^{1,
1}$. We then state our main results, Theorems~\ref{HPinC1,1}
and~\ref{ArbitrCodiminC1,1}, which are the analogues of the Hawking--Penrose
Theorem~\ref{classicalHP} and its generalisation,
Theorem~\ref{classicalArbitrCodim}, to the $C^{1, 1}$-case. The remainder of the
paper is concerned with the proof of these results. In Section~\ref{reg_sec}, we
consider the regularisation of the $C^{1, 1}$-metric and, in particular, study
the effect of smoothing on the curvature and genericity condition. In
Section~\ref{sec:conjugate_points}, we develop estimates for matrix Riccati
equations that allow us to show that geodesics with respect to the smooth
approximating metrics must develop conjugate (or focal) points. As mentioned
previously, the estimates obtained in Sections~\ref{sec:conjugate_points} are,
perhaps, the main technical advance in this paper, and may be of independent
interest in their own right. The results of Section~\ref{sec:conjugate_points}
are used in Section~\ref{sec:max_geods} to yield Theorems~\ref{timelikenotmax}
and~\ref{3.4}, which show that, under our curvature and genericity assumptions,
causal geodesics will not remain maximising. In Section~\ref{sec:trapped}, we
show that if $S$ is a submanifold of $M$ satisfying any one of the conditions
(A.4) of Theorems~\ref{HPinC1,1} and~\ref{ArbitrCodiminC1,1}, then $E^+(S)$ is
compact, i.e., the submanifold is a trapped set. Finally, in Section~\ref{sec:proof},
we first show, using results summarised in Appendix~\ref{app:C11causality}, that
Theorem~\ref{C11HPCausalitybit}, the analogue of the ``causal'' version of the
Hawking--Penrose Theorem (Theorem~\ref{C2HPCausalityVersion}), holds in the
$C^{1, 1}$-setting. The results from Sections~\ref{reg_sec}--\ref{sec:trapped}
then quickly yield the main result Theorems~\ref{HPinC1,1}
and~\ref{ArbitrCodiminC1,1}, i.e. the ``analytical'' version of the
Hawking--Penrose theorem.

\section{The main result}\label{sec:mainresult}
The aim of this paper is to generalise Theorems~\ref{classicalHP} and~\ref{classicalArbitrCodim} to $C^{1, 1}$-metrics. Since not all of the conditions in these theorems are well-defined at this lower level of regularity, we begin by discussing the alternative formulations that we will use in the $C^{1,1}$- case.

By the \emph{strong energy condition\/} or \emph{causal convergence condition},
we shall mean that
\begin{align}\label{eq:sec}
\Ric(X,X)\geq 0 \quad\mbox{for all Lipschitz
	continuous causal local vector fields $X$.}
\end{align}
We will also speak of the timelike (or null) convergence
condition if \eqref{eq:sec} is only supposed to hold for all Lipschitz
continuous timelike (or null) local vector fields $X$.

\begin{remark}
\label{remark:null}
This condition is natural in the $C^{1, 1}$-context and has been successfully used in the proofs of other singularity theorems in this regularity (cf.~\cite[Rem.\ 1.2(i)]{hawkingc11} and~\cite[Rem.\ 1.2(i)]{penrosec11}).
Note that the Lipschitz condition is only relevant in the null case.
Contrary to the situation with a timelike vector, which can clearly be extended to a smooth timelike local vector field,
it is, in general, \emph{not\/} possible to extend a given null vector to a \emph{smooth\/} null local vector field.
Indeed, parallel transporting a given null vector at a given point along radial geodesics emanating from that point results in a null vector field that is only Lipschitz continuous.
It is possible that, with a $C^{1, 1}$-Lorentzian metric, one can extend a given null vector to a $C^{1, 1}$ null local vector field, and the condition for our results may be weakened to requiring~\eqref{eq:sec} to hold for all $C^{1, 1}$ causal local vector fields $X$. However, since we will explicitly use a null vector field obtained by parallel transport (and, hence, Lipschitz) in the proof of Lemma~\ref{lem:2.3}, we have not investigated this possibility.
For simplicity, we also refrain from refining condition~\eqref{eq:sec} to apply to local smooth
timelike and Lipschitz null vector fields, although this would be possible throughout.
\end{remark}

Looking at the classical proof of Theorem~\ref{classicalHP}, one finds that it is not the genericity condition itself that plays a role, but rather a derived condition on the tidal force operator
along causal geodesics $\gamma$. The required condition is that there exists $t_0$ such that the operator
\begin{equation}\label{TidalGenericitySmooth}
R \colon (\dot{\gamma}(t_0))^\perp \to (\dot{\gamma}(t_0))^\perp,\,\,\quad v \mapsto R(v,\dot{\gamma})\dot{\gamma}
\end{equation}
is not identically zero. (The fact that this condition follows from the genericity condition~\eqref{genericityOriginal} can be found in, e.g.,~\cite[Cor.~9.1.1]{Krie}.)
Thus, we will henceforth refer to~\eqref{TidalGenericitySmooth} as the genericity condition, 
which we now formulate for $C^{1, 1}$-metrics, and which reproduces~\eqref{TidalGenericitySmooth} in the smooth case, as we shall see below (Lemma \ref{lem:2.2}).

\begin{definition}\label{genericity_c11}
	Let $g \in C^{1,1}$ be a Lorentzian metric on $M$, and let $\gamma \colon I \to M$ be a
	causal geodesic for $g$. Then we say that the \emph{genericity condition\/}
	holds along $\gamma$ if there exists some $t_0 \in I$ and a neighbourhood $U$ of
	$\gamma(t_0)$, as well as continuous vector fields $X$ and $V$ on $U$
	such that $X(\gamma(t)) = \dot{\gamma}(t)$ and $V(\gamma(t)) \in (\dot{\gamma}(t))^\perp$
	for all $t \in I$ with $\gamma(t) \in U$, and there exists some $c>0$ such that
	\begin{equation} \label{posgenericity}
	\langle R(V,X)X,V\rangle > c
	\end{equation} 
	in $L^{\infty}(U)$. In this case, we say that the genericity condition is satisfied for $\gamma$ at $t_0\in I$.
\end{definition}

Regarding the \emph{initial conditions}~\ref{1.1:4}, we first remark that the
definition of an ``achronal set without edge'' and of a ``smooth (or at least
$C^2$-) future trapped submanifold'' for $C^{1, 1}$-metrics can be carried across
unchanged from the smooth case since the mean curvature is still Lip\-schitz
continuous. We will however wish to generalise the notion of a future trapped
submanifold slightly to allow us to use $C^0$-submanifolds. We say that a(n at
least $C^2$) submanifold $\widetilde{S}$
is a \emph{future support submanifold for a $C^0$-submanifold $S$ at $q\in S$\/} if
$\dim (\widetilde{S})=\dim S$, $q\in \widetilde{S}$, and $\widetilde{S}$ is locally to the future of $S$
near $q$, i.e. there exists a neighbourhood $U$ of $q$ in $M$ such that
$\widetilde{S}\cap U \subset J^+(S,U)$. Using such future support submanifolds we define \emph{past pointing timelike mean curvature at $q\in S$\/} by requiring the existence of a future support submanifold with past-pointing timelike mean curvature at $q$ (see, for instance,~\cite{AGH}).

\smallskip
This leads to the following definition of a future trapped submanifold of $M$
(which reduces to the usual one if $S$ is at least $C^2$).

\begin{definition}\label{def:trappedsubmf1}
	A closed ($C^0$-) submanifold $S$ of codimension $m$ ($1 \le m<n$) is called
	\emph{future trapped\/} if, for any $p \in S$, there exists a neighbourhood $U_p$ of $p$ such that $S \cap U_p$ is achronal in $U_p$ and $S$ has past-pointing timelike mean
	curvature at all of its points (in the sense of support submanifolds).
\end{definition}

Similarly, to replace the point condition~\ref{1.1:4}\ref{1.1:4c}
in~Theorem~\ref{classicalHP}, we define a \emph{(future) trapped point\/} as
follows:

\begin{definition}\label{def:trappedpt1}
	We say that a point $p$ is \emph{future trapped} if, for any future-pointing null vector $\nu \in T_pM$, there exists a $t$ such that there exists a spacelike $C^2$-surface $\widetilde{S}\subset J^+(p)$ with $\gamma_\nu(t)\in \widetilde{S}$ and $\mathbf{k}_{\widetilde{S}}(\dot{\gamma}_\nu(t))>0$.
\end{definition}

While it is perhaps not immediately obvious that this provides a good generalisation of the usual condition, one can show that for smooth metrics there is a very clear relationship between the expansion $\theta (t)$ along a geodesic $\gamma$ defined in terms of Jacobi tensor classes (cf.~Lemma~\ref{boxing}) and the shape operator $S_{\dot{\gamma}}(t)$ derived from $\dot{\gamma}$ for the submanifold $S_t:=\exp_p(t\,V)$, where $V$ is the set of all (properly normalised) null vectors contained in some neighbourhood of $\dot{\gamma}(0)$ (see section \ref{sec:trappedpt} for details). Our definition then provides a $C^{1,1}$-generalisation of the trace of such a shape operator becoming negative.

With these definitions we will prove the following generalisation of Theorem~\ref{classicalHP}:

\begin{Theorem}[Hawking--Penrose for $C^{1,1}$-metrics] \label{HPinC1,1}
	Let $(M, g)$ be a spacetime with a $C^{1,1}$-metric. If $M$
	\begin{enumerate}[label={(A.\arabic*)}, ref={(A.\arabic*)}]
		\item\label{HPinC1,1:1} is causal;
		\item\label{HPinC1,1:2} satisfies the strong energy condition~\eqref{eq:sec};
		\item\label{HPinC1,1:3} satisfies the genericity condition along any inextendible causal geodesic (Definition~\ref{genericity_c11});
		\item\label{HPinC1,1:4} contains at least one of the following
		\begin{enumerate}[ref={(A.4.\roman*)}]
			\item\label{HPinC1,1:4i} a compact achronal set without edge;
			\item\label{HPinC1,1:4ii} a closed future trapped ($C^0$-)surface (Definition~\ref{def:trappedsubmf1});
			\item\label{HPinC1,1:4iii} a future trapped point (Definition~\ref{def:trappedpt1}),
		\end{enumerate}
	\end{enumerate}
	then it cannot be causally geodesically complete.
\end{Theorem}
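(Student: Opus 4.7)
The plan is to reduce Theorem~\ref{HPinC1,1} to a $C^{1,1}$-causal version of the Hawking--Penrose theorem (the announced Theorem~\ref{C11HPCausalitybit}), in direct analogy with the smooth case. That causal version should require: (C.i) chronology, (C.ii$^\prime$) every inextendible causal geodesic stops maximising, and (C.iii) the existence of an achronal set $S$ with $E^+(S)$ or $E^-(S)$ compact. The task then splits into two independent implications: \textbf{(a)} that~\ref{HPinC1,1:2} and~\ref{HPinC1,1:3} together force every inextendible causal geodesic to cease to maximise, and \textbf{(b)} that any of the alternatives in~\ref{HPinC1,1:4} yields an achronal set whose future horismos is compact. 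Causality of $M$ implies chronology, so~\ref{HPinC1,1:1} directly supplies (C.i), and Theorem~\ref{C11HPCausalitybit} then produces an incomplete causal geodesic.

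For step~\textbf{(a)}, I would fix an inextendible causal geodesic $\gamma$ and smooth $g$ by a family $g_\eps$ adapted to the causal structure (slightly widening the light cones, so that $g$-causal vectors remain $g_\eps$-causal). One first shows that the strong energy condition~\eqref{eq:sec} passes, up to an error vanishing with $\eps$, to a bound $\Ric_{g_\eps}(X,X)\geq -\delta(\eps)$ along $g_\eps$-geodesics close to $\gamma$, and that the $L^\infty$-genericity~\eqref{posgenericity} transfers to a uniform positive lower bound for $\langle R_{g_\eps}(V,X)X,V\rangle$ on a neighbourhood of $\gamma(t_0)$. Along the smooth $g_\eps$-geodesic $\gamma_\eps$ approximating $\gamma$, the shape operator of a perpendicular geodesic congruence satisfies a matrix Riccati equation; combining the convergence condition (to prevent the trace from becoming too positive) with the non-degenerate tidal term provided by the genericity condition yields a quantitative, $\eps$-independent bound on the parameter time at which the Riccati solution blows up, hence at which $\gamma_\eps$ acquires a conjugate (or focal) point. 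Since conjugate points make $\gamma_\eps$ non-maximising, passing $\eps\to 0$ and using Lipschitz dependence of geodesics on initial data/metric yields that $\gamma$ itself is not maximising on a correspondingly large parameter interval, giving (C.ii$^\prime$).

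For step~\textbf{(b)}, I would treat the three sub-cases separately. In case~\ref{HPinC1,1:4i} a compact achronal set without edge $S$ gives $E^+(S)=S$, which is compact, and this classical argument survives untouched. In case~\ref{HPinC1,1:4ii}, at each $p\in S$ one picks a $C^2$ future support submanifold $\widetilde S$ with past-pointing timelike mean curvature and runs the same regularisation/Riccati scheme of step~\textbf{(a)}, but now for the \emph{null} congruence emanating orthogonally from $\widetilde S$: strong energy plus an initially negative expansion (coming from trapping) forces a focal point on every null normal geodesic within a uniform parameter bound, and hence all such geodesics fail to remain in $E^+(S)$ past a uniform time; combined with achronality in a neighbourhood of each $p$, this confines $E^+(S)$ to a compact set. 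Case~\ref{HPinC1,1:4iii} is handled analogously by applying the same focal-point machinery to the family of spacelike $C^2$-surfaces $\widetilde S\subset J^+(p)$ produced by the trapped-point hypothesis (interpreting $\mathbf{k}_{\widetilde S}>0$ as the replacement for negativity of the null expansion from $p$).

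The main obstacle is step~\textbf{(a)}. In the $C^{1,1}$ setting $R$ is only defined almost everywhere, so neither a Jacobi field along $\gamma$ nor a Raychaudhuri-type pointwise inequality is available along a single geodesic, and one cannot invoke Rauch comparison directly. Everything must be done on the regularised side, and the delicate point is to obtain quantitative bounds on the Riccati blow-up time for $g_\eps$ that are \emph{uniform in $\eps$} and that depend only on the $L^\infty$-data provided by~\eqref{eq:sec} and~\eqref{posgenericity}, while simultaneously controlling how the approximate non-maximising property descends through the limit $\eps\to 0$ despite the fact that $g$ and $g_\eps$ have genuinely different geodesics, cut loci, and horismos structures. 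This is precisely the content the authors flag for Sections~\ref{reg_sec} and~\ref{sec:conjugate_points}, and I would expect it to absorb the bulk of the work.
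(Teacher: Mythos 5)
Your overall decomposition matches the paper exactly: reduce to a ``causal'' $C^{1,1}$-Hawking--Penrose theorem, show the analytic hypotheses force causal geodesics to stop maximising via regularisation and a matrix Riccati comparison, and show the initial conditions force $E^+(S)$ compact. Steps (a) and (b) correspond to the paper's Sections~\ref{reg_sec}--\ref{sec:max_geods} and Section~\ref{sec:trapped}, and the final reduction is Section~\ref{sec:proof}. You have also correctly identified the key technical burden.

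There is, however, one genuine gap in the way you state the intermediate causal theorem. You formulate condition (C.ii$'$) as ``every inextendible causal geodesic stops maximising'', uniformly over timelike and null, but that statement cannot be established by the regularisation argument in the generality you need. The difficulty is precisely in the limit $\eps \to 0$ that you flag: to produce $g_\eps$-maximisers approximating $\gamma$ and to control convergence of time-separation functions one needs global hyperbolicity, which you do not have a priori on all of $M$. The paper's version of the causal theorem (Theorem~\ref{C11HPCausalitybit}) therefore \emph{splits} the condition: it only demands that timelike geodesics \emph{inside some open globally hyperbolic subset} stop maximising, while null geodesics must stop maximising everywhere. This weaker form is exactly what the causal argument needs (the relevant timelike geodesic ends up living in the globally hyperbolic set $D(E^-(F))^\circ$), and is exactly what the regularisation can deliver (Theorem~\ref{timelikenotmax} assumes global hyperbolicity; Theorem~\ref{3.4} handles null geodesics by exploiting the boundary structure of $\partial J^+(p)$ and limit curves instead). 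Relatedly, your remark that ``causality of $M$ implies chronology'' underplays where causality is really used: it is needed (not merely as a source of chronology) in the null case, because the limit-curve construction in Theorem~\ref{3.4} breaks down for closed null geodesics. Finally, for step~(b) you should note that strong causality is required by Propositions~\ref{trappedsubmfistrapped} and~\ref{trappedptistrapped}; this is supplied by Lemma~\ref{lem: 19 conj. pts imply strong causality} once chronology and the null non-maximising condition are in hand, so it is not an obstacle, but it must be derived rather than assumed.
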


Note that the $C^{1, 1}$-version requires that $(M, g)$ be \emph{causal\/} rather than \emph{chronological\/} since, contrary to the smooth case, the other conditions that we impose do not exclude the existence of closed null curves. The problem will be evident in the proof of Theorem~\ref{3.4}, where we will use approximations to show that no inextendible null geodesic can be globally maximising, and our argument breaks down for closed null curves.

\smallskip
Finally, we will also prove a $C^{1,1}$-generalization of Theorem~\ref{classicalArbitrCodim}.

\begin{Theorem}\label{ArbitrCodiminC1,1}
	Let $(M, g)$ be a spacetime with a $C^{1, 1}$-metric that satisfies conditions~\ref{HPinC1,1:1} to~\ref{HPinC1,1:3} of Theorem~\ref{HPinC1,1} and
 \begin{enumerate}[label={(A.\arabic*)}, ref={(A.\arabic*)}]
  \setcounter{enumi}{3}
   \item
   \begin{enumerate}[ref={(A.4.\roman*)}]
   \setcounter{enumii}{3}
   \item\label{HPinC1,1:4iv}
	contains a (future) trapped $C^0$-submanifold (Definition~\ref{def:trappedsubmf1}) of co-dimension $2 < m < n$  such that the support submanifolds $\tilde S$
additionally satisfy the following:
For any future directed null geodesic $\gamma$ starting orthogonally to $\widetilde{S}$ there exist $b>\frac{1}{\mathbf{k}_{\widetilde{S}}(\dot{\gamma}(0))}$,  a neighbourhood $U$ of $\gamma|_{\left[0,b\right]} $, and continuous
	extensions $\bar{E}
	_1,\dots \bar{E}_{n-m}$ and $\bar{N}$ of $E_1,\dots E_{n-m}$ (for $\widetilde{S}$) and $N:=\dot{\gamma}$,
	respectively, to $U$ such that
	\begin{equation}
	\sum_{i=1}^{n-m} \langle R(\bar{E}_i,\bar{N})\bar{N},\bar{E}_i \rangle \geq
	0\qquad \mathrm{a.e.\;on\;}U .
	\end{equation}
	\end{enumerate}
	\end{enumerate}
	Then $M$ contains an incomplete causal geodesic.
\end{Theorem}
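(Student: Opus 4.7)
The plan is to deduce Theorem~\ref{ArbitrCodiminC1,1} from the causal version Theorem~\ref{C11HPCausalitybit}, exactly as Theorem~\ref{HPinC1,1} is obtained at the end of Section~\ref{sec:proof}. One must verify: (a) causality of $(M,g)$, (b) that every inextendible causal geodesic stops being maximising, and (c) the existence of an achronal set $S$ with $E^+(S)$ compact.

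Items (a) and (b) are independent of the specific form of the initial condition~\ref{HPinC1,1:4}. Causality is exactly~\ref{HPinC1,1:1}, while (b) is provided by Theorems~\ref{timelikenotmax} and~\ref{3.4}, whose proofs rely only on the strong energy condition~\eqref{eq:sec} and the genericity condition of Definition~\ref{genericity_c11}, via the regularisation of Section~\ref{reg_sec} and the Riccati estimates of Section~\ref{sec:conjugate_points}. So for these two items nothing new beyond what is already done for Theorem~\ref{HPinC1,1} is needed.

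Item (c) is the new content relative to Theorem~\ref{HPinC1,1}. Take $S$ to be the trapped $C^0$-submanifold provided by~\ref{HPinC1,1:4iv}; by Definition~\ref{def:trappedsubmf1} one may, after shrinking to achronal neighbourhoods, treat $S$ as achronal. In parallel with the codimension-$2$ case of Section~\ref{sec:trapped}, compactness of $E^+(S)$ reduces to showing that every future-directed null geodesic $\gamma$ starting orthogonally to a $C^2$ future-support submanifold $\widetilde S$ of $S$ ceases to be $S$-maximising by the parameter value $b>1/\conv_{\widetilde S}(\dot\gamma(0))$ supplied by~\ref{HPinC1,1:4iv}. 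I would handle this by the regularisation procedure that pervades the paper: approximate $g$ by smooth $g_\eps$, approximate $\widetilde S$ by $g_\eps$-spacelike support submanifolds $\widetilde S_\eps$ preserving $\conv_{\widetilde S_\eps}(\dot\gamma_\eps(0))>1/b$, and lift the continuous extensions $\bar E_i$, $\bar N$ to $g_\eps$-compatible frames $\bar E_i^\eps$, $\bar N^\eps$ on a common neighbourhood $U$ of $\gamma$. One then applies the smooth Galloway--Senovilla focal-point argument: the trace of the shape operator of the $g_\eps$-null congruence issuing from $\widetilde S_\eps$ satisfies a scalar Riccati inequality whose curvature input is $\sum_{i=1}^{n-m}\langle R^{g_\eps}(\bar E_i^\eps,\bar N^\eps)\bar N^\eps,\bar E_i^\eps\rangle$. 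The a.e.\ non-negativity of this sum provided by~\ref{HPinC1,1:4iv} transfers to the smoothings with an error vanishing as $\eps\to 0$, so that a focal point of $\gamma_\eps$ appears in $[0,b]$ for all small $\eps$. A limiting argument of the type developed in Section~\ref{sec:conjugate_points} then transfers the loss of maximality back to $\gamma$.

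The main obstacle is precisely the coordinated smoothing in the preceding paragraph. For item~(b) one at least starts from a pointwise strict curvature inequality; here~\ref{HPinC1,1:4iv} is only an a.e.\ inequality of a trace involving continuous (not smooth) extensions on the neighbourhood $U$, and one must arrange that this inequality survives passage through the convolution smoothing of $g$ and the simultaneous smoothing of $\widetilde S$ and of its parallel frame fields, with uniform-in-$\eps$ quantitative control strong enough to keep $\conv_{\widetilde S_\eps}(\dot\gamma_\eps(0))>1/b$ throughout. This is the codimension-$m$ analogue of the null case behind Theorem~\ref{3.4} and of the codimension-$2$ argument of Section~\ref{sec:trapped}, but neither of those can simply be quoted, because the average now runs over the $(n-m)$-dimensional tangent frame of $\widetilde S$ rather than over a full null screen, so the required tidal estimates have to be reassembled in that form from the techniques of Sections~\ref{reg_sec} and~\ref{sec:conjugate_points}.
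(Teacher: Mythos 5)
Your high-level strategy---reduce to Theorem~\ref{C11HPCausalitybit} by verifying (\ref{thm:33:1})--(\ref{thm:33:4}), with (\ref{thm:33:1})--(\ref{thm:33:3}) inherited unchanged from the proof of Theorem~\ref{HPinC1,1}---matches the paper exactly. Items (a) and (b) are fine. The issues are concentrated in your treatment of item (c), where the sketch is in the right spirit but misidentifies or omits the key tools.

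First, achronality: you cannot ``after shrinking to achronal neighbourhoods, treat $S$ as achronal''---a $C^0$-trapped submanifold in the sense of Definition~\ref{def:trappedsubmf1} is only \emph{locally} achronal, and there is no way to shrink a closed submanifold into a globally achronal set. The paper's fix is Corollary~\ref{achronalnotnecessary}: one passes to $E^+(S)\cap S$, which \emph{is} achronal, and shows $E^+(E^+(S)\cap S)$ is compact. You would also need strong causality to apply Proposition~\ref{trappedsubmfistrapped}; the paper obtains it from conditions~(\ref{thm:33:1}) and~(\ref{thm:33:3}) via Lemma~\ref{lem: 19 conj. pts imply strong causality}, a step your outline skips.

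Second, the focal-point mechanism: you describe ``the trace of the shape operator of the $g_\eps$-null congruence ... satisfies a scalar Riccati inequality'' and suggest the techniques of Section~\ref{sec:conjugate_points} are reassembled. In fact the paper does not use a Riccati/Raychaudhuri argument for the codimension-$m$ case at all---precisely because, as you note, the trace runs only over the $(n-m)$-dimensional tangent frame of $\widetilde S$ rather than a full null screen, so the usual Raychaudhuri scalar cannot be formed. The tool the paper actually employs is Lemma~\ref{GS_refine}, a quantitative version of the \emph{energy index form} argument of Galloway--Senovilla \cite[Prop.~1]{GS}: plugging the test fields $X_i=(1-t/b)E_i$ into the index form and summing over $i$ shows the index form fails to be positive semi-definite under the weakened curvature bound $\sum_i\langle R(E_i,\dot\gamma)\dot\gamma,E_i\rangle\ge -\delta$, yielding a focal point by $b$. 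This is then fed into Proposition~\ref{notmaxtosupportsubmf}, whose transfer-to-$C^{1,1}$ part is patterned on the limit-curve argument of Theorem~\ref{3.4} (in $\partial J^+$), not on the Riccati comparison of Section~\ref{sec:conjugate_points}.

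Third, a smaller point: there is no need to construct $g_\eps$-adapted approximations $\widetilde S_\eps$ of $\widetilde S$. Since $\widetilde S$ is already a $C^2$-submanifold, the paper keeps it fixed; one only checks that $U\cap\widetilde S$ is $g_\eps$-spacelike for $\eps$ small and that $\mathbf{k}_{\widetilde S,\eps}\to\mathbf{k}_{\widetilde S}$ locally uniformly, so the convergence bound $\mathbf{k}_{\widetilde S,\eps}(v)>c$ holds on a neighbourhood of $\nu$ for small $\eps$. Introducing a moving family $\widetilde S_\eps$ would add complications the argument does not require. In short: the skeleton of the proof is correct, but the core of item~(c) is Lemma~\ref{GS_refine} plus Proposition~\ref{notmaxtosupportsubmf} plus Corollary~\ref{achronalnotnecessary}, none of which your sketch actually produces.
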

\section{Regularisation results}
\label{reg_sec}
In this section we establish a number of auxiliary results pertaining to regularisations of
$C^{1,1}$-metrics, as well as the corresponding curvature quantities and geodesics. Our approach
rests on the causality-respecting regularisation procedure introduced by Chru\'sciel and Grant
in~\cite{CG}. In its formulation, we shall employ the
following notation (cf.~\cite[Sec.\ 3.8.2]{Minguzzi08thecausal},~\cite[Sec.\ 1.2]{CG}):
Given Lorentzian metrics $g_1$,
$g_2$, we say that $g_2$ has \emph{strictly wider light cones} than $g_1$, denoted by
$g_1\prec g_2$, if for any tangent vector $X\not=0,\ g_1(X,X)\le 0$ implies that $g_2(X,X)<0$.
Thus any $g_1$-causal vector is timelike for $g_2$.
Then~\cite[Prop.\ 1.2]{CG} (cf.\ also~\cite[Prop.\ 2.5]{KSSV}) gives:

\begin{Proposition}\label{CGapprox} Let $(M,g)$ be a $C^0$-spacetime
and let $h$ be some smooth
background Riemannian metric on $M$. Then for any $\eps>0$, there exist smooth
Lorentzian metrics $\check g_\eps$ and $\hat g_\eps$ on $M$ such that
for all $0<\eps<\eps'$, $\check g_{\eps'} \prec \check g_\eps
\prec g \prec \hat g_\eps \prec \hat g_{\eps'}$, and $d_h(\check g_\eps,g) + d_h(\hat g_\eps,g)<\eps$,
where
\begin{equation}\label{CGdh}
d_h(g_1,g_2) := \sup_{p\in M,0\not=X,Y\in T_pM} \frac{|g_1(X,Y)-g_2(X,Y)|}{\|X\|_h
\|Y\|_h}.
\end{equation}
Moreover, $\hat g_\eps(p)$ and $\check g_\eps(p)$ depend smoothly on $(\eps,p)\in \R^+\times M$, and if
$g\in C^{1,1}$ then, letting $g_\eps$ be either $\check g_\eps$ or $\hat g_\eps$,
we additionally have
\begin{itemize}
 \item[(i)] $g_\eps$ converges to $g$ in the $C^1$-topology as $\eps\to 0$, and
 \item[(ii)] the second derivatives of $g_\eps$ are bounded, uniformly in $\eps$, on compact sets.
 \end{itemize}
\end{Proposition}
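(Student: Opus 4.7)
The plan is to combine a cone-altering pointwise perturbation by $\pm\delta h$ with a convolution smoothing at scale $\mu$, coupling the two parameters so that the strict chain of cone inclusions survives the smoothing. The underlying algebraic fact is that for every Lorentzian $g$ and Riemannian $h$ one has $g - \delta h \prec g \prec g + \delta h$ for any $\delta > 0$: indeed, $g(X,X) \le 0$ forces $(g-\delta h)(X,X) \le -\delta h(X,X) < 0$, and symmetrically for the upper side. Since $g$ is merely continuous globally, smoothing must be carried out locally; the widening margin $\delta h$ is present precisely to absorb the pointwise error introduced by convolution.

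Concretely I would fix a locally finite cover $\{U_\alpha\}$ of $M$ by relatively compact chart domains with subordinate partition of unity $\{\chi_\alpha\}$, together with a standard symmetric mollifier $\rho$. In each chart define $g^\alpha_{\delta,\mu} := (g - \delta h) \ast \rho_\mu$ componentwise, and set
\[
\hat g_\eps \;:=\; \sum_\alpha \chi_\alpha \, g^\alpha_{\delta(\eps),\,\mu_\alpha(\eps)}.
\]
By uniform continuity of $g$ on compacta, $g^\alpha_{\delta,\mu} \to g - \delta h$ uniformly on $\overline{\mathrm{supp}\,\chi_\alpha}$ as $\mu \to 0$, with rate controlled by the modulus of continuity $\omega^\alpha_g$ of $g$ on a fixed slightly larger compact. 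I would choose $\delta(\eps) \searrow 0$ smoothly and monotonically in $\eps$, and on each chart pick $\mu_\alpha(\eps)$ smooth and monotone in $\eps$ but small enough that $\omega^\alpha_g(\mu_\alpha(\eps))$ lies well below $\delta(\eps)-\delta(\eps')$ for any pair $\eps<\eps'$ drawn from a dense grid. Then each summand $g^\alpha_{\delta(\eps),\mu_\alpha(\eps)}$ has strictly wider cones than $g$ on $\mathrm{supp}\,\chi_\alpha$, and the convex combination $\hat g_\eps$ inherits this (each $g$-causal $X$ makes every summand pointwise strictly negative), giving $g \prec \hat g_\eps$; the residual widening gap likewise enforces $\hat g_\eps \prec \hat g_{\eps'}$. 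The bound $d_h(\hat g_\eps, g) < \eps$ follows by choosing $\delta(\eps)$ and the chart-wise convolution errors each below $\eps/2$ uniformly. A symmetric construction with $+\delta h$ in place of $-\delta h$ yields $\check g_\eps$ with the narrower-cone relations.

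For the $C^{1,1}$-refinement, since $g \in C^{1,1}$ the chart components satisfy $\partial_i g$ locally Lipschitz and $\partial_i\partial_j g \in L^\infty_{\mathrm{loc}}$. Then $\partial_i g^\alpha_{\delta,\mu} = (\partial_i g) \ast \rho_\mu \to \partial_i g$ uniformly on compacta as $\mu\to 0$, so $\hat g_\eps \to g$ in $C^1$ on compacta once $\delta(\eps) \to 0$; and $\partial_i\partial_j g^\alpha_{\delta,\mu} = (\partial_i\partial_j g) \ast \rho_\mu$ is bounded on compacta by $\|\partial^2 g\|_{L^\infty(K')}$ independently of $\mu$, so the second derivatives of $\hat g_\eps$ (the partition-of-unity cross-terms only involve $g^\alpha$ and $\partial g^\alpha$, which stay bounded by the $C^1$-convergence) are bounded uniformly in $\eps$ on compacta. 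Smooth joint dependence $(\eps,p)\mapsto \hat g_\eps(p)$ is transparent from the smoothness of $\delta(\eps)$, $\mu_\alpha(\eps)$, and $\chi_\alpha$.

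The main obstacle is the simultaneous global preservation of the strict monotone chain $\check g_{\eps'} \prec \check g_\eps \prec g \prec \hat g_\eps \prec \hat g_{\eps'}$ for all $0 < \eps < \eps'$: pointwise the widening by $\delta h$ leaves room, but convolution locally eats into this margin by an amount governed by $\omega_g$, which can vary wildly across the non-compact manifold, and the partition-of-unity glue can a priori exhaust the remaining slack where charts overlap. The remedy is the chart-by-chart tailoring of $\mu_\alpha(\eps)$ to the modulus of continuity of $g$ on $\mathrm{supp}\,\chi_\alpha$, coupled with a universally enforced gap in $\delta(\eps)$ that dominates these chart-wise convolution errors on a dense grid of parameters and extends by monotonicity to all $\eps$. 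This careful bookkeeping is the technical heart of the construction of~\cite{CG}.
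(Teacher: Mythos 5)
The paper does not prove this proposition at all: it is invoked as \cite[Prop.~1.2]{CG} (with the $C^{1,1}$ refinements (i)--(ii) coming from~\cite[Prop.~2.5]{KSSV} and~\cite{hawkingc11}), so there is no ``paper's own proof'' to compare against beyond the citation. Your sketch does reconstruct the approach of the cited reference faithfully: cone widening/narrowing by $\mp\delta h$, chart-wise mollification glued by a partition of unity, and chart-wise tailoring of the mollification scale to the local modulus of continuity so that the convolution error is absorbed by the $\delta h$ margin, with the convexity argument ensuring that partition-of-unity mixing preserves the cone relations.

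One point you should tighten: as written, the requirement that $\omega^\alpha_g(\mu_\alpha(\eps))$ lie below $\delta(\eps')-\delta(\eps)$ ``for any pair $\eps<\eps'$ drawn from a dense grid'' cannot be met by any smooth positive $\mu_\alpha(\cdot)$, since $\delta(\eps')-\delta(\eps)\to 0$ as $\eps'\to\eps$ while $\omega^\alpha_g(\mu_\alpha(\eps))$ stays positive. The correct device is to enforce the gap only on a \emph{discrete} sequence $\eps_n\searrow 0$, producing $\hat g_{\eps_n}$ with $\hat g_{\eps_{n+1}}\prec\hat g_{\eps_n}$ (and the analogous chain for $\check g$), and then to fill in $\eps\in(\eps_{n+1},\eps_n)$ by a smooth monotone convex interpolation $\hat g_\eps=(1-\lambda_n(\eps))\hat g_{\eps_{n+1}}+\lambda_n(\eps)\hat g_{\eps_n}$ with $\lambda_n$ smooth, flat at the endpoints. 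Convex combinations of Lorentzian metrics with nested cones preserve $\prec$ and remain Lorentzian in this regime, so the full chain holds for \emph{all} $0<\eps<\eps'$, and the joint smoothness in $(\eps,p)$ follows because $\lambda_n$ depends smoothly on $\eps$. Also note the sign in your gap condition should read $\delta(\eps')-\delta(\eps)$ (positive for $\eps<\eps'$ with $\delta$ increasing). These are corrections of detail rather than of strategy; the overall plan is the one underlying~\cite{CG}.
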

\noindent Curvature quantities for $g_\eps$-metrics will be denoted by a subscript, as in $R_\eps$ or
$\Ric_\eps$.

Next we recall the consequences of the strong energy condition~\eqref{eq:sec}
provided by~\cite[Lemma 3.2]{hawkingc11} and~\cite[Lemma 2.4]{penrosec11} for
nets $(g_\eps)_{\eps>0}$ (with $g_\eps=\check{g}_\eps$ or $g_\eps = \hat g_\eps$)
of approximating smooth metrics.

\begin{Lemma}\label{approxlemma} Let $M$ be a smooth manifold with a
$C^{1,1}$-Lorentzian metric $g$ and smooth Riemannian background metrics
$h$, $\tilde h$ on $M$ and $TM$, respectively. Let $K\comp M$ and let $C$,
$\delta > 0$. Then we have:
\begin{enumerate}
 \item[(i)]
  If $\Ric(Y,Y)\ge 0$ for every $g$-timelike smooth local vector field $Y$, then
\begin{equation}\label{suffest}
\begin{split}
  &\forall \kappa<0\ \exists \eps_0>0\ \forall \eps<\eps_0\
  \forall X\in TM|_K \text{ with }\ g(X,X)\le \kappa \\
  & \text{ and } \|X\|_h \leq C:
  \ \Ric_\eps(X,X) > -\delta.
  \end{split}
\end{equation}
 \item[(ii)] If $\Ric(Y,Y)\ge 0$ for every Lipschitz-continuous $g$-null local
  vector field $Y$, then
  \begin{align}\nonumber
  &\exists \eta>0\ \exists \eps_0>0 \ \forall \eps<\eps_0:\
   \mbox{if $p\in K$, $X\in T_pM$ with $\|X\|_h \le C$}\\
  & \mbox{and $\exists Y_0\in TM|_K$, $g$-null with $d_{\tilde
    h}(X,Y_0) \le \eta$ and $\|Y_0\|_h\le C$}:\\
  &\Ric_\eps(X,X) > -\delta.\nonumber
\end{align}
\end{enumerate}
\end{Lemma}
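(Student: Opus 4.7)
My plan is to reduce both assertions to a local computation at a base point $p \in K$ combined with a Friedrichs--style commutator estimate relating $\Ric_\eps$ to mollifications of $\Ric$. For part (i), given $X \in T_pM$ with $g(X,X) \leq \kappa < 0$ and $\|X\|_h \leq C$, I fix a chart around $p$ and take $\tilde X$ to be the smooth local vector field whose coordinate components are constantly those of $X$. Then $\tilde X(p) = X$ with $g(\tilde X,\tilde X)(p) \leq \kappa$, so by continuity of $g$ the field $\tilde X$ is $g$-timelike on some neighbourhood $U_p$; the hypothesis then yields $\Ric(\tilde X,\tilde X) \geq 0$ almost everywhere on $U_p$.

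The key step is to show that, for $p$ in a compact coordinate neighbourhood,
\begin{equation*}
\Ric_\eps(\tilde X,\tilde X)(p) = \bigl(\Ric(\tilde X,\tilde X)\bigr)_\eps(p) + o(1)\qquad (\eps \to 0),
\end{equation*}
uniformly in $p$ and in the admissible vectors $X$, where the right-hand mollification is taken in the chart. In coordinates $\Ric_\eps$ is a rational expression in $g_\eps$, $\partial g_\eps$ and $\partial^2 g_\eps$; the first two converge uniformly on compacta by Proposition~\ref{CGapprox}(i), while the second derivatives, though not pointwise convergent, are uniformly bounded on compact sets by Proposition~\ref{CGapprox}(ii). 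A standard Friedrichs commutator argument then permits exchanging the mollification and the nonlinear differential operator modulo a uniformly vanishing error. Since for $\eps$ small the support of the mollifier at $p$ lies inside $U_p$, where $\Ric(\tilde X,\tilde X) \geq 0$ a.e., the right-hand side is $\geq -\delta/2$, which gives $\Ric_\eps(X,X) > -\delta$. Uniformity in $X$ is obtained by covering the compact subset $\{X \in TM|_K : g(X,X) \leq \kappa,\ \|X\|_h \leq C\}$ of $TM$ by finitely many chart-and-extension pairs and taking the smallest corresponding $\eps_0$.

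For part (ii) the same Friedrichs comparison is the workhorse, but the extension of $X$ must be adapted to exploit the null energy condition, which only sees Lipschitz $g$-null fields. I would first extend $Y_0$ by parallel transport along radial $g$-geodesics emanating from its base point: since $g \in C^{1,1}$ the geodesic flow is Lipschitz and parallel transport preserves $g$, so this yields a Lipschitz $g$-null local vector field $\tilde Y_0$ with $\Ric(\tilde Y_0,\tilde Y_0) \geq 0$ a.e.\ by hypothesis. The resulting field is only Lipschitz, but this regularity suffices for the Friedrichs step because the commutator estimate needs only $C^1$-uniform convergence of $g_\eps$ together with the uniform bound on $\partial^2 g_\eps$. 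For $X$ with $d_{\tilde h}(X,Y_0) \leq \eta$ I extend $X$ analogously and use bilinearity,
\begin{equation*}
\Ric_\eps(\tilde X,\tilde X) - \Ric_\eps(\tilde Y_0,\tilde Y_0)
= \Ric_\eps\bigl(\tilde X - \tilde Y_0,\tilde X + \tilde Y_0\bigr),
\end{equation*}
which, together with the uniform $L^\infty_{\mathrm{loc}}$ bound on $\Ric_\eps$ coming from Proposition~\ref{CGapprox}(ii), is controlled by $\eta$; choosing $\eta$ sufficiently small at the end absorbs this discrepancy into $\delta$.

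The technical heart, and the main obstacle, is the Friedrichs commutator step. One must pass from an almost-everywhere inequality for a curvature quantity of $g$ applied to a fixed extension, to a pointwise inequality for the curvature of the \emph{different} metric $g_\eps$ applied to the same extension, while ensuring that the error vanishes uniformly in both the base point and in the admissible tangent vector. The uniform second-derivative control provided by Proposition~\ref{CGapprox}(ii) is precisely what makes this commutator step succeed, and is the feature that distinguishes the $C^{1,1}$ setting from genuinely lower regularities where such a statement would be expected to fail.
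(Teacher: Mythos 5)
Your reconstruction matches the strategy behind the cited proofs in [hawkingc11, Lemma 3.2] and [penrosec11, Lemma 2.4], which the paper invokes for this result: a local extension of the tangent vector (smooth/constant-coefficient in the timelike case, Lipschitz via radial parallel transport in the null case), a Friedrichs-type commutator estimate of exactly the kind codified in Lemma~\ref{reg1} to pass from $\Ric$ to $\Ric_\eps$ using the uniform second-derivative bounds of Proposition~\ref{CGapprox}(ii), and a compactness/bilinearity argument to secure uniformity in the base point and the admissible tangent vector. No gaps in the essential ideas; this is the same approach.
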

For later use, we also record the following result, cf.\ e.g.\ the proof
of~\cite[Prop.\ 4.3]{hawkingc11}:
\begin{Lemma}\label{d_conv} Let $(M,g)$ be a globally hyperbolic $C^{1,1}$-spacetime
and let $p$, $q\in M$. Denote by $d$ and $d_{\gec}$ the
time-separation functions with respect to $g$ and $\gec$, respectively.
Then, we have
\[
 d_{\gec}(p,q) \to d(p,q) \qquad (\eps\to 0).
\]
\end{Lemma}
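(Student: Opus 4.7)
The plan is to establish $d_{\gec}(p,q) \to d(p,q)$ by controlling $\liminf$ and $\limsup$ separately. The argument rests on two consequences of Proposition~\ref{CGapprox}: cone monotonicity $\gec \prec g$ (every $\gec$-causal vector is $g$-timelike), and the pointwise bound $|\gec(X,X) - g(X,X)| \leq \eps\|X\|_h^2$ coming from $d_h(\gec,g) < \eps$.

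For the lower bound $\liminf_{\eps \to 0} d_{\gec}(p,q) \geq d(p,q)$, I first dispose of the trivial case $d(p,q) = 0$ and then, given $\delta > 0$, choose a $g$-timelike curve $\gamma \colon [0,1] \to M$ from $p$ to $q$ with $L_g(\gamma) \geq d(p,q) - \delta$. A standard approximation lets me further assume $\gamma$ is piecewise smooth with $-g(\dot\gamma,\dot\gamma) \geq c > 0$ and $\|\dot\gamma\|_h \leq C$ on $[0,1]$. For $\eps < c/C^2$ the estimate above yields
\[
-\gec(\dot\gamma,\dot\gamma) \geq -g(\dot\gamma,\dot\gamma) - \eps C^2 > 0,
\]
so $\gamma$ is a $\gec$-timelike curve from $p$ to $q$, and $\sqrt{-\gec(\dot\gamma,\dot\gamma)} \to \sqrt{-g(\dot\gamma,\dot\gamma)}$ uniformly on $[0,1]$. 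Consequently $L_{\gec}(\gamma) \to L_g(\gamma)$, giving $\liminf d_{\gec}(p,q) \geq L_g(\gamma) \geq d(p,q) - \delta$. Since $\delta$ was arbitrary the bound follows.

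For the upper bound $\limsup_{\eps \to 0} d_{\gec}(p,q) \leq d(p,q)$, fix any sequence $\eps_j \to 0$. Passing to a subsequence I may assume $d_{\check g_{\eps_j}}(p,q) > 0$ (otherwise that subsequence satisfies the inequality trivially) and pick $\check g_{\eps_j}$-causal curves $\gamma_j$ from $p$ to $q$ with $L_{\check g_{\eps_j}}(\gamma_j) \geq d_{\check g_{\eps_j}}(p,q) - \eps_j$. Since $\check g_{\eps_j} \prec g$, each $\gamma_j$ is $g$-timelike and therefore contained in the compact set $J(p,q)$. Reparametrizing by $h$-arclength, whose total length is uniformly bounded for $g$-causal curves in compact regions, Arzel\`a--Ascoli together with the limit curve lemma (valid for $C^{1,1}$-spacetimes) produces a subsequential uniform limit $\gamma$ which is $g$-causal from $p$ to $q$ by closure of the causal relation under global hyperbolicity. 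The estimate $-\gec(X,X) \leq -g(X,X) + \eps\|X\|_h^2$ combined with $\sqrt{a+b} \leq \sqrt{a} + \sqrt{b}$ gives
\[
\sqrt{-\check g_{\eps_j}(\dot\gamma_j,\dot\gamma_j)} \leq \sqrt{-g(\dot\gamma_j,\dot\gamma_j)} + \sqrt{\eps_j}\,\|\dot\gamma_j\|_h,
\]
hence $L_{\check g_{\eps_j}}(\gamma_j) \leq L_g(\gamma_j) + O(\sqrt{\eps_j})$. Upper semicontinuity of the Lorentzian length functional for $g$ (under uniform convergence of $g$-causal curves) then yields $\limsup_j L_g(\gamma_j) \leq L_g(\gamma) \leq d(p,q)$, which completes the upper bound.

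The main obstacle is the upper bound direction, where the curves $\gamma_j$ are causal for \emph{different} metrics $\check g_{\eps_j}$ while the classical upper semicontinuity of Lorentzian length is stated only for a single fixed background metric. The displayed $\sqrt{\eps}$ inequality is the device that bridges this gap, reducing the $\check g_{\eps_j}$-length to the $g$-length up to a harmless error and allowing the classical semicontinuity to be invoked. A subsidiary concern—that the limit curve $\gamma$ genuinely connects $p$ to $q$ rather than degenerating—is settled by compactness of $J(p,q)$ together with the fact that $p$, $q$ are the fixed endpoints of every $\gamma_j$.
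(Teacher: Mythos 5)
Your proof is correct. The paper itself offers no proof here, only a pointer to~\cite[Prop.~4.3]{hawkingc11}, and your two-sided argument is the natural one: for the lower bound you approximate a near-maximising $g$-timelike curve by a broken $g$-geodesic with speed uniformly bounded away from zero, so that the $C^1$-closeness $|\gec(X,X)-g(X,X)|\le\eps\|X\|_h^2$ makes it $\gec$-timelike with $\gec$-length converging to the $g$-length; for the upper bound you use $\gec\prec g$ to trap the near-maximising $\gec$-causal curves in the $g$-compact set $J(p,q)$, extract a $g$-causal limit curve by Theorem~\ref{prop:MinguzziLimit}, and bridge the two metrics with the pointwise estimate $\sqrt{-\gec(X,X)}\le\sqrt{-g(X,X)}+\sqrt\eps\,\|X\|_h$ before invoking upper semicontinuity of $L_g$. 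The $\sqrt\eps$ device is exactly the right way to cope with the fact that the $\gamma_j$ are causal for varying metrics, and the endpoint-degeneracy issue you flag is indeed dispatched by compactness of $J(p,q)$ together with $p\ne q$ (the case $p=q$ being trivial since global hyperbolicity of $g$ rules out $\gec$-timelike loops). This matches the standard proof.
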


The following basic Friedrichs-type Lemma collects some general convergence properties that will
be used repeatedly in subsequent sections.

\begin{Lemma}\label{reg1} Let $a\in \linfloc(\R^n)$, $f\in C^0(\R^n)$, $b_\eps\in
C^0(\R^n)$ ($\eps>0$), and $b_\eps\to b$ locally uniformly for $\eps\to 0$. Let
$\rho\in \D(\R^n)$
	be a standard mollifier. Then
	\begin{itemize}
		\item[(i)] $(a\cdot f \cdot b)*\rho_\eps - (a*\rho_\eps)\cdot
(f*\rho_\eps)\cdot b_\eps \to 0$ ($\eps\to 0$) locally uniformly.
		\item[(ii)] If $\rho$ is non-negative and $a\cdot f\cdot b\ge c\in \R$ then
		$$
		\forall \tilde c< c\ \forall K\comp \R^n \ \exists \eps_0 \ \forall \eps<\eps_0: (a*\rho_\eps)\cdot (f*\rho_\eps)\cdot b_\eps > \tilde c \quad \text{ on } K.
		$$
	\end{itemize}
\end{Lemma}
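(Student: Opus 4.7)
The plan is to prove (i) by a telescoping decomposition into three error terms, each handled by standard mollifier arguments, and then to obtain (ii) as a short corollary using the positivity of $\rho$.

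For (i), I would first collect some easy preliminary observations: since $b_\eps\to b$ locally uniformly with each $b_\eps$ continuous, the limit $b$ is continuous; since $a\in\linfloc(\R^n)$, the mollification $a*\rho_\eps$ is bounded on any compact $K\comp\R^n$ uniformly in small $\eps$ (by $\|a\|_{L^\infty(K')}$ for a fixed compact neighbourhood $K'$ of $K$); and since $f\in C^0$, $f*\rho_\eps\to f$ locally uniformly. Then I would write
\begin{align*}
(afb)*\rho_\eps - (a*\rho_\eps)(f*\rho_\eps)b_\eps
&= \bigl[(afb)*\rho_\eps - (a*\rho_\eps)\cdot(fb)\bigr] \\
&\quad + (a*\rho_\eps)\cdot b\cdot\bigl[f - f*\rho_\eps\bigr] \\
&\quad + (a*\rho_\eps)(f*\rho_\eps)(b-b_\eps)
\end{align*}
and show each bracket vanishes locally uniformly as $\eps\to 0$.

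The first bracket is a classical Friedrichs-type commutator: applying $g:=fb\in C^0$ to the identity
\[
\bigl((a\,g)*\rho_\eps\bigr)(x) - (a*\rho_\eps)(x)\,g(x) = \int a(x-y)\bigl[g(x-y)-g(x)\bigr]\rho_\eps(y)\,dy,
\]
one bounds the right-hand side on $K$ by $\|a\|_{L^\infty(K')}$ times the modulus of continuity of $g$ on $K'$ at scale $\eps$, which tends to $0$ by uniform continuity of $g$ on $K'$. The second bracket tends to $0$ uniformly on $K$ since $(a*\rho_\eps)\cdot b$ is uniformly bounded on $K$ and $f*\rho_\eps\to f$ uniformly there. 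The third bracket tends to $0$ uniformly on $K$ since $(a*\rho_\eps)(f*\rho_\eps)$ is uniformly bounded on $K$ and $b_\eps\to b$ uniformly there by hypothesis.

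For (ii), the hypothesis $afb\ge c$ together with $\rho\ge 0$ and $\int\rho_\eps=1$ yields $(afb)*\rho_\eps \ge c$ everywhere (the a.e.\ lower bound is preserved under convolution with a probability density). Given $\tilde c<c$ and $K\comp\R^n$, part (i) supplies some $\eps_0>0$ such that the commutator $(afb)*\rho_\eps - (a*\rho_\eps)(f*\rho_\eps)b_\eps$ exceeds $\tilde c - c$ on $K$ for all $\eps<\eps_0$; rearranging gives the required pointwise bound.

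I do not anticipate a serious obstacle: the only moving part beyond the standard Friedrichs lemma is the extra factor $b_\eps$, which is harmless thanks to its locally uniform convergence, and the lower-bound step (ii) is a direct consequence of (i) once one notes that positivity of $\rho$ transfers the a.e.\ inequality on $afb$ to a pointwise inequality on its mollification.
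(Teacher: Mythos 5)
Your proposal is correct and rests on the same underlying idea as the paper's proof: split the difference into a telescoping sum whose problematic term is a Friedrichs-type commutator, estimate that commutator via the modulus of continuity of the continuous factor, and then read off (ii) from the fact that convolution with a non-negative probability density preserves the a.e.\ lower bound. The paper organises the telescope differently, writing
\[
(afb)*\rho_\eps - (a*\rho_\eps)(f*\rho_\eps)b_\eps
= \bigl[(afb)*\rho_\eps - (af)*\rho_\eps\cdot b*\rho_\eps\bigr]
+ \bigl[(af)*\rho_\eps\cdot b*\rho_\eps - (a*\rho_\eps)(f*\rho_\eps)b_\eps\bigr],
\]
and then cites an earlier variant of the Friedrichs lemma for both pieces; you use a three-term telescope centred on $g:=fb$ (continuous since $b$ is the locally uniform limit of continuous functions), giving a self-contained treatment via the explicit commutator identity. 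These are interchangeable, and your version has the minor advantage of not leaning on an external reference. One small slip in (ii): what you need is that the commutator $D_\eps := (afb)*\rho_\eps - (a*\rho_\eps)(f*\rho_\eps)b_\eps$ satisfy $D_\eps < c - \tilde c$ on $K$, so that $(a*\rho_\eps)(f*\rho_\eps)b_\eps = (afb)*\rho_\eps - D_\eps > c - (c-\tilde c) = \tilde c$; you wrote that $D_\eps$ should exceed $\tilde c - c$, which is the complementary (and unhelpful) bound. Since part (i) gives $|D_\eps|\to 0$ locally uniformly, both inequalities hold for small $\eps$ and the conclusion stands, but the direction you cite is not the one that does the work.
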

\begin{proof} (i) We have
\begin{equation*}
\begin{split}
(a\cdot f \cdot b)*\rho_\eps - (a*\rho_\eps)\cdot (f*\rho_\eps)\cdot b_\eps =
 (a\, \cdot \, & f \cdot b)*\rho_\eps - (a\cdot f)*\rho_\eps \cdot b*\rho_\eps\\
 &+ (a\cdot f)*\rho_\eps \cdot b*\rho_\eps - (a*\rho_\eps)\cdot (f*\rho_\eps)\cdot b_\eps.
 \end{split}
\end{equation*}
Here, both the first and the second term on the right hand side go to $0$ locally uniformly by a variant of the Friedrichs Lemma
(cf.\ the proof of~\cite[Lemma 3.2]{hawkingc11}).

(ii) Since $(a\cdot f \cdot b)*\rho_\eps \ge c$, the claim follows from (i).
\end{proof}

A convenient consequence of the previous Lemma concerns basic properties of
curvature quantities associated to a $C^{1,1}$-metric $g$: Arguing in a local
chart, Lemma~\ref{reg1} shows that if $g_\eps$ is as in Proposition~\ref{CGapprox},
then $R_\eps - R*\rho_\eps\to 0$ locally uniformly
(cf.\ (5) in~\cite{hawkingc11}). Since, moreover, $R*\rho_\eps\to R$ in any
$L^p_{\mbox{\scriptsize loc}}$ ($1\le p<\infty$), all the usual symmetry
properties of the Riemann tensor for smooth metrics carry over to $R$ pointwise
a.e.

Next we introduce some notation to deal with timelike and null geodesics
simultaneously. Suppose that $\gamma$ is a causal geodesic in a
$C^{1,1}$-spacetime $(M,g)$. As is common in the smooth case (see e.g.\
\cite[Sec.\ 4.6.3]{Krie}) we consider the
quotient space $[\dot\gamma(t)]^\perp:=(\dot\gamma(t))^\perp/\R\dot\gamma(t)$, i.e. vectors $v, w \in \left(\dot{\gamma}(t)\right)^{\perp}$ are equivalent if there exists $\alpha \in \R$ such that $v = w + \alpha \dot{\gamma}(t)$.
In the case where $\gamma$ is null, $[\dot\gamma(t)]^\perp$ is an
$(n-2)$-dimensional subspace of
$(\dot\gamma(t))^\perp$. When $\gamma$ is timelike, $[\dot\gamma(t)]^\perp$ coincides with $(\dot{\gamma}(t))^{\perp}$.
In order to
enable a unified notation we will henceforth denote the dimension of
$[\dot\gamma(t)]^\perp$ by $d$, i.e. $d=n-2$ in the null
case and $d=n-1$ in the timelike case. Also
we set $[\dot\gamma]^\perp=\bigcup_t[\dot\gamma(t)]^\perp$.
Every normal tensor field $A$ along $\gamma$ then induces a unique tensor
class $[A]$ along $\gamma$ and the induced covariant derivative
$\nabla_{\dot\gamma}$ is well-defined for tensor classes
and denoted by $[\dot A]=[\nabla_{\dot\gamma}A]$.
The metric $g|_{[\dot\gamma]^\perp}$ is positive definite in both the null and the timelike case.
Also
recall that, for smooth metrics, the curvature (or tidal force) operator $[R](t):\,
[\dot\gamma(t)]^\perp\to [\dot\gamma(t)]^\perp$,
$[v]\mapsto[R(v,\dot\gamma(t))\dot\gamma(t)]$ is well-defined since
$R(\dot\gamma,\dot\gamma)\dot\gamma=0$.
\medskip

Before we proceed to construct suitable frames for the approximating curvature
operators $[R_\eps](t)$, we will show that for the case of a $C^2$-Lorentzian metric
our definition of genericity (Definition \ref{genericity_c11}) is
equivalent to the classical one, i.e., 
\eqref{TidalGenericitySmooth} if
the strong energy condition~\eqref{eq:sec} holds. Clearly, \eqref{posgenericity} implies
\eqref{TidalGenericitySmooth}. For the converse, we have:
\begin{Lemma} \label{lem:2.2}
Let $g\in C^{2}$ be a Lorentzian metric on $M$, and let $\gamma \colon I\to M$ be a
causal geodesic for $g$. Suppose that the genericity condition \eqref{TidalGenericitySmooth}  is satisfied for $\gamma$ at $t_0\in I$. If the strong energy condition~\eqref{eq:sec} holds
then there exist a neighbourhood $U$ of $\gamma(t_0)$, as well as
Lipschitz vector fields $X$ and $V$ on $U$ such that $X(\gamma(t)) =
\dot \gamma(t)$ and $V(\gamma(t))\in \left( \dot{\gamma}(t) \right)^\perp$ for all $t\in I$
with $\gamma(t)\in U$, and there exists some $c>0$ such that $ \langle
R(V,X)X,V\rangle >c$ on $U$. 
\end{Lemma}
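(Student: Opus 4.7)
The plan is to reduce the statement to a pointwise claim at $t_0$ and then propagate it to a neighbourhood by continuity. First I will show that, under the hypotheses \eqref{TidalGenericitySmooth} and \eqref{eq:sec}, there exists a single vector $v_0\in (\dot\gamma(t_0))^\perp$ for which $\langle R(v_0,\dot\gamma(t_0))\dot\gamma(t_0),v_0\rangle$ is \emph{strictly} positive (not merely nonzero). Given such a $v_0$, I choose smooth extensions $X,V$ of $\dot\gamma(t_0),v_0$ to a neighbourhood $U$ of $\gamma(t_0)$ with $X\circ\gamma=\dot\gamma$ and $V(\gamma(t))\perp \dot\gamma(t)$; the orthogonality along $\gamma$ is arranged by first parallel-transporting $v_0$ along $\gamma$ near $t_0$ (which preserves $g$-orthogonality to $\dot\gamma$) and then extending transversally, e.g.\ via Fermi coordinates around $\gamma$. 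Since $g\in C^2$, the scalar $\langle R(V,X)X,V\rangle$ is continuous on $U$ and equals the strictly positive value $\langle R(v_0,\dot\gamma(t_0))\dot\gamma(t_0),v_0\rangle$ at $\gamma(t_0)$, so it exceeds some $c>0$ after shrinking $U$. Smoothness of the chosen $X,V$ trivially gives Lipschitz regularity.

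The core of the argument is the pointwise claim at $t_0$. Using the standard symmetries of $R$ (available since $g\in C^2$) one has $R(\dot\gamma,\dot\gamma)\dot\gamma=0$ and $\langle R(v,\dot\gamma)\dot\gamma,\dot\gamma\rangle=0$ for all $v$, so the tidal operator descends to a well-defined endomorphism $[R](t_0)\colon [\dot\gamma(t_0)]^\perp \to [\dot\gamma(t_0)]^\perp$, where the quotient and its induced positive-definite inner product are as introduced just before the lemma (covering both the null and timelike cases uniformly). The pair symmetry $R(v,w,x,y)=R(x,y,v,w)$ makes $[R](t_0)$ self-adjoint with respect to this inner product. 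By \eqref{TidalGenericitySmooth}, the operator is nonzero. A short frame computation, using a pseudo-orthonormal completion of $\dot\gamma$ by a conjugate null vector $N$ with $\langle \dot\gamma,N\rangle=-1$ in the null case, shows that $\mathrm{tr}\,[R](t_0)=\mathrm{Ric}(\dot\gamma(t_0),\dot\gamma(t_0))$, which is $\geq 0$ by \eqref{eq:sec}. A nonzero self-adjoint operator with non-negative trace on a positive-definite inner product space must possess at least one strictly positive eigenvalue $\lambda$; any corresponding eigenvector $v_0$ may be chosen in $(\dot\gamma(t_0))^\perp$ with $[v_0]\neq 0$, whence $\langle v_0,v_0\rangle>0$ and
\[
\langle R(v_0,\dot\gamma(t_0))\dot\gamma(t_0),v_0\rangle \;=\; \lambda\,\langle v_0,v_0\rangle \;>\;0.
\]

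The step I expect to be the only genuinely non-trivial one is this last linear-algebraic observation: if all eigenvalues of a self-adjoint operator on a positive-definite space were non-positive and its trace were non-negative, all eigenvalues would vanish and the operator itself would be zero, contradicting \eqref{TidalGenericitySmooth}. This is the single place where the two hypotheses \eqref{TidalGenericitySmooth} and \eqref{eq:sec} are genuinely combined; everything else — the extension of $\dot\gamma$ and $v_0$ to a neighbourhood, the continuity of $\langle R(V,X)X,V\rangle$, and the Lipschitz regularity of the extensions — is routine.
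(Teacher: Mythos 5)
Your proof is correct and follows essentially the same strategy as the paper: identify a direction in $[\dot\gamma(t_0)]^\perp$ where the tidal operator pairs strictly positively (using that its trace equals $\Ric(\dot\gamma(t_0),\dot\gamma(t_0))\geq 0$ while the operator itself is nonzero), then extend $\dot\gamma(t_0)$ and that direction by parallel transport and invoke continuity of $\langle R(V,X)X,V\rangle$. You are, if anything, slightly more careful than the paper at the key linear-algebra step: the paper asserts that one of the diagonal entries $\langle R(e_j,\dot\gamma)\dot\gamma,e_j\rangle$ in an arbitrary orthonormal frame must be positive, which strictly speaking requires choosing an eigenbasis of the self-adjoint operator $[R](t_0)$, exactly the point you make explicit.
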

\begin{proof}
We assume that~\eqref{TidalGenericitySmooth} holds at $t_0$. Let $e_1,\dots, e_n$ be orthonormal
vectors at $\gamma(t_0)$ (with $e_1,\dots,e_{n-1}$ spacelike and $e_n$ timelike) such that $\dot \gamma(t_0) =(e_{n-1}+e_n)$ if $\dot\gamma(t_0)$ is null or  $\dot \gamma(t_0) = e_n$ if $\dot \gamma(t_0)$ is timelike, respectively. Then $\Ric(\dot\gamma(t_0),\dot\gamma(t_0))=\sum_{i=1}^{k} \langle R(e_i,\dot\gamma(t_0))\dot\gamma(t_0),e_i \rangle\ge 0$, where $k=n-2$ in the null case
and $k=n-1$ in the timelike case. Due to \eqref{TidalGenericitySmooth}, at least one of the 
summands, say $\langle R(e_j,\dot\gamma(t_0))\dot\gamma(t_0),e_j \rangle$  has to be strictly positive. By continuity, extending $e_j$ and $\dot\gamma(t_0)$ to a neighbourhood $U$ of 
$\gamma(t_0)$ (e.g.\ by parallel transport) provides the desired vector fields $V$ and $X$
such that \eqref{posgenericity} is satisfied.
\end{proof}

The next step is to use the $C^{1,1}$-genericity condition to derive a lower
bound on the tidal force operator for approximating metrics along approximating
causal geodesics.

\begin{Lemma}\label{lem:2.3}
 Let $g\in C^{1,1}$ be a Lorentzian metric on $M$ such that the strong energy
 condition is satisfied, and let $\gamma \colon I\to M$ be
 a causal geodesic for $g$. Suppose that the genericity condition is
 satisfied for $\gamma$ at $t_0\in I$.
 Then there exist constants $r>0$, $c>0$, and $C>0$
 such that the following holds:
 Let $g_\eps = \check g_\eps$ or $g_\eps = \hat g_\eps$, and let $\gamma_\eps$
 be $g_\eps$-geodesics of the same causal character w.r.t.\ $g_\eps$ as that
 of $\gamma$
 w.r.t.\ $g$. Assume that $\gamma_\eps$ converges to $\gamma$ in $C^1(I)$
 and for each $\eps$, let
\[
[R_\eps](t) := [R_\eps(\,.\,,\dot \gamma_\eps(t))\dot \gamma_\eps(t)] \colon [\dot \gamma_\eps(t)]^\perp \to [\dot \gamma_\eps(t)]^\perp.
\]
 Then there exists $\eps_0>0$ such that, for each $\eps\in (0,\eps_0)$
 there is a smooth parallel orthonormal frame $[E_1^\eps](t),\dots,$
 $[E_{d}^\eps](t)$ for $[\dot\gamma_\eps]^\perp$ such that
 \begin{align}
  [R_\eps](t) > \diag(c,-C,\dots,-C)\ \mbox{on $[t_0-r,t_0+r]$}
 \end{align}
 in terms of this frame.%
\footnote{Here and below, for $d \times d$ matrices $A, B$, we write $A > B$ if the matrix $A-B$ is positive definite.}
\end{Lemma}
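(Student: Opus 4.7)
The plan is to construct the required parallel orthonormal frame by aligning its first vector at $\gamma_\eps(t_0)$ with the genericity vector field $V$, and then to translate the $L^\infty$-inequality $\langle R(V,X)X,V\rangle_g>c$ on $U$ into a pointwise lower bound on $[R_\eps]_{11}(t)$ via Lemma~\ref{reg1}. First, I would normalise so that $V$ is a continuous unit $g$-vector field in $[\dot\gamma]^\perp$ on a (possibly smaller) $U$, absorbing $\|V\|^{-2}$ into the constant $c>0$. At $\gamma_\eps(t_0)$ define $e_1^\eps$ by $g_\eps$-projecting $V(\gamma_\eps(t_0))$ onto $[\dot\gamma_\eps(t_0)]^\perp$ and normalising in the induced quotient metric, and extend to $e_1^\eps,\dots,e_d^\eps$ by $g_\eps$-Gram--Schmidt. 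The $g_\eps$-parallel transports $[E_i^\eps](t)$ along $\gamma_\eps$ form a smooth parallel orthonormal frame for $[\dot\gamma_\eps]^\perp$. By Proposition~\ref{CGapprox}(i) the $g_\eps$-Christoffel symbols converge locally uniformly to those of $g$, so standard ODE dependence (combined with $\gamma_\eps\to\gamma$ in $C^1$ and $e_i^\eps\to e_i$) yields $E_i^\eps\to E_i$ uniformly on any compact $t$-interval, where $E_i$ is the $g$-parallel transport along $\gamma$ of $e_1=V(\gamma(t_0)),e_2,\dots,e_d$.

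The desired matrix inequality then splits into a uniform upper bound on all entries of $[R_\eps](t)$ and a uniform positive lower bound on $[R_\eps]_{11}(t)$. The former is immediate: by Proposition~\ref{CGapprox}(ii) the second derivatives of $g_\eps$ are uniformly bounded on compacta, hence so is $R_\eps$, and contracting against the bounded frame vectors produces a constant $K$, independent of $\eps$, with $|[R_\eps]_{ij}(t)|\le K$. For the crucial lower bound, I would apply Lemma~\ref{reg1}(ii) in local coordinates, with the Riemann components of $g$ as the $L^\infty_{\mathrm{loc}}$ factor, the polynomial in the components of $V$ and $X$ as the continuous factor, and the relevant components of $g_\eps$ (which converge locally uniformly to those of $g$) as the $b_\eps$-factor. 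This yields $\langle R_\eps(V,X)X,V\rangle_{g_\eps}>\tilde c$ on a compact neighbourhood of $\gamma(t_0)$ for any $\tilde c\in(0,c)$ and all sufficiently small $\eps$.

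The remaining---and most delicate---step is to compare this bound, evaluated at $\gamma_\eps(t)$, with $[R_\eps]_{11}(t)=\langle R_\eps(E_1^\eps,\dot\gamma_\eps)\dot\gamma_\eps,E_1^\eps\rangle_{g_\eps}$. The difficulty is precisely that $E_1^\eps$ is $g_\eps$-parallel along $\gamma_\eps$ whereas $V$ is only a continuous extension to $U$ and is not parallel, so $E_1^\eps(t)$ and $V(\gamma_\eps(t))$ coincide only in the joint limit $\eps\to 0$, $t\to t_0$. The key estimate is that the deviations $\|E_1^\eps(t)-V(\gamma_\eps(t))\|$ and $\|\dot\gamma_\eps(t)-X(\gamma_\eps(t))\|$ are controlled by (a) the moduli of continuity of $V$ and $X$, (b) the uniform $C^0$-bound on the $g_\eps$-Christoffels (which governs the derivative of parallel transport), and (c) the convergence of initial data at $t_0$. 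Choosing first $r>0$ small and then $\eps_0>0$ small---in that order---one makes these deviations arbitrarily small on $[t_0-r,t_0+r]$; combined with the uniform bound $K$ on $R_\eps$, this forces $[R_\eps]_{11}(t)>c'$ for some $c'>0$ on the same interval. A final Schur-complement-type linear algebra estimate then selects $c\in(0,c')$ and $C$ large enough, depending on $K$ and $c'-c$, so that $[R_\eps](t)>\diag(c,-C,\dots,-C)$ in the sense of the lemma's footnote. The main obstacle is this transfer from the $V,X$-formulation to the parallel frame, which forces the restriction to a (possibly small) interval $[t_0-r,t_0+r]$ and the specified order of the two small parameters.
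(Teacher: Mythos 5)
Your proposal follows essentially the same route as the paper, and the pieces are all present: a parallel $g_\eps$-frame along $\gamma_\eps$ whose first vector is aligned with $V$, uniform convergence of frames via $C^1$-convergence of the metrics, Lemma~\ref{reg1}(ii) to push the $L^\infty$-genericity bound through the regularisation, a uniform bound on $R_\eps$ from Proposition~\ref{CGapprox}(ii), and a final completing-the-square step to absorb the off-diagonal entries. Your ``Schur-complement-type'' estimate at the end is exactly the quadratic-form/discriminant calculation the paper carries out in~\eqref{rdiag}: given a lower bound $R_{11}>c_1$ and a uniform bound $C_R$ on the cross-terms $R_{1j}$, one takes $C$ large enough that $\lambda_{\min}(\bar R+C\,\mathrm{id})\ge C_R^2/(c-c_1)$, making the full quadratic form positive. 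The principal tactical difference is ordering: you derive the scalar bound on $[R_\eps]_{11}$ and the entry-wise bound on $R_\eps$ first and do the linear algebra at the $\eps$-level, whereas the paper establishes the full matrix inequality $(R_{ij})>\diag(c_1,-C_1,\dots,-C_1)$ for the exact curvature on $U$ and then transports it through mollification (positive-definiteness is preserved under convolution with a non-negative kernel) and through the frame change $E_i\mapsto E_i^\eps$. Both orderings work; the paper's has the advantage that the discriminant computation is performed only once, at the $\eps$-free level.

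There is, however, a small but genuine gap in your set-up that the paper handles deliberately. You propose to normalise $V$ and to define $e_1^\eps$ by projecting $V(\gamma_\eps(t_0))$ to the quotient $[\dot\gamma_\eps(t_0)]^\perp$ and normalising there. For this to make sense you need $V(\gamma(t_0))$ to represent a \emph{nonzero} class in the quotient, i.e.\ (in the null case) not to be proportional to $\dot\gamma(t_0)$; otherwise the projection is degenerate and you cannot normalise. The paper secures this by first replacing $V$ and $X$ by the $g$-parallel transports of $V(\gamma(t_0))$ and $\dot\gamma(t_0)$ along radial geodesics from $\gamma(t_0)$ on a totally normal $U$. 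With that replacement, $V$ is proportional to $X$ either nowhere or everywhere on $U$ (parallel transport preserves proportionality), and the latter is impossible because $\langle R(X,X)X,X\rangle=0$ a.e.\ by the symmetries of $R$, which would contradict~\eqref{posgenericity}. Hence $V$ is spacelike and can be normalised, and one then has a genuine Lipschitz orthonormal frame $E_1=V,\dots,E_n$ on $U$ to work in. Your version needs an equivalent argument (e.g.\ observe that if $V(\gamma(t_0))\propto\dot\gamma(t_0)$ then by continuity $V\approx X$ on a small ball, so the boundedness of $R$ forces $\langle R(V,X)X,V\rangle$ to be small there, contradicting the $L^\infty$-bound), and without it the normalisation step is not justified.
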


\begin{remark}\label{rem_tl_null}
As the proof will show, the conclusion of Lemma~\ref{lem:2.3} remains valid if,
for $\gamma$ timelike resp.\ null, also the strong energy resp.\ genericity
condition are assumed to hold only for the timelike resp.\ null case.

Moreover, since in all the following results the strong energy condition only
enters via Lemmas~\ref{approxlemma} and~\ref{lem:2.3}, the
claim in the final sentence of Remark~\ref{remark:null} indeed holds. 
\end{remark}

\begin{proof}[Proof of Lemma~\ref{lem:2.3}]
As the claim is local, we may assume that $M=\R^n$.
We use the notation of Definition~\ref{genericity_c11}, and may clearly set
$t_0=0$.
Additionally we may assume that $\gamma $ is parametrised to unit speed
(if $\gamma$ is timelike) or such that $\dot \gamma(0) = e_{n-1}+e_n$ for two
orthonormal vectors $e_{n-1}, e_n$ with $e_n$ timelike (if $\gamma$ is null).
Setting $e_n:=\dot\gamma(0)$ in the timelike case, by shrinking $U$ and $c$
we may assume that $U$ is totally normal (\cite[Sec.\ 4]{KSS})
and relatively compact and replace $X$ 
by the parallel transport (radially outward from $\gamma(0)$)
of $e_{n-1}+e_n$ in the null case, respectively $e_n$ in the timelike
case. 

We now briefly distinguish the timelike and the null case, first assuming that
$\gamma $ is null. We then replace $V$ by
the vector field obtained by transporting $V(\gamma(t_0))$ outwards from
$\gamma(t_0)$ along radial geodesics.
Then by possibly shrinking $U$ and $c$
we still retain the genericity estimate~\eqref{posgenericity} for $X$ and $V$.
By construction, the new $V$ is either proportional to $X$ nowhere or everywhere,
but the latter can't occur by the symmetries of $R$ and~\eqref{posgenericity}.
Hence $V$ is spacelike and we normalise it.
Thus we can  choose an orthonormal Lipschitz frame $E_1,\dots ,E_n$ 
 on $U$ such that $E_1=V$, $E_n$ is timelike and $X=(E_{n-1}+E_n)$.

In the case where $\gamma$ is timelike, by shrinking $U$ and $c$ further, we may replace $V$ by
$V+\langle X,V\rangle X$ and normalize it. 
Consequently, there exists a Lipschitz continuous
orthonormal frame $E_1=V,E_2,\dots, E_n=X$ on $U$.

After these preparations, we will now carry out the proof in several steps simultaneously in the timelike and the null case.
 \medskip

 To begin with, let $0<c_1<c$. We claim that there exists some $C_1>0$ such
 that, setting $R_{ij}:= \langle R(E_i,X)X,E_j\rangle$ we have $(R_{ij})_{i,j=1}^{d} > \diag(c_1,-C_1,\dots,-C_1)$ on $U$.

 To establish this, we need to find $C_1>0$ such that, for any $w=:(w_1,\bar
 w)\not=0$ in $\R^{d}$,
 $w^\top (R_{ij}-\diag(c_1,-C_1,\dots,-C_1)) w >0$.
 Setting $\bar R := (R_{ij})_{i,j=2}^{d}$, and denoting by $\lamin$
 the smallest eigenvalue of $\bar R + C_1 \mathrm{id}$, we have
 \begin{equation}\label{rdiag}
  \begin{split}
   w^\top (R_{ij} -\diag(c_1,-C_1,\dots, & -C_1)) w \\
&= (R_{11}-c_1)w_1^2 + 2 \sum_{j=2}^{d} R_{1j} w_j w_1 +
   \bar w^\top (\bar R + C_1 \mathrm{id})\bar w \\
   & \ge (c-c_1)w_1^2 + 2 \sum_{j=2}^{d} R_{1j} w_j w_1 +
\lamin \|\bar w\|_e^2 \\
&\ge (c-c_1)w_1^2 - 2 |w_1| \|(R_{1j})_j\|_e \|\bar w\|_e + \lamin \|\bar
w\|_e^2,
\end{split}
\end{equation}
where $\|\,.\,\|_e$ denotes the Euclidean norm. Setting $C_R := \|(R_{1j})_j\|_e$, we pick $C_1 > 0$ such that $\lamin(x) \ge \frac{C_R^2}{c-c_1}$ for all $x \in U$. With this choice, the quadratic in the final line of~\eqref{rdiag} has no real root, and therefore~\eqref{rdiag} is positive for all $w \in \R^d \setminus \{ 0 \}$.
\medskip

 Since (component-wise) convolution with a non-negative mollifier as in Lemma~\ref{reg1}(ii) preserves
 positive-definiteness, it follows that given $0<c_2<c_1$ and
 $C_2>C_1$, we can achieve
 $R_{ij}*\rho_\eps > \diag(c_2,-C_2,\dots,-C_2)$ for $\eps$ small. Furthermore, by
 the same argument as in (5) in~\cite{hawkingc11},
 $R_\eps - R*\rho_\eps \to 0$ $(\eps\to 0)$ locally uniformly and, by Lemma
 ~\ref{reg1}(i),
 $R_{\eps ij}- R_{ij}*\rho_\eps \to 0$
 locally uniformly, where the matrix elements $R_{\eps
 ij}$ are defined as $R_{\eps ij}=(\langle R_\eps(E_i,X)X,E_j\rangle_{g_\eps})_{i,j=1}^{d}$.
 This implies that there exists an $\eps_0$ such that
 \begin{equation}\label{rdiag2}
  (R_{\eps ij}) > \diag(c_2,-C_2,\dots,-C_2)
 \end{equation}
 on $U$ for all $\eps<\eps_0$.

Next we note that by the explicit bounds derived
in~\cite[Sec.\ 2]{KSS} we may assume that
$U$ is $g_\eps$-totally normal for each $\eps<\eps_0$.
Let $p_\eps:=\gamma_\eps(0)$. Since $p_\eps\to p_0$,
we can also achieve that $p_\eps\in U$ for all $\eps<\eps_0$.
Pick a $g_\eps$-orthonormal
frame $e_1^\eps, \dots, e_n^\eps$ at $p_\eps$ such that,
as above, $e_n^\eps=\dot\gamma_\eps(0)$ in the timelike case,
whereas in the null case $e_n^\eps$ is timelike and
$\dot\gamma_\eps(0)\propto e_{n-1}^\eps+e_n^\eps$. In addition, we may
assume that $e_i^\eps\to E_i(p_0)$ as $\eps\to 0$. Now denote by
$E_1^\eps, \dots, E_n^\eps$ the $g_\eps$-orthonormal frame
on $U$ that results from parallel transporting $e_1^\eps, \dots,
e_n^\eps$ out from $p_\eps$ along radial $g_\eps$-geodesics.
Then, since $E_i^\eps\to E_i$ uniformly on $U$, by further
shrinking $\eps_0$, we
obtain from~\eqref{rdiag2} that the matrix elements with respect to this frame satisfy
\begin{equation}\label{rdiag3}
 (\langle R_\eps( E_i^\eps,X)X,
 E_j^\eps\rangle_{g_\eps})_{i,j=1}^{d}
  > \diag(c_2,-C_2,\dots,-C_2)
\end{equation}
on $ U$ for $\eps<\eps_0$.

Fix $r>0$ such that $\gamma([-r,r])\sse U$, so that, without loss of generality
we have $\gamma_\eps([-r,r])\sse U$ for all $\eps<\eps_0$.
Then, by construction, $E_i^\eps(t):= E_i^\eps\circ\gamma_{\eps}(t)$ is a
$g_\eps$-orthonormal smooth parallel frame along $\gamma_\eps$,
and~\eqref{rdiag3} implies that
\[
(\langle
R_\eps(E_i^\eps(t),X\circ\gamma_\eps(t))X\circ\gamma_\eps(t),
E_j^\eps(t)\rangle_ { g_\eps \circ \gamma_\eps})_{i,j=1}^{d} >
\diag(c_2,-C_2,\dots,-C_2)
\]
on $[-r,r]$ for $\eps\leq\eps_0$. The claim now follows from the observation that
$\langle R_\eps(\,.\,,X)X,\,.\,\rangle_{g_\eps} \circ \gamma_\eps -
\langle R_\eps(\,.\,,\dot \gamma_\eps)\dot \gamma_\eps,\,.\,\rangle_{g_\eps}
\to 0$
uniformly on $[-r,r]$.
\end{proof}
\section{Conjugate points for smooth metrics}
\label{sec:conjugate_points}
Given a causal geodesic $\gamma$ without conjugate points, it is well known in
the smooth case that, under the strong energy condition, the initial expansion
of the corresponding geodesic congruence
must be bounded. In the following Lemma, we explicitly derive such bounds
assuming only the weaker energy condition, $\Ric(\dot\gamma,\dot\gamma)>-\delta$, that follows from the
$C^{1,1}$-version of the strong energy condition, cf.\ Lemma~\ref{approxlemma}.
We respect the conventions introduced in Section~\ref{reg_sec}, so in particular
$d=n-1$ for $\gamma$ timelike and $d=n-2$ for $\gamma$ null.

\begin{Lemma}\label{boxing}
Let $g$ be a smooth Lorentzian metric on $M$. Then, for any $T>0$, there exists some $\delta=\delta(T)>0$
with the following property: Let $\gamma$ be a future
directed causal geodesic without conjugate points on $[-T,T]$, and let $[A]$
be the Jacobi tensor class along $\gamma$ assuming the data $[A](-T)=0$ and $[A](0)=\mathrm{id}$.
Then for any $0<r<T/2$
the expansion $\theta=\mathrm{tr}([\dot A] [A]^{-1})$ satisfies
\begin{align}\label{boxest}
 \sup_{t\in[-r,r]}|\theta(t)|\leq
 \frac{4d}{T},
\end{align}
provided that $\Ric(\dot\gamma,\dot\gamma)\geq-\delta$ on $[-T,T]$.
\end{Lemma}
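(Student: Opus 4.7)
The approach is to reduce the bound on $\theta$ to a scalar Riccati comparison and then argue by contradiction using the no-conjugate-point hypothesis. First I would set $B := [\dot A][A]^{-1}$, well-defined on $(-T,T]$. Because $[A](-T) = 0$, the Wronskian $[\dot A]^{*}[A] - [A]^{*}[\dot A]$ is parallel along $\gamma$ and vanishes at $t=-T$, hence vanishes identically; this forces $B$ to be self-adjoint on $(-T,T]$. Applying Cauchy--Schwarz to the eigenvalues of $B$ yields $\mathrm{tr}(B^{2}) \geq \theta^{2}/d$, and the Raychaudhuri equation $\dot B + B^{2} + [R] = 0$, combined with $\mathrm{tr}([R]) = \Ric(\dot\gamma,\dot\gamma) \geq -\delta$, projects to the scalar Riccati inequality
\begin{equation*}
\dot\theta \leq -\frac{\theta^{2}}{d} + \delta.
\end{equation*}

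I would then fix the quantitative choice $\delta(T) := d/T^{2}$, so that $\sqrt{d\delta} = d/T$ sits safely below the target threshold $4d/T$. For the lower bound, suppose $\theta(t_{0}) < -4d/T$ for some $t_{0} \in [-r, r]$. Forward comparison with the scalar ODE $\dot z = -z^{2}/d + \delta$, $z(t_{0}) = \theta(t_{0})$, yields $\theta \leq z$ on $[t_{0}, T]$ by a standard crossing-point argument. Writing $u := -1/z$, one checks that the regime $z \leq -2\sqrt{d\delta}$ is preserved as $z$ decreases, and that $\dot u \leq -3/(4d)$ in this regime; since $u(t_{0}) \leq T/(4d)$, $u$ reaches $0$ (equivalently $z \to -\infty$) within time at most $T/3$. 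Because $r < T/2$, this produces a conjugate point to $-T$ at some $t \in (t_{0}, t_{0} + T/3] \subset (-T, T)$, contradicting the hypothesis.

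The upper bound $\theta(t_{0}) \leq 4d/T$ is obtained by the time-reversed analogue: setting $\phi(s) := \theta(t_{0} - s)$ turns the Riccati inequality into $\phi' \geq \phi^{2}/d - \delta$, and comparison with $z' = z^{2}/d - \delta$, $z(0) = \phi(0)$, together with the same reciprocal-$z$ estimate, forces $\phi$ to diverge to $+\infty$ within time at most $T/3$. This places a conjugate point to $-T$ at some $t_{1} \in (-T, t_{0})$, again contradicting the hypothesis. Combining the two sides gives $|\theta(t_{0})| \leq 4d/T$ on $[-r, r]$.

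The main technical obstacle I expect is calibrating $\delta(T)$ and the numerical constant $4$ so that the blow-up times of the comparison ODEs (forward from $t_{0}$ for the lower bound, backward from $t_{0}$ for the upper bound) strictly fit inside the intervals of length at least $T - r > T/2$ available on either side of $t_{0}$. Once $\delta(T) = d/T^{2}$ is chosen, the self-adjointness of $B$, the Cauchy--Schwarz step, and both scalar Riccati comparisons reduce to routine manipulation.
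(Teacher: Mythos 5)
Your proof is correct and follows essentially the same route as the paper: self-adjointness of $[B]$ from the Wronskian vanishing at $-T$, the Raychaudhuri trace giving the scalar Riccati inequality $\dot\theta\le\delta-\theta^2/d$, comparison with a scalar ODE whose solution blows up before the endpoint if $|\theta(t_0)|$ is too large, and a time-reversal for the upper bound. The only (cosmetic) difference is in the blow-up estimate --- the paper uses the explicit $\coth$ comparison solution and a $\kappa\to 0$ limit, while you use the reciprocal substitution $u=-1/z$ to bound the blow-up time directly and pin down the explicit admissible constant $\delta(T)=d/T^2$, a concrete instance of the paper's existential $\delta(T)$.
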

\begin{proof} Since $[A](-T)=0$, $[B]:=[\dot A] [A]^{-1}$ is self-adjoint (cf., e.g., \cite[Lemma~4.6.19]{Krie}),
so its vorticity $\omega=\frac{1}{2}([B]-[B]^t)$ vanishes.
By the Raychaudhuri equation we therefore have
\begin{align}
 \dot\theta=-\Ric(\dot\gamma,\dot\gamma)-\mathrm{tr}(\sigma^2)
 -\frac{1}{d}\,\theta^2
 \leq \delta-\frac{1}{d}\,\theta^2,
\end{align}
where
the shear $\sigma$ is given by $\sigma=[B]-\frac{1}{d}\theta\cdot \mathrm{id}$.
To estimate $\theta$ from below on $[-r, r]$, assume that there exists $t_0 \in [-r,r]$ such that $\theta(t_0)<-\sqrt{d\delta}$. Writing
$\beta=\theta(t_0)<0$ and $\kappa=-\frac{1}{d}\delta<0$, we analyse the comparison equation
\begin{align}\label{box:comp}
 \dot s+\frac{1}{d}\,s^2+d\,\kappa =0,\qquad s(0)=\beta.
\end{align}
Denote by $s_{\kappa\beta} \colon [0,b_{\kappa\beta})\to\R$ the maximal solution of
$\eqref{box:comp}$. Now if $\beta\in(-\infty,-\sqrt{d\delta})$, one has (cf.\
\cite{TG})
\begin{align}
 s_{\kappa\beta}(t)&=d \sqrt{|\kappa|}\,
 \coth\left(t\,\sqrt{|\kappa|} + \mathrm{arcoth}\left(\frac{\beta}{d\sqrt{|\kappa|}}\right)\right) , \\
 b_{\kappa\beta}&=-\frac{1}{\sqrt{|\kappa|}}\,
  \mathrm{arcoth}\left(\frac{\beta}{d\sqrt{|\kappa|}}\right).
\end{align}
Since $\gamma$ has no conjugate point before $T$,
and since the maximal domain of definition of $\theta(t_0+\,.\,)$ must be contained
in that of $s_{\kappa\beta}$ by Riccati comparison, we obtain
$T-t_0\leq b_{\kappa\beta}$. Consequently,
\begin{equation}\label{box:lim1}
-d\sqrt{|\kappa|}\, \coth\left(\sqrt{|\kappa|}\,(T-r)\right)
\le -d\sqrt{|\kappa|}\,\coth\left(\sqrt{|\kappa|}\,(T-t_0)\right)\leq \beta.
\end{equation}
The left hand side of
\eqref{box:lim1} goes to $-d/(T-r)$ as $\kappa\to 0$,
so we may choose a $\kappa <0$ of small enough modulus such
that $\beta\geq-2d/(T-r)$.
Translating back to $\delta$ and recalling that we assumed $r\leq T/2$, we see that
we may choose $\delta>0$ small enough such that, for any $t_0$ as above,
$\beta=\theta(t_0)\geq -4d/T$.
So in total we have for sufficiently small $\delta$ that
\begin{align}\label{thetaest}
 \inf_{t\in[-r,r]}\theta(t)\geq
 \mathrm{min}(-\frac{4d}{T},-\sqrt{d\,\delta})=-\frac{4d}{T}.
\end{align}
To obtain the analogous estimate from above, consider the Jacobi tensor
$t\mapsto [A](-t)$ along $t\mapsto \gamma(-t)$. Then the corresponding past-directed
expansion $\theta_p(t)=-\theta(-t)$ satisfies a Riccati equation with the same bounds
as $\theta$, so the above arguments imply~\eqref{thetaest} also for $\theta_p$,
yielding the claim.
\end{proof}

We may now prove the existence of conjugate points along
causal geodesics in the smooth case under the weakened version of the Ricci bounds
derived in Lemma~\ref{approxlemma} from the strong energy condition
\eqref{eq:sec}, as well as the bounds on the curvature operator derived in Lemma
\ref{lem:2.3} from the $C^{1,1}$-genericity condition.

\begin{Proposition}\label{mel_prop}
Let $g$ be a smooth Lorentzian metric on $M$. Then given $c>0,\ C>0$, and $0<r<\frac{\pi}{4\sqrt{c}}$ there exist
$\delta=\delta(c,C,r)>0$, and $T=T(c,C,r)>0$ with the following property:

\noindent
If $\gamma$ is a causal geodesic and $t_0\in \R$ is
such that $\gamma$ is defined at least on $[t_0-T,t_0+T]$ and
\begin{itemize}
\item[(i)] $\Ric(\dot \gamma,\dot \gamma)\ge -\delta$ on $[t_0-T,t_0+T]$, as well as
\item[(ii)]
there exists a smooth parallel orthonormal frame $[E_1](t),\dots,$ $[E_{d}](t)$
for $[\dot{\gamma}]^\perp$ such that,
in terms of this frame the tidal force operator satisfies
$[R](t) > \diag(c,-C,\dots,-C)$ on $[t_0-r,t_0+r]$,
\end{itemize}
then $\gamma$ possesses a pair of conjugate points in $[t_0-T,t_0+T]$.
\end{Proposition}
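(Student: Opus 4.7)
The plan is to argue by contradiction: assume $\gamma$ has no conjugate points on $[t_0-T, t_0+T]$ and, after translation, take $t_0 = 0$. I would set up two ``bookend'' Jacobi tensor classes along $\gamma$, namely $[A^\pm]$ with $[A^\pm](\pm T) = 0$, and the associated self-adjoint tensor classes $[B^\pm] := [\dot A^\pm][A^\pm]^{-1}$. Applying Lemma~\ref{boxing} in both time directions and choosing $\delta \le \delta(T)$ appropriately, the expansions $\theta^\pm := \mathrm{tr}[B^\pm]$ satisfy $|\theta^\pm| \le 4d/T$ on $[-T/2, T/2] \supset [-r, r]$ (for $T>2r$). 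The standard Jacobi monotonicity in the vanishing parameter then gives $[B^-] \ge [B^+]$ on $(-T, T)$, and since $[B^-]-[B^+]$ is positive semi-definite with trace at most $8d/T$ on $[-r,r]$, its operator norm is bounded by $8d/T$ as well.

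Next, in the parallel orthonormal frame supplied by hypothesis~(ii), I would extract the $(1,1)$-entry of the matrix Riccati equation $\dot{[B^-]} + [B^-]^2 + [R] = 0$:
\[
\dot B^-_{11} \;=\; -\sum_{j=1}^{d}(B^-_{1j})^2 - R_{11} \;\le\; -(B^-_{11})^2 - c \qquad \text{on } [-r,r],
\]
using $R_{11} > c$ and self-adjointness of $[B^-]$. The substitution $F(t) := \arctan(B^-_{11}(t)/\sqrt c)$ linearises this to $\dot F \le -\sqrt c$, giving
\[
F(-r) - F(r) \;\ge\; 2r\sqrt c \;>\; 0 \qquad \text{uniformly in $T$.}
\]
Scalar Riccati comparison within $[-r, r]$, using $R_{11} \ge c$ and $R_{ii} \ge -C$ for $i \ge 2$ together with the absence of conjugate points, also supplies one-sided componentwise bounds of the form $B^\pm_{ii}(-r) \ge -M_0$ and $B^\pm_{ii}(r) \le M_0$, where $M_0 = M_0(c, C, r)$ is finite precisely because $r < \pi/(4\sqrt c)$ keeps the relevant cotangent-type comparison functions finite on $[-r, r]$.

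The decisive step is then to show that, as $T \to \infty$, the endpoint diagonal values $|B^-_{11}(\pm r)|$ actually tend to zero --- and hence $|F(\pm r)| \to 0$. The strategy is to combine the trace bound $|\theta^\pm| \le 4d/T$, the $8d/T$ operator sandwich between $[B^\pm]$, the a priori componentwise bounds at opposite endpoints, and a careful analysis of the full matrix Riccati equation over $[-T, T]$, so as to promote trace-level smallness to diagonal-entry smallness at $\pm r$. Once this is in hand, for $T = T(c, C, r)$ large enough we have $|F(-r)| + |F(r)| < 2r\sqrt c$, contradicting the Riccati estimate and producing a conjugate point of $\gamma$ in $[-T, T]$. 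The main obstacle is exactly this last step: bridging the gap between the trace-level control provided by Lemma~\ref{boxing} and a pointwise bound on the $(1,1)$ matrix entry of $[B^-]$ at $\pm r$. As the authors emphasise in the introduction, this refined matrix-Riccati estimate does not follow from the classical Rauch comparison theorem and constitutes the technical heart of this section; once it is in place, the rest of the argument is a clean $\arctan$-linearised scalar Riccati comparison.
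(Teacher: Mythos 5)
There is a genuine gap, and it is exactly the one you flag as the ``decisive step.'' You propose to show that $|B^-_{11}(\pm r)|\to 0$ as $T\to\infty$ by upgrading the trace bound from Lemma~\ref{boxing} to a bound on a single diagonal entry at the endpoints $\pm r$. But the trace of a self-adjoint matrix does not control an individual diagonal entry, and none of the auxiliary ingredients you list (the operator sandwich $[B^-]\ge[B^+]$ with small difference, the componentwise comparison bounds) closes this gap: the sandwich bounds the \emph{difference} $[B^-]-[B^+]$, not $[B^-]$ itself, and the scalar Riccati comparison only gives $B^-_{11}(-r)\gtrsim -\sqrt c\cot(2r\sqrt c)$ and $B^-_{11}(r)\lesssim \sqrt c\cot(2r\sqrt c)$, which are $O(1)$ bounds, not bounds tending to $0$. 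Outside $[-r,r]$ only $\mathrm{tr}[R]\ge -\delta$ is assumed, so the component $R_{11}$ can be made very negative (compensated by other diagonal entries of $R$), in which case $B^-_{11}(-r)$ stays bounded \emph{away} from $0$ uniformly in $T$. Hence the claim you would need is simply false, and the subsequent $\arctan$-linearised contradiction cannot be reached along this route.

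The paper avoids this problem by never trying to control a fixed matrix entry at the endpoints. Instead, it works with eigenvalues of $[B]$ and the Raychaudhuri equation on $[-r,0]$: since $\mathrm{tr}\,\sigma^2=\sum_i(\beta_i-\theta/d)^2$ controls the eigenvalue spread, and $\theta$ is small by Lemma~\ref{boxing}, one can choose an intermediate time $t_1\in[-r,0]$ at which the \emph{largest eigenvalue} $\beta_{\max}(t_1)$ is bounded by a quantity $f(\nu,\delta,r)$ that tends to $0$ as $T\to\infty$, $\delta\to 0$. From $[B](t_1)\le f\cdot\mathrm{id}$ one then runs a genuine matrix Riccati comparison against the diagonal data $\diag(c,-C,\dots,-C)$, extracts the explicit $\cot/\tanh$ solution, and shows that the smallest eigenvalue $\beta_{\min}$ becomes strictly negative on $[r/2,r]$. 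Feeding this back into the Raychaudhuri equation and integrating over $[r/2,r]$ produces a $\theta$-drop that is incompatible with the trace bound. This Raychaudhuri-driven localisation of a time $t_1$ where the eigenvalues are small is precisely the missing idea; your $\arctan$ trick for the $(1,1)$ scalar Riccati inequality on $[-r,r]$ is correct in isolation, but it is applied to the wrong quantity (an entry, not an eigenvalue) and at the wrong times (the endpoints rather than a well-chosen intermediate time).
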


\begin{proof}
Clearly we may assume that $t_0=0$. Now suppose, to the contrary, that no matter
how small $\delta>0$ or how big $T>0$ are chosen, there exists a $\gamma$
satisfying (i) and (ii) without conjugate points in $[-T,T]$. Then for any such
choice there is a unique Jacobi tensor class $[A]$ along $\gamma$ (depending on
$T$ and $\delta$) with $[A](-T)=0$ and $[A](0)=\mathrm{id}$. With
$[E_1](t),\dots,$ $[E_{d}](t)$ as in (ii), henceforth we will consider all
linear endomorphisms of $[\dot \gamma]^\perp$ as matrices in this basis. Set
$[\tilde R](t):=\diag(c,-C,\dots,-C)$. Then by (ii), $[\tilde R](t)<[R](t)$ on
$[-r,r]$.
\medskip

Set $[B]:=[\dot A]\cdot [A]^{-1}$. Then (cf., e.g.,~\cite[ch.\ 12]{BEE}) $[B]$
is self-adjoint and satisfies the matrix Riccati equation
\begin{equation}\label{riccati} [\dot B] + [B]^2 + [R] = 0. \end{equation}
Denote by $[\tilde B]$ the solution to~\eqref{riccati}, with $[R]$ replaced by
$[\tilde{R}]$ and initial value prescribed at some $t_1\in [-r,r]$.
We will show that we can find a $t_1 \in [-r, r]$ and an initial value $[\tilde B](t_1)$ satisfying $[\tilde B](t_1) \ge [B](t_1)$. Once this is established then, since $[R] > [\tilde R]$ on $[-r, r]$, the Riccati comparison theorem of~\cite{EH} implies that $[B](t) \le [\tilde B](t)$ for all $t \in [t_1, r]$.

We will in fact seek $t_1$ in $[-r,0]$ and $[\tilde B](t_1)$ in the form
$\tilde \beta(t_1)\cdot \mathrm{id}$, where $\tilde \beta(t_1)$ is greater or
equal the largest eigenvalue of $[B](t_1)$. Since we can without loss of
generality assume that $T>2r$ and that $\delta<\delta(T)$,
our assumption on the absence of conjugate points
in conjunction with Lemma~\ref{boxing} yields for the expansion
$\theta=\mathrm{tr}([B])$:
\begin{equation}\label{thetabound}
\max_{t\in [-r,r]}|\theta(t)| \le \nu \equiv \nu(T):=\frac{4d}{T}.
\end{equation}
Also, $\theta$ satisfies the Raychaudhuri equation
\begin{equation}
\dot \theta + \frac{1}{d}\theta^2 + \mathrm{tr}(\sigma^2) + \mathrm{tr}([R]) =
0,
\end{equation}
where, as before, $\sigma=[B]-\frac{1}{d}\theta\cdot
\mathrm{id}$. Denoting the eigenvalues of $[B]$ by $\beta_i$ ($1\le i\le d$),
$\sigma$ has eigenvalues $\beta_i - \frac{\theta}{d}$, and since
$\mathrm{tr}([R])\ge -\delta$ by assumption we find
\begin{equation}\label{thetadotest}
\dot \theta \le \delta - \sum_{i=1}^{d}\Big(\beta_i-\frac{\theta}{d}\Big)^2
\le \delta - \Big(\beta_{\max}(t_1)-\frac{\theta(t_1)}{d}\Big)^2
=:-l.
\end{equation}
Here, $\beta_{\max}$ is the maximum eigenvalue of $[B]$ and $t_1\in [-r,0]$ is
chosen such that $\left|\beta_{\max}-\frac{\theta}{d}\right|$ attains its
minimum on $[-r,0]$ in $t_1$. Using~\eqref{thetabound}, we see
\[
-\nu \le \theta(0) \le -l r +\theta(-r) \le -l r + \nu,
 \quad \text{which implies}\quad l \le \frac{2\nu}{r}. \]
Combining this with~\eqref{thetadotest} gives
\begin{equation}
\beta_{\max}(t_1) \le \sqrt{\left( \frac{2\nu}{r}+\delta \right)} + \frac{\theta(t_1)}{d}\le
\sqrt{\left( \frac{2\nu}{r}+\delta \right)} + \frac{\nu}{d} =: f(\nu,\delta,r)\equiv f.
\end{equation}
Consequently, we may set $\tilde \beta(t_1):=f(\nu,\delta,r)$ and $[\tilde
B](t_1):=f(\nu,\delta,r)\cdot\mathrm{id}$ to indeed achieve that $[B](t)\le
[\tilde B](t)$ on $[t_1,r]$.

Since both $[\tilde R]$ and $[\tilde B](t_1)$ are diagonal, the Riccati
equation for $[\tilde B]$ decouples and has the
explicit solution
\[
[\tilde B](t) = \frac{1}{d}\diag(H_{c,f}(t),H_{-C,f}(t),\dots,H_{-C,f}(t)).
\]
Here (cf.~\cite{TG,G})
\[
H_{c,f}(t) = d \sqrt{c}\cot(\sqrt{c}(t-t_1) + \mathrm{arccot}(f/\sqrt{c})),
\]
and
\[
H_{-C,f}(t)=d \sqrt{C}\tanh\big(\sqrt{C}(t-t_1)+\mathrm{artanh}(f/\sqrt{C})\big),
\]
and due to our assumption $0<r<\frac{\pi}{4\sqrt{c}}$ these functions are defined on $[t_1,r]$
(for $f$ sufficiently small).
As was noted above, since $[\tilde R](t)<[R](t)$ for all $t\in [-r,r]$ and
$[B](t_1)\le [\tilde B](t_1)$, Riccati comparison implies
$[B](t)\le [\tilde B](t)$ for all $t\in [t_1,r]$. In particular, for the
smallest eigenvalue $\beta_{\min}$ of $[B]$ we obtain
\begin{equation}\label{betaminest}
\beta_{\min}(t) \le \frac{1}{d}H_{c,f}(t) \qquad (t\in [t_1,r]).
\end{equation}
We are now going to show that for $\delta$ small enough and $T$ large enough, $H_{c,f}(t)<0$ for $t\in [\frac{r}{2},r]$.
In fact, since $H_{c,f}$ is monotonically decreasing, it suffices to secure that $H_{c,f}(\frac{r}{2})<0$.
Set $k:=\mathrm{arccot}(f/\sqrt{c})<\frac{\pi}{2}$. Then $H_{c,f}(\frac{r}{2})<0$ if and only if
\begin{equation}\label{kequ}
\sqrt{c}\Big(\frac{r}{2}-t_1\Big) + k \in \Big(\frac{\pi}{2},\pi\Big).
\end{equation}
To achieve this, first note that $3r\sqrt{c}<\pi$, so that
$\sqrt{c}(\frac{r}{2}-t_1)<\frac{\pi}{2}$. Since $k<\frac{\pi}{2}$,~\eqref{kequ}
can be satisfied by choosing $\delta$ and $\nu$ so small that
$\sqrt{c}(\frac{r}{2}-t_1) + k >\frac{\pi}{2}$. Shrinking $\nu$ further, we can
also achieve that $H_{c,f}(\frac{r}{2})<-\nu$, so altogether we obtain for $t\in
[\frac{r}{2},r]$:
\[
\beta_{\min}(t) \le \frac{1}{d} H_{c,f}\Big(\frac{r}{2}\Big) <
-\frac{\nu}{d}\le \frac{\theta(t)}{d}.
\]
By~\eqref{thetadotest} this gives
\[
\dot{\theta} \le -\Big(\beta_{\min}-\frac{\theta}{d}\Big)^2 + \delta
\le - \Big(\frac{1}{d}\Big(H_{c,f}\Big(\frac{r}{2}\Big)+\nu\Big)\Big)^2 +
\delta
\]
on $[\frac{r}{2},r]$. Consequently,
\[
-2\nu \le \int_{\frac{r}{2}}^{r}\dot\theta(t)\,dt\le
-\frac{r}{2}\Big[\Big(\frac{1}{d}\Big(H_{c,f} \Big(\frac{r}{2}\Big) +
\nu\Big)\Big)^2-\delta\Big],
\]
and thereby
\[
-d\sqrt{\left( \frac{4\nu}{r}+\delta \right)} - \nu \le H_{c,f}\Big(\frac{r}{2}\Big).
\]
However, as $\delta\searrow 0$ and $T\to \infty$, the left hand side of this inequality tends to $0$,
while the right hand side has the limit $d\sqrt{c}\cot\Big(\sqrt{c}\Big(\frac{r}{2}-t_1\Big)+\frac{\pi}{2}\Big)<0$,
a contradiction.
\end{proof}
\section{Maximising geodesics}
\label{sec:max_geods}
We will next prove that in the $C^{1,1}$-case under suitable causality
conditions complete causal geodesics stop being maximising, provided the strong
energy condition~\eqref{eq:sec} and the genericity condition (Definition~\ref{genericity_c11}) hold. We will do so separately in
the timelike and in the null case with the respective causality conditions
adapted to the later use of the corresponding statements in the proof of the
main theorem.

\begin{Theorem}\label{timelikenotmax}
Let $g\in C^{1,1}$ be a globally hyperbolic Lorentzian metric on $M$ that
satisfies the timelike convergence condition. Moreover, suppose that the genericity condition holds
along any timelike geodesic. Then no complete timelike geodesic $\gamma \colon \R\to M$ is globally maximising.%
\footnote{Recall that a timelike geodesic is~\emph{globally maximising\/} if it maximises between any two of its points.}
\end{Theorem}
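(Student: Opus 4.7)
The strategy is proof by contradiction via the regularisation framework of Sections~\ref{reg_sec} and~\ref{sec:conjugate_points}. Suppose that $\gamma\colon\R\to M$, parametrised to $g$-unit speed, is a complete timelike geodesic that is globally maximising. The plan is to show that, for the narrower-cone smooth approximants $\check g_\eps$, any $\check g_\eps$-timelike geodesic $\gamma_\eps$ close to $\gamma$ in $C^1$ develops a pair of conjugate points in a fixed compact interval and hence fails to be $\check g_\eps$-maximising past them; one then transfers this failure to $g$ using Lemma~\ref{d_conv} together with the global hyperbolicity of $g$.

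In detail, let $t_0$ be a time at which the genericity condition of Definition~\ref{genericity_c11} holds along $\gamma$; translating, take $t_0 = 0$. For each sufficiently small $\eps$, choose $\gamma_\eps$ to be the $\check g_\eps$-geodesic with $\gamma_\eps(0) = \gamma(0)$ and $\dot\gamma_\eps(0)$ a $\check g_\eps$-unit timelike vector approximating $\dot\gamma(0)$. Continuous dependence of ODE solutions yields $\gamma_\eps \to \gamma$ in $C^1$ on every compact subinterval of $\R$. Lemma~\ref{lem:2.3} then supplies constants $r, c, C > 0$ and a parallel orthonormal frame along $\gamma_\eps$ in which $[R_\eps](t) > \diag(c,-C,\ldots,-C)$ on $[-r,r]$ for $\eps$ small, while Lemma~\ref{approxlemma}(i) upgrades the timelike convergence condition on $g$ to the pointwise lower bound $\Ric_{\check g_\eps}(\dot\gamma_\eps,\dot\gamma_\eps) \ge -\delta$ on any prescribed compact interval, provided $\eps$ is small enough. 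Applying Proposition~\ref{mel_prop} with $\delta, T$ matched to the above $c, C, r$, we conclude that, for all $\eps$ small, $\gamma_\eps$ carries a pair of conjugate points in $[-T, T]$. Since $\check g_\eps$ is smooth, $\gamma_\eps|_{[-T_1, T_1]}$ therefore fails to be $\check g_\eps$-maximising for every $T_1 > T$; equivalently,
\begin{equation*}
d_{\check g_\eps}(\gamma_\eps(-T_1), \gamma_\eps(T_1)) > L_{\check g_\eps}\bigl(\gamma_\eps|_{[-T_1, T_1]}\bigr).
\end{equation*}

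To propagate this to $g$, fix $p := \gamma(-T_1 - 1)$ and $q := \gamma(T_1 + 1)$. For $\eps$ small, the $g$-timelike arcs of $\gamma$ from $p$ to $\gamma(-T_1)$ and from $\gamma(T_1)$ to $q$ are strictly $g$-timelike and hence also $\check g_\eps$-timelike, so $p \ll_{\check g_\eps} \gamma_\eps(-T_1)$ and $\gamma_\eps(T_1) \ll_{\check g_\eps} q$. By the reverse triangle inequality for the time-separation function,
\begin{equation*}
d_{\check g_\eps}(p, q) \ge d_{\check g_\eps}(p, \gamma_\eps(-T_1)) + d_{\check g_\eps}(\gamma_\eps(-T_1), \gamma_\eps(T_1)) + d_{\check g_\eps}(\gamma_\eps(T_1), q).
\end{equation*}
As $\eps \to 0$, Lemma~\ref{d_conv} gives $d_{\check g_\eps}(p, q) \to d_g(p, q) = 2T_1 + 2$ (using global maximality of $\gamma$), while the middle summand exceeds $L_{\check g_\eps}(\gamma_\eps|_{[-T_1, T_1]}) \to 2T_1$; combined with the outer two summands both tending to~$1$, this forces $d_g(p, q) > 2T_1 + 2$, contradicting global maximality.

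The principal obstacle is maintaining the \emph{strict} inequality through the limit $\eps \to 0$: Lemma~\ref{d_conv} only gives convergence of $d_{\check g_\eps}$ to $d_g$, and the gap $d_{\check g_\eps}(\gamma_\eps(-T_1), \gamma_\eps(T_1)) - L_{\check g_\eps}(\gamma_\eps|_{[-T_1, T_1]})$ need not be uniformly bounded below in $\eps$. To close this gap, one can extract, via a limit curve argument in the globally hyperbolic $(M, g)$, a $g$-causal curve $\sigma$ from $\gamma(-T_1)$ to $\gamma(T_1)$ from the $\check g_\eps$-maximising geodesics $\sigma_\eps$ realising $d_{\check g_\eps}(\gamma_\eps(-T_1), \gamma_\eps(T_1))$, whose $g$-length is at least $2T_1$; if $\sigma$ were $g$-maximising but distinct from $\gamma|_{[-T_1, T_1]}$, the standard broken-geodesic argument would show $\gamma$ ceases to be maximising just past $\gamma(T_1)$, a contradiction. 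Alternatively, one may revisit the Riccati analysis underlying Proposition~\ref{mel_prop} to extract a uniform lower bound on the length defect depending only on $c, C, r$. Either route, combined with the tools of Section~\ref{reg_sec}, closes the argument.
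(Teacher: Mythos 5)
Your first-pass argument — fixing a $\check g_\eps$-geodesic $\gamma_\eps$ with the same initial data at $t=0$ as $\gamma$, showing via Lemmas~\ref{lem:2.3},~\ref{approxlemma} and Proposition~\ref{mel_prop} that it develops conjugate points in $[-T,T]$, and then trying to transfer the resulting length defect to $g$ through Lemma~\ref{d_conv} — genuinely fails for exactly the reason you identify: Proposition~\ref{mel_prop} gives no quantitative lower bound on $d_{\check g_\eps}(\gamma_\eps(-T_1),\gamma_\eps(T_1)) - L_{\check g_\eps}(\gamma_\eps|_{[-T_1,T_1]})$, so the strict inequality can and in general will collapse to equality in the limit, yielding only $d_g(p,q)\ge 2T_1+2$. (A further, smaller wrinkle: Lemma~\ref{d_conv} converges $d_{\check g_\eps}$ at \emph{fixed} points, whereas your outer two summands vary the endpoints $\gamma_\eps(\pm T_1)$ with $\eps$, so even the convergence of those terms needs an extra argument.) Your ``alternative'' route of extracting a uniform length-defect bound from the Riccati analysis is not carried out and would be a genuinely different, and harder, estimate than what Proposition~\ref{mel_prop} provides.

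Your proposed repair is the right idea but is left incomplete at precisely the step that contains the substance of the theorem. You suggest replacing the fixed-initial-data $\gamma_\eps$ by $\check g_\eps$-\emph{maximising} geodesics $\sigma_\eps$ joining (approximations of) $\gamma(-T_1)$ and $\gamma(T_1)$, extracting a $g$-causal limit curve $\sigma$, and invoking the broken-geodesic argument ``if $\sigma$ is distinct from $\gamma$.'' But you say nothing about the other, unavoidable case $\sigma = \gamma$, which cannot be excluded a priori. That case is exactly where the paper's proof does its real work: it chooses the $\gamma_\eps$ to be $\check g_\eps$-maximising from the outset, shows (after normalising parametrisations and using non-total imprisonment of globally hyperbolic $C^{1,1}$-spacetimes) that a subsequence converges in $C^1$ to some maximising $g$-geodesic $\gamma_w$ with the same endpoints, and then splits: if $\dot\gamma_w(-\tilde T)\ne\dot\gamma(-\tilde T)$ the broken-geodesic argument gives a contradiction, while if they are equal then $\gamma_\eps\to\gamma$ in $C^1([-T,T])$, so Lemmas~\ref{lem:2.3} and~\ref{approxlemma} plus Proposition~\ref{mel_prop} force a pair of conjugate points on $\gamma_\eps|_{[-T,T]}$ — contradicting the fact that these $\gamma_\eps$ were chosen maximising on $[-\tilde T,\tilde T_\eps]\supset[-T,T]$. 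To complete your repair you would have to reintroduce the conjugate-point machinery \emph{for the $\sigma_\eps$} in the case $\sigma=\gamma$ (verifying $C^1$-convergence $\sigma_\eps\to\gamma$, which in turn needs the limit-curve and imprisonment arguments); at that point your repaired proof coincides with the paper's. The conjugate-point argument you ran on the fixed-initial-data $\gamma_\eps$ is then superfluous and should be dropped.
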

\begin{proof}
Let $\gamma \colon \R\to M$ be a complete geodesic and suppose that $\gamma$ were
maximising between any two of its points. We approximate $g$ from the inside by
a net $\check{g}_\eps$, so each $\check g_\eps$ is globally hyperbolic as well.
Without loss of generality assume that $\gamma$ satisfies the genericity
condition at $t_0=0$. Then by Lemma~\ref{lem:2.3} there exist $c>0$, $C>0$ and $0<r<\frac{\pi}{4\sqrt{c}}$
such that, whenever $\gamma_\eps$ is a net of $\check g_\eps$-geodesics that
converge to $\gamma$ in $C^1$, there exists some $\eps_0>0$ such that, for any
$\eps<\eps_0$, condition (ii) of Proposition~\ref{mel_prop} is satisfied for
$R_\eps$.

Choose $\delta=\delta(c,C,r)>0$ and $T=T(c,C,r)>0$ as in Proposition~\ref{mel_prop} and let $\tilde T>T$. Since $\check g_\eps$ is globally hyperbolic,
for any $\eps>0$ sufficiently small there exists a maximising $\check g_\eps$-geodesic $\gamma_\eps$ from $\gamma(-\tilde T)$ to $\gamma(\tilde T)$ (cf.~\cite[Prop.\ 1.21 and Th.\ 1.20]{CG}). We choose the parametrisation such that $\gamma_{\eps}(-\tilde{T})=\gamma(-\tilde{T})$ and $v:=\dot \gamma(-\tilde T)$ and $v_\eps := \dot\gamma_\eps(-\tilde T)$ have the same $h$-norm for a fixed Riemannian background metric $h$. We define $\tilde{T}_\eps $ by $\gamma_{\eps}(\tilde{T}_\eps)=\gamma(\tilde{T})$, so $\gamma_{\eps}|_{[-\tilde{T},\tilde{T}_\eps ]}\sse J^-(\gamma(\tilde{T}))\cap J^+(\gamma(-\tilde{T}))$. Therefore there is a subsequence $\eps_k$ such that $v_{\eps_k}$ converges to a vector $w$ with $\|w\|_h=\|v\|_h$ and $\tilde{T}_{\eps_k} \to b \in [-\tilde{T},\infty]$.

Consequently, $\gamma_{\eps_k}$
converges in $C^1$ to the (future) inextendible $g$-geodesic $\gamma_w \colon [-\tilde{T},b_0)\to M$ with $\gamma_w(-\tilde T)=\gamma(-\tilde T)$ and $\dot \gamma_w(-\tilde T)=w$. Since our spacetime is non-totally imprisoning (which follows from global hyperbolicity by the same proof as for smooth metrics,~\cite[Lem.~14.13]{ON83}), this geodesic must leave the compact set $J^-(\gamma(\tilde{T}))\cap J^+(\gamma(-\tilde{T}))$, hence $b_0>b$ and in particular $b\neq \infty$ and $\gamma_w (b)=\gamma(\tilde{T})$. Also, $\gamma_w|_{[0,b]}$ must be maximising since the distances converge by Lemma~\ref{d_conv}.
We now distinguish two cases:

If $w\not=v$, then $\gamma_w$ is a maximising geodesic from $\gamma(-\tilde T)$ to $\gamma(\tilde T)$ different from $\gamma$,
so $\gamma$ can't be maximising beyond $\tilde T$, contradicting our assumption.

If, on the other hand, $v=w$, then $\gamma_w=\gamma$ and $b=\tilde{T}$. Let $K$ be a compact neighbourhood of $\gamma([-T,T])$.
Since $\gamma_{\eps_k}\to \gamma$ in $C^1([-T,T])$, there exist $k_0\in \N$, $\tilde C>0$, and $\kappa<0$ such that
for all $k\ge k_0$ we have $\gamma_{\eps_k}([-T,T])\sse K$, as well as
$\|\dot \gamma_{\eps_k}(t)\|_h\le \tilde C$ and $g(\dot \gamma_{\eps_k}(t),\dot \gamma_{\eps_k}(t))<\kappa$ for all
$t\in [-T,T]$. Lemma~\ref{approxlemma}(i) therefore implies that
$R_{\eps_k}(\dot \gamma_{\eps_k}(t),\dot \gamma_{\eps_k}(t))
\ge -\delta(c,C,r)$ on $[-T, T]$ for $k$ sufficiently large. This shows that $\gamma_{\eps_k}$ also satisfies condition
(i) from Proposition~\ref{mel_prop} for $k$ large. But then any such $\gamma_{\eps_k}$ incurs a pair of conjugate points
within $[-T,T]$, contradicting the fact that it was supposed to be maximising even on $[-\tilde{T},\tilde{T}_{\eps_k}]\supset [-T,T]$ since $\tilde{T}_{\eps_k}\to \tilde{T}$.
\end{proof}

	The proof of the previous Theorem uses Proposition~\ref{mel_prop} to guarantee the existence of conjugate points for $\check{g}_\eps$-geodesics close to $\gamma $, but the essence of the argument can be formulated in a much more general way using cut functions. Let $\mathcal{T} \sse TM$ be the set of all future directed timelike vectors, then one defines the timelike cut function $s \colon \mathcal{T}\to \R$ by
	\begin{equation}
s(v):=\sup \{t:\,L(\gamma_v|_{\left[0,t\right]})=d(\gamma(0),\gamma(t)) \}	.
	\end{equation}
This function clearly depends on the metric and so a natural question is how, given a $C^{1,1}$-metric $g$, the $\check{g}_{\eps_k}$-cut functions $s_k$ relate to the $g$-cut function $s$. The following theorem shows that at least for a globally hyperbolic spacetime a uniform upper bound on the $s_k$ must also be an upper bound for $s$.
\begin{Theorem}\label{cutfunctionapprox}
	Let $(M,g)$ be a spacetime with a globally hyperbolic $C^{1,1}$-metric and let $g_k=\check{g}_{\eps_k}$.
	Let $U\sse \mathcal{T}$ be open such that $U\sse\mathcal{T}_k$ for large $k$. If $s_k|_U \leq T$ then $s|_U\leq T$.
\end{Theorem}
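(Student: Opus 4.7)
The plan is to argue by contradiction, running the same approximation--compactness scheme that underpins the proof of Theorem~\ref{timelikenotmax} but replacing the final conjugate-point step with a direct appeal to the hypothesis $s_k|_U\le T$.

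Suppose there exists $v_0\in U$ with $s(v_0)>T$, and fix $T_0\in(T,s(v_0))$. Set $p:=\gamma_{v_0}(0)$ and $q:=\gamma_{v_0}(T_0)$, so that $\gamma_{v_0}|_{[0,T_0]}$ is a $g$-maximising timelike geodesic from $p$ to $q$. Since $g_k\prec g$, this curve is $g_k$-timelike for large $k$, so in particular $p\ll_{g_k}q$; moreover each $(M,g_k)$ is globally hyperbolic (narrower cones than $g$, as noted in the proof of Theorem~\ref{timelikenotmax}). Hence for $k$ large there are $g_k$-maximising timelike geodesics $\sigma_k$ from $p$ to $q$, which we parametrise so that $\|\dot\sigma_k(0)\|_h=\|v_0\|_h$ and $\sigma_k(b_k)=q$.

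The curves $\sigma_k$ are confined to a $g$-compact set (by $g_k\prec g$ and global hyperbolicity of $g$), and non-total imprisonment bounds $\{b_k\}$ uniformly. Passing to a subsequence, $\dot\sigma_k(0)\to w\in T_pM$ with $\|w\|_h=\|v_0\|_h$, $b_k\to b\in[0,\infty)$, and by continuous dependence of $C^{1,1}$-geodesics on metric and initial data, $\sigma_k\to\gamma_w$ in $C^1$ uniformly on compact intervals; in particular $\gamma_w(b)=q$. Lemma~\ref{d_conv} combined with the maximising property of $\sigma_k$ then gives
\begin{equation*}
L_g(\gamma_w|_{[0,b]})=\lim_{k\to\infty}L_{g_k}(\sigma_k|_{[0,b_k]})=\lim_{k\to\infty}d_{g_k}(p,q)=d_g(p,q)>0,
\end{equation*}
so $\gamma_w|_{[0,b]}$ is itself a $g$-maximising (and strictly timelike) geodesic from $p$ to $q$.

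The proof then splits into two cases. If $w=v_0$, then $\gamma_w=\gamma_{v_0}$ and injectivity of $\gamma_{v_0}$ on $[0,s(v_0))$ forces $b=T_0$; openness of $U$ together with $\dot\sigma_k(0)\to v_0$ gives $\dot\sigma_k(0)\in U$ for large $k$, and the maximising property of $\sigma_k|_{[0,b_k]}$ yields $s_k(\dot\sigma_k(0))\ge b_k\to T_0>T$, contradicting $s_k|_U\le T$. If $w\neq v_0$, then (both being future-directed timelike vectors at $p$ with the same $h$-norm) $w$ and $v_0$ are linearly independent, so $\gamma_w|_{[0,b]}$ and $\gamma_{v_0}|_{[0,T_0]}$ are geometrically distinct $g$-maximising timelike geodesics from $p$ to $q$; prolonging $\gamma_{v_0}$ slightly past $q$ (possible because $T_0<s(v_0)$) and concatenating with $\gamma_w$ produces a broken timelike curve which, after corner-smoothing, is strictly longer than $\gamma_{v_0}|_{[0,T_0+\eta]}$, contradicting maximality of $\gamma_{v_0}$ past $T_0$. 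I expect the main technical point to be verifying that this corner-smoothing inequality from smooth Lorentzian geometry survives in the $C^{1,1}$-setting; this should follow by a local variational argument in a $C^{1,1}$-convex normal neighbourhood, or alternatively by again passing to $g_k$-approximations and using $L_{g_k}\to L_g$.
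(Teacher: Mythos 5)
Your proposal is correct and follows essentially the paper's route: the paper's proof of this theorem is literally ``the same arguments as in Theorem~\ref{timelikenotmax}'', and what you have done is re-run that argument with the final conjugate-point step replaced by the hypothesis $s_k|_U\le T$, exactly as the paper intends. Two remarks are worth making. First, the justification ``Since $g_k\prec g$, this curve is $g_k$-timelike for large $k$'' is backwards: $g_k\prec g$ means the $g_k$-lightcones are \emph{narrower} than those of $g$, so a $g$-timelike vector is a priori \emph{not} $g_k$-timelike. What actually gives the conclusion is the uniform $C^0$-convergence $g_k\to g$ of Proposition~\ref{CGapprox}: on the compact parameter interval $[0,T_0]$ the curve is uniformly strictly $g$-timelike, so it is eventually $g_k$-timelike. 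The conclusion you need is correct, but it should be attributed to convergence, not to $\prec$. Second, in the $w\neq v_0$ case your corner-smoothing argument is a legitimate alternative, but it is also exactly the place where you (rightly) flag a potential $C^{1,1}$-regularity issue. The paper's argument avoids this entirely: if $\gamma_{v_0}$ remained maximising on $[0,T_0+\eta]$, then the concatenation $\gamma_w|_{[0,b]}\,\gamma_{v_0}|_{[T_0,T_0+\eta]}$ would also be a maximising causal curve from $p$ to $\gamma_{v_0}(T_0+\eta)$ (same length), hence by \cite[Thm.~1.23]{M} a (reparametrised) unbroken geodesic, and uniqueness of geodesics for $C^{1,1}$-metrics would force $\gamma_w=\gamma_{v_0}$, contradicting $w\neq v_0$. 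This sidesteps any need to smooth corners and is the cleaner route in this regularity.
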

\begin{proof}
	The proof uses the same arguments as in Theorem~\ref{timelikenotmax}: Let $v\in U$, $\tilde{T}>T$ and assume, for the sake of contradiction, that $s(v)>\tilde{T}$. Then $\gamma_v$ maximises the distance between $\gamma_v(0)$ and $\gamma_v(\tilde{T})$ and even remains maximising a bit further. Choosing $\gamma_k$ as in the previous proof, the same arguments give a sequence $\gamma_k $ that converges in $C^1$ to $\gamma $ (in particular, $\dot{\gamma}_k(0)\in U$ for large $k$) and is maximising on $[0,\tilde{T}_k]\supset [0,T]$ for large $k$, but this contradicts $s_k|_U\leq T$.
\end{proof}

There is an analogous result to Theorem~\ref{timelikenotmax} for null instead of timelike curves. However, assuming global hyperbolicity in the null case renders such a statement mostly useless for the proof of the Hawking--Penrose Theorem because inextendible yet maximising null curves need to be excluded everywhere in the spacetime and not just in some globally hyperbolic subset (contrary to timelike curves, which will appear only briefly at the end of the proof when one already works in some Cauchy development). Fortunately in the null case there is a sharper distinction between maximising and non-maximising geodesics because a null geodesic stops maximising if and only if it leaves the boundary of a lightcone, and one can exploit the structure of such boundaries to show that inextendible null geodesics which are not closed cannot be maximizing. However, the methods of the following proof fail for closed null curves (which are not well behaved with respect to approximation), so these had to be excluded in the statement of Theorem~\ref{HPinC1,1} by assuming that the spacetime is causal instead of merely chronological in the classical theorem.

\begin{Theorem}\label{3.4}
 Let $g\in C^{1,1}$ be a Lorentzian metric on $M$ such that
 $(M,g)$ is causal.  Moreover, suppose that the null convergence
 condition holds and that the genericity condition is satisfied
 along any null geodesic. Then no complete null geodesic $\gamma \colon \R\to M$ is
 globally maximising.
\end{Theorem}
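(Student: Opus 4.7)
The plan is to argue by contradiction, following the architecture of Theorem~\ref{timelikenotmax} but replacing its use of global hyperbolicity by the structure of the $g$-null cone boundary. Suppose that $\gamma\colon\R\to M$ is a complete, globally maximising $g$-null geodesic; after a shift we may assume that the genericity condition holds at $t_0=0$. Lemma~\ref{lem:2.3} in its null form (cf.\ Remark~\ref{rem_tl_null}) then produces constants $c,C,r>0$, and Proposition~\ref{mel_prop} in turn yields further constants $T>0$ and $\delta>0$. Fix some $\tilde T>T$ and set $p:=\gamma(-\tilde T)$.

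Since $\check g_\eps\prec g$ and $\check g_\eps\to g$ in $C^1$, the $\check g_\eps$-null cone at $p$ contains future-directed vectors $v_\eps$ with $v_\eps\to\dot\gamma(-\tilde T)$; let $\gamma_\eps$ be the $\check g_\eps$-null geodesic determined by $(p,v_\eps)$. By continuous dependence of geodesics on metric and initial data, $\gamma_\eps\to\gamma$ in $C^1$ on $[-\tilde T,\tilde T]$ (along a subsequence), with each $\gamma_\eps$ remaining in a common compact neighbourhood of $\gamma([-\tilde T,\tilde T])$. Lemma~\ref{approxlemma}(ii)---applicable precisely because $\dot\gamma_\eps$ is $\check g_\eps$-null with $g$-null limit $\dot\gamma$---gives $\Ric_{\check g_\eps}(\dot\gamma_\eps,\dot\gamma_\eps)\ge-\delta$ on $[-T,T]$ for $\eps$ small, while Lemma~\ref{lem:2.3} supplies the required tidal bound in a parallel orthonormal frame along $\gamma_\eps$. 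Proposition~\ref{mel_prop} therefore produces a pair of $\check g_\eps$-conjugate points of $\gamma_\eps$ inside $[-T,T]$, so that $\gamma_\eps|_{[-\tilde T,\tilde T]}$ is not $\check g_\eps$-maximising as a null geodesic. Consequently there exists a $\check g_\eps$-timelike curve $c_\eps$ from $p$ to $\gamma_\eps(\tilde T)$; since $\check g_\eps\prec g$, $c_\eps$ is automatically $g$-timelike, and in particular $\gamma_\eps(\tilde T)\in I^+(p)$ (for the metric $g$).

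The main obstacle---and the genuinely new ingredient relative to the timelike case---is to promote this $\eps$-dependent $g$-chronological relation to $p\ll\gamma(\tilde T)$: a naive limit gives only $\gamma(\tilde T)\in\overline{I^+(p)}=J^+(p)$, which is fully consistent with global maximality. My plan is to extract a uniform positive lower bound on the Lorentzian length $L_\eps:=L_{\check g_\eps}(c_\eps)$ by a second-variation/Jacobi-field argument on the smooth metric $\check g_\eps$ past the conjugate pair. Since the first $\check g_\eps$-conjugate point lies at parameter $\le T$ while $\gamma_\eps(\tilde T)$ is at parameter $\tilde T>T$, and since the curvatures $R_{\check g_\eps}$ are uniformly bounded by Proposition~\ref{CGapprox}(ii), such an argument should yield $L_\eps\ge L_*>0$ with $L_*$ independent of $\eps$. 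A limit-curve argument (Arzel\`a--Ascoli in the ambient compact neighbourhood, combined with the $C^1$-convergence $\check g_\eps\to g$ used to pass to the limit in the length functional) then gives a $g$-causal curve $c$ from $p$ to $\gamma(\tilde T)$ with $L_g(c)\ge L_*>0$, hence $d_g(p,\gamma(\tilde T))>0$, i.e.\ $\gamma(\tilde T)\in I^+(p)$---contradicting the global maximality of $\gamma$.

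I expect the hard step to be exactly this uniform-lower-bound/limit-curve construction: all the smooth analytic input (Lemmas~\ref{approxlemma} and~\ref{lem:2.3}, Proposition~\ref{mel_prop}) is already in place, so what remains is a null-specific robustness estimate saying that the focusing produced by the conjugate pair of $\gamma_\eps$ survives the $C^1$-limit $\check g_\eps\to g$. It is also at this step that non-closedness of $\gamma$ (guaranteed by causality of $(M,g)$ together with completeness) becomes essential, since closed null curves behave badly under metric approximation---precisely the phenomenon that the authors cite to justify the ``causal'' rather than ``chronological'' hypothesis in Theorem~\ref{HPinC1,1}.
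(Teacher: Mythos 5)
Your proposal diverges from the paper's proof in a fundamental way, and the divergence creates a genuine gap precisely at the step you yourself flag as ``the hard step.''

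You construct the approximating $\check g_\eps$-null geodesics $\gamma_\eps$ from initial data at $p=\gamma(-\tilde T)$, obtain conjugate pairs via Proposition~\ref{mel_prop}, and conclude $\gamma_\eps(\tilde T)\in I^+_{\check g_\eps}(p)\subseteq I^+_g(p)$. You then correctly observe that passing to the limit only gives $\gamma(\tilde T)\in\overline{I^+(p)}$, which is useless, and you propose to fix this by extracting a uniform lower bound $L_{\check g_\eps}(c_\eps)\ge L_\ast>0$ on the length of a timelike detour. This is where the argument breaks. The existence of a conjugate pair on $\gamma_\eps$ is a qualitative fact; turning it into a quantitative, $\eps$-uniform lower bound on $d_{\check g_\eps}(p,\gamma_\eps(\tilde T))$ is a substantially harder problem than any single lemma in Sections~\ref{reg_sec}--\ref{sec:conjugate_points}, and nothing in the proposal indicates how the index-form negativity would be quantified uniformly in $\eps$. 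Worse, one should be suspicious that such a bound even holds in the generality you need: since $\check g_\eps\prec g$, one has $d_{\check g_\eps}(p,\,\cdot\,)\le d_g(p,\,\cdot\,)$, and the target points $\gamma_\eps(\tilde T)$ converge to the boundary point $q=\gamma(\tilde T)\in\partial I^+(p)$, so $d_g(p,\gamma_\eps(\tilde T))$ is being squeezed towards zero. Lower semicontinuity of $d_g$ does not protect you here. In short, the proposal shifts the entire burden of the theorem onto an unproved (and delicate) robustness estimate.

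The paper avoids this quantitative problem entirely by choosing the approximating null geodesics differently: instead of prescribing initial data at $p$, it picks points $q_k\in\partial J^+_{\check g_{\eps_k}}(p)$ with $q_k\to q$ and then (via Proposition~\ref{prop:4}) takes past-directed $\check g_{\eps_k}$-null generators $\gamma_k$ of $\partial J^+_{\check g_{\eps_k}}(p)$. By construction these $\gamma_k$ \emph{are} $\check g_{\eps_k}$-maximising on the relevant segment, so once Lemma~\ref{approxlemma}(ii), Lemma~\ref{lem:2.3} and Proposition~\ref{mel_prop} produce a conjugate pair, the contradiction is immediate and no uniform length bound is needed. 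The nontrivial work then moves to a different place: one must show that the $\gamma_k$ actually converge to (a reparametrisation of) $\gamma$, and this is where the causality hypothesis and the non-closedness of $\gamma$ enter. Your proposal correctly anticipates that causality must play a role, but attributes it to the wrong step: in the paper's argument causality is used to rule out the possibility that the limit geodesic $\gamma_w$ re-enters $p$ at the wrong parameter value (i.e.\ to conclude $\tilde t=-\tilde T$), not to control any length estimate. If you want to salvage your architecture, the missing idea is precisely to replace ``geodesics through $p$ with nearly tangent initial velocity'' by ``null generators of $\partial J^+_{\check g_{\eps_k}}(p)$,'' at which point the argument you are trying to quantify becomes unnecessary.
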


\begin{proof}
 The general shape of the argument is similar to the timelike case, however,
 since we do not assume global hyperbolicity we will have to choose the
 approximating $\check{g}_\eps$-geodesics differently.

 Assume $\gamma \colon \R \to M$ were a null geodesic that is maximizing between any
 of its points and that without loss of generality satisfies the genericity condition at
 $t_0=0$. Then by Lemma~\ref{lem:2.3} there
 exist $c>0$, $C>0$ and $0<r<\frac{\pi}{4\sqrt{c}}$ such that, whenever $\gamma_\eps$ is a net of $\check
 g_\eps$-null geodesics that
 converge to $\gamma$ in $C^1$, there exists some $\eps_0>0$ such that, for
 any $\eps<\eps_0$, condition (ii) of Proposition~\ref{mel_prop}
 is satisfied for $R_\eps$. Choose $\delta=\delta(c,C,r)>0$ and $T=T(c,C,r)>0$ as
 in Proposition~\ref{mel_prop} and choose $\tilde{T}>T$ in a such a way that $p:=\gamma(-\tilde{T})$ is different from $q:=\gamma(\tilde{T})$.

Then, by assumption,
 $q\in \partial J^+(p)$. We will now find a sequence $\eps_k \to 0$ and
 points $q_k \in \partial J^+_k(p):=\partial J^+_{\check g_{\eps_k}}(p)$ with
 $q_k \to q$: Let $U_k$ be a sequence of neighbourhoods of $q$ with $U_{k+1} \sse U_k$ and $\bigcap_k U_k =\{q\}$. Then for any $U_k$ there exist points $q_k^e \in U_k\setminus \overline{J^+(p)} $ and $q_k^i \in U_k \cap I^+(p)$. Let $\eps_k $ be such that $q_k^i \in I^+_{k}(p)$ and $\eps_k \leq \frac{1}{k}$ and let $c_k $ be a curve in $U_k$ connecting $q_k^i$ and $q_k^e\in U_k\setminus \overline{J^+(p)} \sse U_k\setminus \overline{J^+_{k}(p)}$. Then this curve must intersect $\partial J^+_{k}(p)$ and we choose $q_k$ to be such an intersection point.

 Since $q_k \in \partial J^+_k (p)$ there exists a past directed
 $\check g_{\eps_k}$-null geodesic starting at $q_k$ that is contained in
 $\partial J^+_k (p)$ and is either (past) inextendible or ends in $p$ (cf.\ Proposition
 \ref{prop:4}). Let
 $\gamma_k \colon I_k \to M$ denote an inextendible future directed reparametrisation of
 such a geodesic with $\gamma_k(\tilde{T})=q_k$ and
 $\|\dot{\gamma }_k(\tilde{T})\|_h=\|\dot{\gamma}(\tilde{T})\|_h$. Since the
 $h$-norms of $\dot{\gamma}_k(\tilde{T})\in T_{q_k}M$ are bounded and $q_k\to q$, we may without loss of generality
  assume that the sequence $\dot{\gamma}_k(\tilde{T})$ converges to some
vector $w\in T_qM$. This vector $w$ must be $g$-null since the $\dot{\gamma}_k$
were $\check{g}_{\eps_k}$-null.  Hence there exists
a unique inextendible $g$-geodesic $\gamma_w \colon (a_w,b_w)\to M$ with $\tilde{T}\in (a_w,b_w)$,
$\gamma_w(\tilde{T})=q$ and $\dot{\gamma}_w(\tilde{T})=w$ and the $\gamma_k$ converge to $\gamma_w$ in $C^1$.

Due to our choice of the $\gamma_k$, for each $k$ there either exists $t_k<\tilde{T}$ such
that $\gamma_k(t_k)=p$ and $\gamma_k|_{[t_k,\tilde{T}]}\sse \partial J^+_k(p)$ or
$\gamma_k \sse \partial J^+_k(p)$. By extracting a subsequence we may assume that the first or
the second possibility applies in fact for each $k$.
In the second case we pick some $s\in (a_w,\tilde T)$
and note that by $C^1$-convergence $\gamma_k$ is defined on $[s,\tilde T]$ for $k$ large.

In the first case, if the sequence $t_k$ is unbounded (below) we may again pick
some $s\in (a_w,\tilde T)$ such that $\gamma_k([s,\tilde T])\sse \partial J_k^+(p)$ for $k$ large.
Finally, if $(t_k)$ is bounded,
we may without loss of generality assume that $t_k\to \tilde{t}$
with $\gamma_w (\tilde{t})=p$.
Since $p\neq q$ (by our choice of $\tilde{T}$), $\tilde{t}<\tilde{T}$, so also in this case there exists
$\max(\tilde{t},a_w)<s<\tilde{T}$ such that $\gamma_k([s,\tilde{T}])\sse \partial J^+_k(p)
\sse \overline{J^+(p)}$ for large $k$.

Thus in any case $\gamma_w|_{[s,\tilde{T}]}\sse \overline{J^+(p)}$. Therefore,
if $\gamma_w$ were not (a reparametrisation of) $\gamma $, the concatenation
$\gamma_w|_{[s,\tilde{T}]} \gamma|_{[\tilde{T},\tilde{T}+1]}$ would be a
broken null curve from a point in $\overline{J^+(p)}$ to $\gamma (\tilde{T}+1)$,
hence $\gamma (\tilde{T}+1)\in
I^+(p)$, which contradicts $\gamma $ being maximising between any of its points.
This shows that (with our choice of parametrisations) $\gamma_w$ must
actually be equal to $\gamma$.

But then in particular $\gamma (\tilde{t})=\gamma_w(\tilde{t})=p$ (if $(t_k)$ is bounded) and thus since $\gamma $ cannot be closed by assumption of causality, we must have $t_0=-\tilde{T}$. Thereby in each of the above cases $\gamma_k|_{[-T,\tilde{T}]} \sse \partial J^+_k(p)$ for $k$ large. Consequently, any
such segment must be maximising for the metric $\check{g}_{\eps_k}$.
Also, since $\gamma_k \to \gamma $ in $C^1([-T,T])$, there
exist a compact neighbourhood $K$ of $\gamma ([-T,T])$, $k_0\in \N $, $\tilde C>0$, and $\eta >0$ such that
for all $k\ge k_0$ we have $\gamma_k([-T,T])\sse K$, as well as
$\|\dot \gamma_k(t)\|_h\le \tilde C$ and $d_{\tilde{h}}(\dot \gamma_k(t),\dot \gamma(t))<\eta $
and $\|\dot \gamma(t)\|_h\le \tilde C$ for all
$t\in [-T,T]$. Lemma~\ref{approxlemma}(ii) therefore implies that
$R_{\eps_k}(\dot
\gamma_k(t),\dot \gamma_k(t))
\ge -\delta(c,C,r)$ on $[-T, T]$ for $k$ sufficiently large. This shows that $\gamma_k$
also satisfies condition
(i) from Proposition~\ref{mel_prop} for $k$ large. But then any such $\gamma_k$ incurs a
pair of conjugate points
within $[-T,T]$, contradicting the fact that it was supposed to be maximising
even on $[-T,\tilde T]$.\end{proof}

To conclude this section we want to briefly discuss the difference in causality conditions imposed on $M$ in the classical Theorem~\ref{classicalHP} ($M$ being chronological) and in the $C^{1,1}$-Theorems~\ref{HPinC1,1} and~\ref{ArbitrCodiminC1,1} ($M$ being causal). Causality assumptions (of any kind) on $M$ were first required in this section to prove Theorem~\ref{timelikenotmax} and Theorem~\ref{3.4}. The results proven in previous sections did not require any causality assumption (with the exception of Lemma~\ref{d_conv}, which is only used in the proof of Theorem~\ref{timelikenotmax}). Contrary to our results the smooth versions of these two theorems do not require any causality conditions. Regarding Theorem~\ref{timelikenotmax}, we note that even in the proof of the (classical) Hawking--Penrose theorem its smooth counterpart (despite being valid on all of $M$) is actually only applied to an open globally hyperbolic subset of $M$. This is also true in the proof of our result (see Theorem~\ref{C11HPCausalitybit}). However, Theorem~\ref{3.4} is required in multiple places (e.g., any result requiring strong causality indirectly uses Theorem~\ref{3.4} by virtue of Lemma \ref{lem: 19 conj. pts imply strong causality}). As such, we have found it necessary to assume that the $C^{1, 1}$-spacetime is causal.

Nevertheless, the assumption of causality of $M$ only enters in the proof of Theorem~\ref{3.4} at a single point, namely where we argue that since $\gamma $ cannot be closed the equality of $\gamma (\tilde{t})$ and $\gamma(-\tilde{T})$ implies that $\tilde{t}=-\tilde{T}$. Moreover, this theorem is the only ingredient in the proof of Theorems \ref{HPinC1,1} and \ref{ArbitrCodiminC1,1} where causality of $M$ is required. For all other steps it is sufficient that $M$ be chronological. This can be seen from the following argument: Both the classical proof of the Hawking--Penrose theorem and the proofs of Theorem~\ref{HPinC1,1} and Theorem~\ref{ArbitrCodiminC1,1}  presented here argue by contradiction, i.e., one assumes that $M$ is a causal geodesically complete spacetime (satisfying the  conditions of the theorem) and derives a contradiction. Hence if one could show that Theorem \ref{3.4} remains true while only assuming $M$ to be chronological (and not causal), one could invoke Lemma \ref{lem: 19 conj. pts imply strong causality} to gain that $M$ is even strongly causal and the rest of our proof would go through.

We expect that Theorem~\ref{HPinC1,1} and Theorem~\ref{ArbitrCodiminC1,1}, in fact, even hold for chronological $C^{1, 1}$ spacetimes, but anticipate that a proof will require new methods. 

\section{Initial conditions}
\label{sec:trapped}
In its classical version the Hawking--Penrose theorem comes with three distinct initial conditions: the existence of a compact achronal set without edge (or equivalently an achronal compact topological hypersurface,~\cite[Cor.\ A.19]{hawkingc11}), the existence of a trapped surface, or the existence of a point such that along any future (or past) directed null geodesic starting at this point the convergence becomes negative. An analogue of the trapped surface condition for submanifolds of arbitrary co-dimension was introduced in~\cite{GS}. In this section we will study these initial conditions and their consequences in the $C^{1,1}$-case.

\subsection{The hypersurface case}
We begin with the most straightforward case: the existence of a compact achronal set without edge.

\begin{Proposition}
\label{hypersurfacecase} Let $(M,g)$ be a $C^{1,1}$-spacetime, and
let $A$ be a compact achronal set without edge. Then $E^+(A)=A$, in particular it is compact.
\end{Proposition}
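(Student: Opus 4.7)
The plan is to prove the two set inclusions separately. The inclusion $A\sse E^+(A)$ is immediate: $A\sse J^+(A)$ since $p\in J^+(p)$ for every $p$, while $A\cap I^+(A)=\emptyset$ by achronality, so $A\sse J^+(A)\setminus I^+(A)=E^+(A)$. The asserted compactness of $E^+(A)$ then follows at once from the stated compactness of $A$, once equality has been established.

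For the converse inclusion $E^+(A)\sse A$, I would argue by contradiction. Assume $q\in E^+(A)\setminus A$. Since $q\in J^+(A)$ there is a future-directed causal curve from some point of $A$ to $q$, and because $A$ is closed and $q\notin A$, we may select the latest parameter value at which this curve meets $A$; after reparametrisation we obtain $\gamma\colon[0,1]\to M$ with $\gamma(0)=p\in A$, $\gamma(1)=q$, and $\gamma((0,1])\cap A=\emptyset$. The key structural input is that, being achronal with empty edge, $A$ is a closed topological hypersurface (cf.~\cite[Cor.~A.19]{hawkingc11}, which is formulated at the $C^{1,1}$-level). This allows us to pick a connected open neighbourhood $U$ of $p$ such that $U\setminus A$ splits into exactly two connected components, one of which is contained in $I^+(A)$ and the other in $I^-(A)$.

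The heart of the argument is then to rule out that $\gamma$ enters the ``past component'' near $p$. If $\gamma(t_0)$ were contained in this component for some small $t_0>0$, then $\gamma(t_0)\in I^-(A)$ would yield some $p'\in A$ with $\gamma(t_0)\ll p'$. Combined with $p\leq \gamma(t_0)$ via the push-up principle, which holds for $C^{1,1}$-metrics (see Appendix~\ref{app:C11causality}), this would give $p\ll p'$, contradicting the achronality of $A$ at the two points $p,p'\in A$. Since furthermore $\gamma(t)\notin A$ for $t>0$, we conclude that $\gamma(t)$ lies in the ``future component'' for all sufficiently small $t>0$, and in particular $\gamma(t)\in I^+(A)$. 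One last application of push-up to $p''\ll \gamma(t)\leq q$ for some $p''\in A$ yields $q\in I^+(A)$, contradicting $q\in E^+(A)$.

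I do not expect any genuine obstacle here. The proof rests only on two standard ingredients of $C^{1,1}$-causality theory: the identification of a closed achronal edgeless set with a topological hypersurface (together with the resulting two-sided local decomposition into $I^\pm(A)$-regions) and the push-up principle. Both are available in the authors' causality framework, so the only real bookkeeping is to pin down the local two-sided structure in $C^{1,1}$-regularity, which is already built into the causality appendix.
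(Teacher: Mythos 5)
Your proof is correct, and it follows essentially the same route as the paper's. The paper's proof is a one-liner that delegates the core step to a cited fact from~\cite[Prop.~A.18]{hawkingc11}, namely that for an achronal, edgeless set $A$ any future-directed null geodesic issuing from a point of $A$ immediately enters $I^+(A)$; an implicit application of Lemma~\ref{pu:C11} (``either $q\in I^+(p)$ or the causal curve from $p$ to $q$ is a maximising null geodesic'') reduces to that cited fact. You instead prove the required ``immediately enters $I^+(A)$'' statement directly, for an arbitrary causal curve leaving $A$, using the local two-sided topological separation of an edgeless achronal set together with push-up, thereby bypassing the null-geodesic case distinction. That makes the argument slightly more self-contained, at the cost of having to spell out the local decomposition of $U\setminus A$ into $I^+(A,U)$- and $I^-(A,U)$-pieces (which is standard for achronal edgeless sets and available in $C^{1,1}$-causality). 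No gap.
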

\begin{proof} This follows immediately from the fact that for an achronal set $A$ any future directed null geodesic starting in a point $p\notin \mathrm{edge}(A)$ must immediately enter $I^+(A)$. This can be seen as in~\cite[Prop.\ A.18]{hawkingc11}.
\end{proof}
One should note that as in the smooth case one may even relax the causality assumptions on $A$ a little: By using a covering argument as in~\cite[Thm.\ A.34]{hawkingc11} it would be sufficient to assume the existence of a compact spacelike hypersurface $A$ in the Hawking--Penrose theorem.

\subsection{Submanifolds of codimension $1<m<n$}

In this section, we follow the approach of Galloway and Senovilla~\cite{GS} and
consider trapped submanifolds
of arbitrary codimension of a $C^{1,1}$-spacetime $(M,g)$.
To work in full generality (and because we will need this generality to deal
with the codimension zero case later on) we will now define $C^0$-trapped
submanifolds of codimension $1<m<n$. Our definition is similar in spirit to the
definition of lower mean curvature bounds for $C^0$ spacelike hypersurfaces in
\cite{AGH}.

As mentioned in section~\ref{sec:mainresult},
we say that a submanifold $\tilde{S}$
is a future support submanifold for a $C^0$-submanifold $S$ at $q\in S$ if $\dim
(\tilde{S})=\dim S$, $q\in \tilde{S}$, and $\tilde{S}$ is locally to the future of $S$
near $q$, i.e. there exists a neighbourhood $U$ of $q$ in $M$ such that
$\tilde{S}\cap U \sse J^+(S,U)$. We use this to define 'past pointing timelike mean curvature' for $C^0$-submanifolds.

\begin{definition}\label{support_submf}
 Let $S$ be a $C^0$-submanifold of codimension $m$ ($1<m<n$) in a
$C^{1,1}$-spacetime $(M,g)$. We say that $S$
  has past-pointing timelike mean curvature in $q$ in the sense of
 support submanifolds if there exists a $C^2$
 spacelike future support submanifold $\tilde{S}$ for $S$ in $q$ with $H_{\tilde{S}}(q)$ past-pointing timelike.
\end{definition}

This leads to the following definition of a future trapped $C^0$-submanifold of $M$
(which is obviously satisfied for $C^2$-submanifolds that are future trapped in the classical sense defined in~\cite{GS}).

\begin{definition}
 A $C^0$-submanifold $S$ of codimension $m$ ($1<m<n$) of a
$C^{1,1}$-spacetime $(M,g)$ is called
 future trapped if it is closed (i.e., compact without boundary) and
for any $p\in S$ there exists a neighbourhood $U_p$ of $p$ such that $S\cap U_p$
is achronal in $U_p$ and $S$ has past-pointing timelike mean
 curvature in all its points (in the sense of support submanifolds).
\end{definition}

Our aim is a generalisation of the main results of~\cite{GS} to the
$C^{1,1}$-setting. In fact, we will
show that under some additional curvature assumptions any future directed null
geodesic starting at a
point $q$ of a trapped submanifold $S$ in the above sense eventually stops
maximising the distance to the
future support submanifold $\tilde{S}$ at $q$ (provided it exists for long enough
times).

Using the notation introduced in section~\ref{sec:intro} (i.e., letting $E_1,\dots,E_{n-m}$ denote the parallel translates of an orthonormal basis $e_1(\gamma(0)),\dots,e_{n-m}(\gamma(0))$ for $T_{\gamma(0)}S$ along $\gamma$)
we start by proving the following mild extension of
\cite[Prop.\ 1]{GS}:

\begin{Lemma} \label{GS_refine}
 Let $S$ be a $C^2$ spacelike submanifold of codimension $m$ $(1<m<n)$ in a smooth spacetime $(M,g)$, and let $\gamma $ be a geodesic such that
 $\nu := \dot{\gamma}(0)\in TM|_{S}$ is a
 future-pointing null normal to $S$.
 Suppose that $c := \mathbf{k}_S(\nu)>0$ and let $b>1/c$. Then there exists some
  $\delta=\delta(b,c)>0$ such that, if
  \begin{equation}
  \sum_{i=1}^{n-m} \langle R(E_i,\dot{\gamma})\dot{\gamma},E_i \rangle \geq -\delta
  \label{RicciArbCodim_smooth}
 \end{equation}
 along $\gamma$, then $\gamma|_{[0,b]}$ is not maximising to $S$, provided
 that $\gamma$ exists up to
 $t=b$.
\end{Lemma}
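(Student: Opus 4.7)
The plan is a focal-point argument via scalar Riccati comparison, refining the proof of \cite[Prop.~1]{GS} so as to tolerate the small negative lower bound $-\delta$ on the partial Ricci sum. The first step is to set up the $(n-m)\times(n-m)$ matrix-valued $S$-Jacobi tensor $A(t)$ along $\gamma$, written in the parallel orthonormal frame $[E_1],\ldots,[E_{n-m}]$ in the quotient $[\dot\gamma]^\perp$ obtained by parallel translating an orthonormal basis of $T_{\gamma(0)}S$. Standard focal-point theory then gives $A(0)=\mathrm{id}$, $A'(0)=-W_\nu$ (the Weingarten map of $S$ in the null direction $\nu$, self-adjoint on $T_{\gamma(0)}S$), and the Jacobi equation $A''+R\,A=0$, where $R(t)$ is the $(n-m)\times(n-m)$ matrix with entries $\langle R(E_i,\dot\gamma)\dot\gamma,E_j\rangle$. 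A first $S$-focal point of $\gamma$ in $(0,b]$ is, by definition, a first zero of $\det A$ on that interval.

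Next, I would trace down to a scalar inequality. With $B:=A'A^{-1}$, $\theta:=\mathrm{tr}(B)$ and $\sigma:=B-\tfrac{1}{n-m}\theta\,\mathrm{id}$, the matrix Riccati equation $B'+B^2+R=0$ yields the Raychaudhuri equation
\[
\theta' + \mathrm{tr}(\sigma^2) + \frac{1}{n-m}\theta^2 + \sum_{i=1}^{n-m}\langle R(E_i,\dot\gamma)\dot\gamma,E_i\rangle = 0.
\]
The Lagrangian initial data ($A(0)=\mathrm{id}$ and $A'(0)$ self-adjoint) together with the symmetry of $R$ ensure that $B$ stays self-adjoint on its domain, so the vorticity vanishes and $\mathrm{tr}(\sigma^2)\ge 0$. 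Combining with the hypothesis and the initial value $\theta(0)=-\mathrm{tr}(W_\nu)=-(n-m)\,\mathbf{k}_S(\nu)=-(n-m)c$ yields
\[
\theta' \le -\frac{1}{n-m}\theta^2 + \delta,\qquad \theta(0)=-(n-m)c.
\]

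Then I would carry out scalar Riccati comparison with the ODE $y'=-\tfrac{1}{n-m}y^2+\delta$, $y(0)=-(n-m)c$. Provided $\delta<(n-m)c^2$, this ODE is separable with an explicit hyperbolic-cotangent type solution, analogous to those in the proof of Lemma~\ref{boxing}, which blows down to $-\infty$ at a finite time $t_*(\delta,c)$ satisfying $t_*(\delta,c)\to 1/c$ as $\delta\to 0^+$. Standard ODE comparison gives $\theta\le y$ wherever both are defined, so $\theta(t_1)=-\infty$ for some $t_1\in(0,t_*(\delta,c)]$, making $A$ singular and producing an $S$-focal point at $\gamma(t_1)$.

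Finally, since $b>1/c$, I would choose $\delta=\delta(b,c)>0$ small enough that $t_*(\delta,c)<b$, which forces $t_1<b$. The classical theorem that a causal geodesic from a spacelike submanifold ceases to be maximising to that submanifold past its first focal point (see, e.g., \cite{ON83}, adapted to the null and higher codimension setting as in \cite{GS}) then completes the proof. No serious obstacle appears in the argument; the main technical point requiring explicit verification is the quantitative dependence $t_*(\delta,c)\to 1/c$ as $\delta\to 0^+$, which is immediate from the closed form of $y$ but is precisely the ingredient that allows the classical Galloway--Senovilla estimate to survive the weakening to a merely lower-bounded partial Ricci sum.
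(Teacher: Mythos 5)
Your Riccati/Raychaudhuri approach breaks down for codimension $m>2$, which is precisely the range the lemma is aimed at. The difficulty is in the very first step: the matrix $A(t)$ you construct by taking the $n-m$ $S$-Jacobi fields $J_i$ with $J_i(0)=e_i\in T_{\gamma(0)}S$ and reading off their components in the frame $[E_1],\dots,[E_{n-m}]$ is \emph{not} a solution of a closed $(n-m)\times(n-m)$ Jacobi equation. The curvature operator $[v]\mapsto[R(v,\dot\gamma)\dot\gamma]$ acts on the full $(n-2)$-dimensional quotient $[\dot\gamma(t)]^\perp$ and has no reason to preserve the $(n-m)$-dimensional subspace spanned by the $[E_i]$. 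So an $S$-Jacobi field starting tangent to $S$ immediately picks up components in the remaining $(m-2)$-dimensional ``normal'' directions of the quotient. Consequently, $A''+RA=0$ with $R$ the $(n-m)\times(n-m)$ block $\langle R(E_i,\dot\gamma)\dot\gamma,E_j\rangle$ is simply false, and the Riccati equation, the claimed self-adjointness of $B=A'A^{-1}$, and the Raychaudhuri inequality with $\frac{1}{n-m}\theta^2$ all fail to apply. In particular, $\det A(t_1)=0$ for your truncated $A$ would \emph{not} detect an $S$-focal point of $\gamma$.

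For $m=2$ (the ordinary trapped surface case) these objections disappear, $n-m=n-2$, and your argument is essentially correct and parallels Lemma~\ref{boxing}. But for $m>2$ the correct $S$-Jacobi tensor is $(n-2)\times(n-2)$; its trace in the Raychaudhuri equation is the full $\Ric(\dot\gamma,\dot\gamma)$, not the partial sum $\sum_{i=1}^{n-m}\langle R(E_i,\dot\gamma)\dot\gamma,E_i\rangle$, so the hypothesis~\eqref{RicciArbCodim_smooth} does not feed into it. This is exactly why Galloway--Senovilla (and the paper) avoid the Riccati machinery here and argue instead with the energy index form, taking the explicit test fields $X_i=(1-t/b)E_i$ for $i=1,\dots,n-m$: the second variation $\sum_i I(X_i,X_i)$ is directly controlled by the partial Ricci sum, and one shows $\sum_i I(X_i,X_i)\le (n-m)(1/b-c)+b\delta/3<0$ for $\delta$ small, so the index form fails to be positive semi-definite and a focal point exists before $b$. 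You should rebuild your proof along these index-form lines; the Riccati comparison cannot be patched without essentially abandoning the route.
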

\begin{proof}
 We closely follow the proof of~\cite[Prop.\ 1]{GS}. For vector fields $V$,
 $W$ along $\gamma$ that are orthogonal
 to $\gamma$ and vanish at $t=b$ we consider the energy index form (with $\dot{V}$
 etc.\ denoting the induced covariant  derivative along $\gamma$)
 \[
 I(V,W):=\int_0^b \left[ \langle \dot{V},\dot{W}\rangle - \langle R_{V\dot \gamma}\dot
 \gamma,W\rangle  \right] \,dt
 - \langle \dot \gamma(0),\mathrm{II}(V(0),W(0))\rangle.
 \]
 For $1\le i\le n-m$, let $X_i:=(1-t/b)E_i$. Then
 \[
 I(X_i,X_i)=\int_0^{b} \left[ \vphantom{|^|} 1/b^2 - (1-t/b)^2 \langle R_{E_i\dot \gamma}\dot
 \gamma,E_i\rangle \right] \,dt
 - \langle \dot \gamma(0),\mathrm{II}(e_i,e_i)\rangle.
 \]
 Hence
 \begin{equation*}
  \begin{split}
   \sum_{i=1}^{n-m}I(X_i,X_i) &= (n-m)\Big(\frac{1}{b}-c\Big) -\int_0^{b}
   \Big(1-\frac{t}{b}\Big)^2 \sum_{i=1}^{n-m} \langle R(E_i,\dot{\gamma})\dot{\gamma},E_i \rangle\,dt\\
   & \le (n-m)\Big(\frac{1}{b}-c\Big) +
   \frac{b\delta}{3}.
  \end{split}
 \end{equation*}
 Obviously this last expression can be made negative by choosing
 $\delta=\delta(b,c)$ small enough. It then follows that
 the energy index form is not positive-semidefinite, so there must exist a
 focal point of $S$ on $\gamma$ within
 $(0,b]$, giving the claim.
\end{proof}
We now turn to the case of a $C^{1,1}$-metric $g$. Let $\tilde{S}$ be a $C^2$
spacelike submanifold of co-dimension $m$,
and let $\nu\in T_p\tilde S$ be a future-pointing
null vector normal to $\tilde S$.
As in the smooth setting above, assume that
$\gamma$ is a geodesic with affine parameter $t$ with $\dot
\gamma(0)=\nu $, and let $e_1,\dots,e_{n-m}$
be a local orthonormal frame on $\tilde S$ around $p:=\gamma(0)$ (of regularity
$C^{1,1}$). Again, denote by $E_1,\dots,E_{n-m}$ the parallel
translates of $e_1(p),\dots,e_{n-m}(p)$ along $\gamma$ (which are Lipschitz continuous
vector fields along $\gamma$).

In trying to formulate a natural analogue of~\eqref{RicciArbCodim_smooth} (with $\delta=0$) we
again face the problem
that the curvature operator (being only defined almost everywhere) cannot be
restricted to the Lebesgue null set $\gamma([0,b])$. Similar to the case of
the genericity condition (Definition~\ref{genericity_c11}),
we shall therefore require the existence of continuous extensions
of $E_1,\dots E_{n-m}$ and $\dot{\gamma}$ to a neighbourhood of the geodesic $\gamma$. In
fact, with the notation
introduced above we have:
\begin{Proposition} \label{notmaxtosupportsubmf} Let $(M,g)$ be a strongly causal $C^{1,1}$-spacetime, $\tilde{S}\sse M$ a $C^2$ spacelike submanifold and
 suppose that $\mathbf{k}_{\tilde{S}}(\nu)>c>0$ and let $b>1/c$.
 If there exists a neighbourhood $U$ of $\gamma|_{[0,b]} $ and continuous
 extensions $\bar{E}
 _1,\dots \bar{E}_{n-m}$ and $\bar{N}$ of $E_1,\dots E_{n-m}$ and $\dot{\gamma}$,
 respectively, to $U$ such that
 \begin{equation}
  \sum_{i=1}^{n-m} \langle R(\bar{E}_i,\bar{N})\bar{N},\bar{E}_i \rangle \geq
  0,
  \label{RicciArbCodim}
 \end{equation}
 then $\gamma|_{[0,b]}$ is not maximising to $\tilde{S}$.
\end{Proposition}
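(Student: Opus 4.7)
The strategy is to reduce to the smooth Lemma \ref{GS_refine} by approximating $g$ from the inside by the smooth metrics $\check g_\eps$ of Proposition \ref{CGapprox}, and then to transport the resulting failure of maximisation from the $\check g_\eps$-null geodesics back to $\gamma$. Fix $b' \in (1/c, b)$ and $c' \in (1/b', c)$. Since $\tilde S$ is $C^2$ and $g$-spacelike and $\check g_\eps \prec g$ has strictly narrower causal cones, $\tilde S$ remains $\check g_\eps$-spacelike for all small $\eps$. Consequently the $g$-null normal $\nu$ to $\tilde S$ at $p = \gamma(0)$ is the limit of a family of $\check g_\eps$-null normals $\nu_\eps$ to $\tilde S$ at $p$ (with respect to $\check g_\eps$). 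Since $\tilde S$ is $C^2$ and $\check g_\eps \to g$ in $C^1$, we have $\mathbf{k}_{\tilde S, \eps}(\nu_\eps) \to \mathbf{k}_{\tilde S}(\nu) > c$, so $\mathbf{k}_{\tilde S, \eps}(\nu_\eps) > c'$ for $\eps$ small. Let $\gamma_\eps$ denote the $\check g_\eps$-null geodesic with $\dot\gamma_\eps(0) = \nu_\eps$; by continuous dependence of ODEs on coefficients, $\gamma_\eps \to \gamma$ in $C^1([0, b])$ as $\eps \to 0$.

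Parallel-transporting $e_i(p)$ along $\gamma_\eps$ with respect to $\check g_\eps$ produces frames $E_i^\eps$ that converge uniformly to $E_i$ on $[0, b]$. Combining this with the hypothesis $\sum_i \langle R(\bar E_i, \bar N)\bar N, \bar E_i\rangle \geq 0$ a.e.\ on $U$, the identity $R_\eps = R * \rho_\eps + o(1)$ (locally uniformly, as noted after Lemma \ref{reg1}), and Lemma \ref{reg1}(ii), we conclude that for every $\delta > 0$ and all sufficiently small $\eps$,
\begin{equation*}
\sum_{i=1}^{n-m}\langle R_\eps(E_i^\eps(t), \dot\gamma_\eps(t))\dot\gamma_\eps(t), E_i^\eps(t)\rangle_{\check g_\eps} \geq -\delta \qquad \text{for all } t \in [0, b'].
\end{equation*}
Selecting $\delta = \delta(b', c')$ as in Lemma \ref{GS_refine}, that lemma now applies in $\check g_\eps$ to $\gamma_\eps$ on $[0, b']$, yielding a focal point of $\tilde S$ on $\gamma_\eps$ at some $t_0^\eps \in (0, b']$. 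In particular, $\gamma_\eps|_{[0, b']}$ fails to be $\check g_\eps$-maximising to $\tilde S$.

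Picking any $b'' \in (b', b)$, for small $\eps$ we have $\gamma_\eps(b'') \in \check I_\eps^+(\tilde S)$, and since $\check g_\eps$-timelike curves are $g$-timelike by $\check g_\eps \prec g$, this gives $\gamma_\eps(b'') \in I^+(\tilde S)$. The remaining and most delicate step is to pass to the limit $\eps \to 0$ and to conclude $\gamma(b'') \in I^+(\tilde S)$, which would contradict the maximising assumption on $\gamma|_{[0, b]}$. Openness of $I^+(\tilde S)$ alone does not suffice, since a limit of points in an open set may land on its boundary. To overcome this, the plan is to exhibit the connecting $\check g_\eps$-timelike curve from $\tilde S$ to $\gamma_\eps(b'')$ as the standard smoothing of a broken null geodesic passing through the focal point $t_0^\eps$ and then running along $\gamma_\eps|_{[t_0^\eps, b'']}$. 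Since both segment lengths $t_0^\eps \in (0, b']$ and $b'' - t_0^\eps \geq b'' - b' > 0$ are bounded below uniformly in $\eps$, the kink angle at the focal point stays bounded away from zero, so, invoking strong causality of $(M, g)$, a subsequential limit of these curves is a genuinely $g$-timelike path from a point of $\tilde S$ to $\gamma(b'')$, giving the required contradiction.
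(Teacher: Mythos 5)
The first half of your proposal (fixing $b' \in (1/c, b)$, passing to $\check g_\eps$, constructing $\check g_\eps$-null normals and parallel frames, invoking Lemma~\ref{reg1} together with $R_\eps - R*\rho_\eps \to 0$ to obtain the perturbed smooth curvature bound, and then applying Lemma~\ref{GS_refine} to produce a focal point on $[0,b']$) is essentially the same as the paper's argument. The gap is in the final limiting step, and your proposed fix does not close it.

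You correctly identify that $\gamma_\eps(b'')\in I^+(\tilde S)$ with $\gamma_\eps(b'')\to\gamma(b'')$ is inconclusive, since the limit can land on $\partial I^+(\tilde S)$. The repair you propose --- bounding the kink angle of a broken $\check g_\eps$-null geodesic at the focal point away from zero --- is not justified and almost certainly fails. First, $t_0^\eps\in(0,b']$ is only bounded above; you give no uniform lower bound, so the first segment can degenerate. Second, and more fundamentally, at a focal point the relevant $\tilde S$-Jacobi field vanishes: the nearby members of the $\tilde S$-orthogonal null congruence are tangent to $\gamma_\eps$ to first order at $\gamma_\eps(t_0^\eps)$, so the ``corner'' between them shrinks to zero as you zoom in, and there is no reason for it to stay uniformly nondegenerate as $\eps\to 0$. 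Since all the segments are $\check g_\eps$-null, the limit would then be an \emph{unbroken} $g$-null geodesic --- namely $\gamma$ itself --- and push-up gives nothing. Strong causality does not provide a quantitative corner bound either; that is not what it controls.

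The paper circumvents exactly this obstruction by working with a closed condition instead of an open one. Assuming for contradiction that $\gamma|_{[0,b]}$ is maximising, the point $q=\gamma(b'')$ lies in $\partial J^+(\bar U\cap\tilde S)$. The paper then (as in the proof of Theorem~\ref{3.4}) chooses points $q_k\in\partial J^+_k(\bar U\cap\tilde S)$ with $q_k\to q$, and via Proposition~\ref{prop:4} selects $\check g_{\eps_k}$-null geodesics $\gamma_k$ through $q_k$ that \emph{lie in the boundary} $\partial J^+_k(\bar U\cap\tilde S)$, hence are automatically $\check g_{\eps_k}$-maximising. A limit-curve argument then forces $\gamma_k\to\gamma$ in $C^1$, strong causality (via Lemma~\ref{lem:18 (strong causality equiv)}) is used to pin down the starting parameter $t_k\to 0$ on $\tilde S$, and for large $k$ one has $\dot\gamma_k(t_k)$ in the neighbourhood $V$ where the focal-point estimate holds. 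Since $\gamma_k|_{[t_k,b'']}$ is maximising yet must develop a focal point by parameter $t_k+b'<b''$, one obtains a contradiction. In short: rather than pushing a statement about an open set through a limit, the paper pushes a statement about a boundary (a closed set, automatically preserved under convergence of the $\gamma_k$) through the limit. You will want to replace your kink-angle argument with this boundary-based contradiction scheme.
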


\begin{proof} We again proceed by regularisation.
Let $g_\eps = \check g_\eps$, then as in the proof of Lemma~\ref{lem:2.3} we
may
without loss of
generality suppose that $M=\R^n$, and that $R_\eps=R*\rho_\eps$.
Since $\tilde{S}$ is a $C^2$-submanifold, $\mathbf{k}_{\tilde{S}}$ is continuous on $\tilde{S}$ and $\mathbf{k}_{\tilde{S}, \eps} \to
\mathbf{k}_{\tilde{S}}$ uniformly on compact subsets. Thus, there exists a neighbourhood
$V$
in $TM|_{\tilde{S}}$ of $\nu$ and an $\varepsilon_0 $ such that
for all $\varepsilon \leq \varepsilon_0 $ one has $\mathbf{k}_{\tilde{S}, \varepsilon}
(v)>c$
for all $v\in V$.
Shrinking $U$, we may assume that there exists $\eps_0$ such that for
all $g_\eps$ with $\eps\le\eps_0$ the submanifold $U\cap \tilde S$ is $g_\eps$-spacelike and, shrinking $V$ if necessary, we have that the projection $W:=\pi(V)$ of $V$ onto $\tilde{S}$ is contained in $U\cap \tilde S$.

Further shrinking $\eps_0$ and $V$ if necessary, for each $\eps<\eps_0$ let
$e_1^\eps,\dots,e_{n-m}^\eps$ be a $g_\eps$-orthonormal frame for $\tilde S$ on $W$
such that $e_i^\eps \to e_i$ uniformly on $W$ for
$\eps\to 0$. For each $v\in V$, denote by $E_i^\eps(t)$ the parallel transport
of $e_i^\eps(\pi(v))$ along the $g_\eps$-geodesic $\gamma_v^\eps$ with
$\dot\gamma_v^\eps(0)= v$.

By~\eqref{RicciArbCodim} we have \[ \sum_{i=1}^{n-m}
g(R(\bar{E}_i,\bar{N})\bar{N},\bar{E}_i)*\rho_\eps \geq 0. \] Since without
loss of generality $U$ is relatively compact and $\gamma^\eps_v([0,b])\sse U$ for
all $v\in V$ and all $\eps\le \eps_0$, Lemma~\ref{reg1} (i) implies that
$g(R(\bar{E}_i,\bar{N})\bar{N},\bar{E}_i)*\rho_\eps -
g_\eps(R_\eps(\bar{E}_i,\bar{N})\bar{N},\bar{E}_i)\to 0$ uniformly on $U$, as
well as \[ g_\eps(R_\eps(\bar{E}_i,\bar{N})\bar{N},\bar{E}_i)\circ\gamma_v^\eps
- g_\eps(R_\eps(E_i^\eps,\dot \gamma_v^\eps)\dot\gamma_v^\eps,E_i^\eps) \to 0
\]
uniformly on $[0,b]$ as $(\eps,v)\to (0,\nu)$, for $1\le i\le n-m$.

Now let $1/c<b'<b$, and pick $\delta:=\delta(b',c)$ as in Lemma
\ref{GS_refine}.
Then by the above we may shrink $V$ and $\eps_0$ in such a way that
condition~\eqref{RicciArbCodim_smooth} is satisfied along $\gamma_v^\eps$ on
$[0,b']$
for each $v\in V$ and each $\eps\le \eps_0$.

Consequently, any $\gamma^\eps_v$ with $v$ being $g_\eps$-null stops maximising the
$g_\eps$-distance to $\tilde S$ at parameter $t=b'$ the latest
(if $v$ is not a $g_\eps$-normal to $\tilde{S}$ it must stop maximising the distance immediately
(cf.\ Remark~\ref{rem4.5} (ii) below), if it is a null normal Lemma~\ref{GS_refine} applies).

Now assume that the $g$-null geodesic $\gamma $ maximises the distance to
$\bar{U}\cap \tilde S$ until the parameter value $b$. We then proceed in parallel
to the final part of the proof of Theorem~\ref{3.4}: Let $b''$ be such that $b'<b''<b$ and set $q:=\gamma(b'')$. There exist points $q_k \in
\partial J^+_k(\bar{U}\cap\tilde S)$ with $q_k \to q$. By Proposition~\ref{prop:4},
since $q_k \in \partial J^+_k (\bar{U}\cap \tilde S)$ there
exists a past directed $\check g_{\eps_k}$-null geodesic starting at $q_k$ that is
contained in $\partial J^+_k (\bar{U}\cap \tilde S)$ and is either past
inextendible or ends in $\bar{U}\cap \tilde S$. Again let $\gamma_k \colon I_k \to M$ denote
an inextendible future directed reparametrisation of such a geodesic, this time
with $\gamma_k(b'')=q_k$ and $\|\dot{ \gamma}_k(b'')\|_h=\|\dot{\gamma}(b'')\|_h$. As in Theorem
\ref{3.4} we may assume that $\dot{\gamma}_k(b'')$ converges to a $g$-null vector $v$
and that the $\gamma_k$ converge to the corresponding geodesic $\gamma_v$ in $C^1$.

For each $k$ there either exists some $0<t_k < b''$ with $\gamma_k(t_k)\in
\bar{U}\cap \tilde S$ and $\gamma_k|_{[t_k,b'']}\sse \partial J^+_k(\bar{U}\cap \tilde
S)$, or $\gamma_k|_{[0,b'']}\subseteq \partial J^+_k(\bar U \cap \tilde S)$. In the second
case we set $t_k=0$, to obtain a sequence that without loss of generality converges to some
$t'$ and $t'= 0 <b''$ or $\gamma_v (t')\in \bar U \cap \tilde S$. Since $q\not \in \bar U \cap \tilde
S$ the second case also gives $t'<b''$ and there exists $t'<t''<b''$ such that $\gamma_k|_{[t'',b'']}\sse \partial
J^+_k(\bar U \cap \tilde S) \sse \overline{J^+(\bar U \cap \tilde S)}$ for large
$k$. Consequently,
$\gamma_v|_{[t'',b'']}\sse \overline{J^+(\bar U \cap \tilde S)}$, and as in Theorem~\ref{3.4} this
implies that $\gamma=\gamma_v$.

We now note that by shrinking $U$ we may assume that
$\gamma$ can only intersect $\bar U \cap \tilde S$ once: in fact, we may locally
view $\tilde S\cap \bar U$ as a submanifold of some spacelike hypersurface $\hat
S$. By~\cite[Lemma A.25]{hawkingc11}, there exists an open set $W$ in $M$ such
that $W\cap\hat S$ is a Cauchy hypersurface in $W$. Also, since $M$ is strongly
causal, $W$ can be chosen in such a way that
$\gamma$ can only intersect it once by Lemma~\ref{lem:18 (strong causality equiv)}.

Consequently, we must have $t'=0$. Since $\gamma_k|_{[t_k,b'']} \sse \partial J^+_k(\bar U
\cap \tilde S)$, any such segment must be maximising for $\check g_{\eps_k}$.
For $k$
large we have $\dot{\gamma }_k(t_k)\in V$ since $\gamma_k \to \gamma$. Therefore, by what was shown above, $\gamma_k$ must
stop maximising the distance to $\bar U \cap \tilde S$ already at $t=t_k+b'<b''$, a
contradiction.
\end{proof}

\begin{remark} \label{rem4.5}
\begin{inlineenum} \item
In case $m=2$ (i.e., the traditional trapped surface case)
a slightly perturbed version of ~\eqref{RicciArbCodim} (namely with right hand side
$-\delta$ for any given $\delta>0$) is automatically satisfied if the null convergence condition holds: Choose $e_{n-1},e_n$ such that $e_n$ is timelike, $\dot{\gamma}(0)=e_{n-1}+e_n$
and $e_1,\dots,e_n$ is an orthonormal basis and denote the parallel translates of
$e_1,\dots,e_n$ along $\gamma$ by $E_1,\dots ,E_{n}$. Now let $\bar{E}_1,\dots ,\bar{E}_n$ be arbitrary continuous extensions of $E_1,\dots ,E_{n}$ to a neighbourhood $U$ of $\gamma$
and set $\bar{N}=\bar{E}_{n-1}+\bar{E}_n$.

Cover $\gamma$ by finitely many totally normal neighbourhoods. Then in each such neighbourhood $V$
we may parallelly transport $E_1,\dots ,E_{n}$ from some point of $\gamma$ in $V$ radially outward
to obtain local orthonormal fields $\tilde E_1,\dots,\tilde E_{n}$, and $\tilde N=\tilde{E}_{n-1}+\tilde{E}_n$. Then $\sum_{i=1}^{n-2} \langle R(\tilde{E}_i,\tilde{N})\tilde{N},\tilde{E}_i \rangle = \Ric (\tilde{N},\tilde{N})\ge 0$ on $V$. Now, as in section~\ref{reg_sec},
shrinking $U$ produces~\eqref{RicciArbCodim} with right hand side negative but arbitrarily close
to $0$. The proof of Proposition~\ref{notmaxtosupportsubmf} then still gives the desired result.

\item If $v\in TM|_{\tilde{S}}$ is future directed causal, but not a null normal to
$\tilde{S}$, then $\gamma_v $ enters $I^+(\tilde{S}) $ immediately: This is well known
for smooth metrics (\cite[Lem.~10.50]{ON83}). If $g$ is only $C^{1,1}$ one cannot use
the exponential map to construct a $C^2$-variation with a given variational vector field, but since this is a local question (and clearly true if $v$ is timelike) we may assume that $M=\R ^n$, $\gamma_v (0)=0$ and $v$ is null. We now
construct suitable variations as follows: Since $v\notin T_0\tilde{S}^\perp$ there
exists $y\in T_0\tilde{S}$ such that $\langle y, v \rangle_g >0$. Let $\alpha
 \colon [0,b]\to \tilde{S}$ be a $C^2$-curve with $\dot{\alpha}(0)=y$ (and $\alpha (0)=0$). We define a $C^2$-variation $\sigma \colon [0,t_0]\times [0,s_0] \to \R^n$ by $\sigma (t,s):= \gamma_v(t)+(1-\frac{t}{t_0})\alpha(s)$. Now let $t_0, s_0>0$ be small enough such
that $\langle y, \dot{\gamma}_v(t) \rangle_{g(\sigma(t,s))} >c>0$ for all $t\leq t_0$ and $s\leq s_0$. We will show that $\sigma(.,s)$ is a timelike curve for small $s$ and $t_0$, proving the claim.
Expanding $\alpha(s)$ and $g(\sigma(t,s))$ in a Taylor series around $s=0$ gives $\alpha(s)=sy+O(s^2)$ and $|g(\sigma(t,s))-g(\gamma_v(t))|\leq C s(1-\frac{t}{t_0})+O(s^2)$ (where $C>0$ does not depend on $s,t$) as $s\to 0$ and thus
 \begin{equation*}
  \begin{split}
     \langle \partial_t \sigma(t,s),\partial_t \sigma(t,s) \rangle_{g(\sigma(t,s))} &=\langle \dot{\gamma}_v(t),\dot{\gamma}_v(t) \rangle_{g(\sigma(t,s))}-2\frac{s}{t_0} \langle \dot{\gamma}_v(t),y \rangle_{g(\sigma(t,s))}+O(s^2)\\
   & \le s \left(\tilde{C}\left(1-\frac{t}{t_0}\right)-c\frac{2}{t_0}\right)+O(s^2).
  \end{split}
 \end{equation*}
The bracketed term evidently is negative for small $t_0$ and thus for such $t_0$ the curve $t\mapsto \sigma(t,s)$ will be a timelike curve from $0$ to $\gamma_v(t_0)$ for small $s$.
\end{inlineenum}
\end{remark}

\begin{Proposition} \label{trappedsubmfistrapped}
Let $(M, g)$ be a strongly causal $C^{1, 1}$-spacetime and let $S$ be a ($C^0$) trapped submanifold of co-dimension $1<m<n$ such that, if $m \neq 2$, the support submanifolds $\tilde S$ from Definition~\ref{support_submf} satisfy~\eqref{RicciArbCodim} for all null normals and, if $m = 2$, the null convergence condition is satisfied.
Then $E^+(S)$ is compact or $M$ is null geodesically incomplete.
\end{Proposition}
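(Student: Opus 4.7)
The plan is to argue by contradiction on sequential compactness: assume that $M$ is null geodesically complete (otherwise the theorem already holds) and pick an arbitrary sequence $(p_k)$ in $\ep$, aiming to produce a convergent subsequence with limit in $\ep$. Since $p_k\in J^+(S)\setminus I^+(S)$, push-up (cf.\ Appendix~\ref{app:C11causality}) shows that any causal curve from $S$ to $p_k$ must in fact be an unbroken null geodesic $\gamma_k\colon[0,t_k]\to M$, and the same argument applied to subsegments forces $\gamma_k([0,t_k])\sse\ep$. I would parametrise so that $\|\dot\gamma_k(0)\|_h=1$. Compactness of $S$ and of the $h$-unit future null bundle over $S$ allows one to extract a subsequence with $q_k:=\gamma_k(0)\to q\in S$ and $\dot\gamma_k(0)\to v_\infty$, a future $h$-unit null vector at $q$. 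By null completeness and continuous dependence of the geodesic flow on initial data, $\gamma_k\to\gamma_\infty$ in $C^1$ on compacta, where $\gamma_\infty$ is the (now complete) null geodesic from $q$ with initial velocity $v_\infty$; in particular $\gamma_\infty\sse\overline{\ep}\sse M\setminus I^+(S)$, since $I^+(S)$ is open.

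Next I would deploy the trapped hypothesis at the limit point. Pick a $C^2$ future support submanifold $\widetilde S$ of $S$ at $q$ with past-pointing timelike mean curvature, so that $\conv_{\widetilde S}(\nu)>0$ for every future-directed null normal $\nu$ to $\widetilde S$. On a small neighbourhood $U_q$ of $q$ one has $\widetilde S\cap U_q\sse J^+(S,U_q)$, and push-up yields the local inclusion $I^+(\widetilde S\cap U_q)\sse I^+(S)$. If $v_\infty$ were not a null normal to $\widetilde S$ at $q$, Remark~\ref{rem4.5}(ii) would force $\gamma_\infty$ to enter $I^+(\widetilde S\cap U_q)$ immediately, contradicting $\gamma_\infty\sse M\setminus I^+(S)$. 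Hence $v_\infty$ is a null normal to $\widetilde S$ and $c:=\conv_{\widetilde S}(v_\infty)>0$. Fixing any $b>1/c$, Proposition~\ref{notmaxtosupportsubmf} (invoked directly for $m>2$, and for $m=2$ via Remark~\ref{rem4.5}(i) under the null convergence condition) shows that $\gamma_\infty|_{[0,b]}$ is not maximising to $\widetilde S$, so $\gamma_\infty(b)\in I^+(\widetilde S\cap U_q)\sse I^+(S)$.

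To close the loop: if $t_k\to\infty$, then for $k$ large $\gamma_k$ is defined at $b$, and $C^1$-convergence together with openness of $I^+(S)$ gives $\gamma_k(b)\in I^+(S)$ eventually, contradicting $\gamma_k\sse\ep$. Therefore $(t_k)$ is bounded; along a further subsequence $t_k\to t_\infty$, so $p_k\to p_\infty:=\gamma_\infty(t_\infty)$. This limit lies in $J^+(S)$ via the segment $\gamma_\infty|_{[0,t_\infty]}$ and in $M\setminus I^+(S)$ as a limit point of that closed set, so $p_\infty\in\ep$. This proves sequential compactness of $\ep$.

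I expect the main obstacle to be the transfer step in the second paragraph, where the focal-point conclusion of Proposition~\ref{notmaxtosupportsubmf} phrased in terms of $\widetilde S$ must be upgraded to a statement about the $C^0$-submanifold $S$. This rests on two delicate pieces: ensuring that the timelike curve produced by the focal-point argument actually starts in $\widetilde S\cap U_q$ (so that it lands in $J^+(S,U_q)$ and can then be push-upped to a timelike curve from $S$), and on the $C^{1,1}$ push-up lemma itself, which is more subtle than in the smooth setting and is precisely what Appendix~\ref{app:C11causality} is designed to supply.
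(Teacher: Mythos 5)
Your proof is correct and closes the argument, but it organizes the compactness step differently from the paper. The paper's proof first establishes a \emph{uniform} parameter bound: with $K:=\{v\in TM|_S : v\ \mathrm{future\ directed,\ null},\ \|v\|_h=1\}$ compact, it applies Proposition~\ref{notmaxtosupportsubmf} and Remark~\ref{rem4.5} to \emph{every} $v\in K$, obtaining parameters $t_v$ with $\exp(t_v v)\in I^+(S)$, then uses openness of $I^+(S)$ and a finite cover of $K$ to extract a global $T$ with $E^+(S)\sse \exp([0,T]\cdot K)$; this yields relative compactness at once, and a separate short limit argument (convergence of $t_i\le T$ and $v_i\in K$) gives closedness. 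You instead prove sequential compactness directly: take an arbitrary sequence $(p_k)\sse\ep$, realize it via null geodesics from $S$, pass to a limit geodesic $\gamma_\infty$ using compactness of $K$, and invoke the focal-point machinery \emph{only at the single limit vector} $v_\infty$ to show $\gamma_\infty(b)\in I^+(S)$, whence the parameter sequence $(t_k)$ must be bounded. Both proofs hinge on the same ingredients --- push-up to represent $E^+(S)$-points as $\exp(t v)$ with $v\in K$, compactness of $K$, and Proposition~\ref{notmaxtosupportsubmf}/Remark~\ref{rem4.5} to cap the affine parameter --- and the amount of work is comparable; the paper's version has the small advantage of exhibiting the bound $T$ explicitly.

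One organizational imprecision worth flagging: you assert $\gamma_\infty\sse\overline{\ep}\sse M\setminus I^+(S)$ before splitting into cases on $(t_k)$, and you use this to conclude that $v_\infty$ is a null normal to $\widetilde S$. But a priori only $\gamma_\infty|_{[0,\liminf t_k)}$ lies in $\overline{\ep}$, which is vacuous if $\liminf t_k=0$. The cleanest fix is to first pass to a subsequence with $t_k\to t_\infty\in[0,\infty]$: if $t_\infty<\infty$ then $p_k\to\gamma_\infty(t_\infty)$ lies in $J^+(S)$ (via $\gamma_\infty|_{[0,t_\infty]}$) and in the closed set $M\setminus I^+(S)$, so it is already in $\ep$ without any focal-point input; only in the branch $t_\infty=\infty$ is the full geodesic contained in $\overline{\ep}$, and it is exactly there that the null-normal identification and Proposition~\ref{notmaxtosupportsubmf} deliver the contradiction. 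With this reordering your argument is complete.
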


\begin{proof} Assume $M$ is null geodesically complete and fix a Riemannian
 metric $h$ on $M$ and let $K:=\{v\in TM|_S : v\, \mathrm{future\,
  directed,\, null},\, \|v\|_h=1\}$. Clearly $K$ is compact and by
 Proposition~\ref{notmaxtosupportsubmf} and Remark~\ref{rem4.5} for any $v\in K$ there exists a time $t_v$ such
 that
 $\exp (t_v\, v)\in I^+(\tilde{S})\sse I^+(S)$. Since $(v,t)\mapsto \exp
 (tv)$ is continuous there even exists a neighbourhood $U_v$ such that $\exp
 (t_v
 \, w)\in I^+(S)$ for all $w\in U_v$. By compactness we may cover $K$
 by
 finitely many of these $U_v$ and thus there exists $T$ such that
 $E^+(S)\sse
 \exp ( [0,T] \cdot K)$. This shows that $E^+(S)$ is relatively
 compact.

It remains to show that $E^+(S)$ is closed. Let $p_i=\exp (t_i v_i )\in E^+(S)$
be a sequence with $p_i\to p$ for some $p\in M$. Clearly $p\notin I^+(S)$, so
it remains to show that $p\in J^+ (S)$. Since $t_i \leq T$ and $v_i\in K$
we may assume that $t_i\to t$ and $v_i\to v\in K$. But then
since $p_i\in E^+(S)$ we must have $t_i\le t_v$ for $i$ large, hence
$p=\exp (tv)\in J^+ (S)$ and we are done.
\end{proof}

\begin{Corollary}
\label{achronalnotnecessary}
Let $(M,g)$ and $S$ be as in the previous proposition. Then $E^+(S)\cap S$ is an achronal set and $E^+(E^+(S)\cap S)$ is compact or $M$ is null geodesically incomplete.
\end{Corollary}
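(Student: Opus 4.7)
Set $A := E^+(S) \cap S$. The achronality of $A$ is immediate: if $p, q \in A$ with $p \ll q$, then since $p \in S$ this would give $q \in I^+(S)$, contradicting $q \in E^+(S) = J^+(S) \setminus I^+(S)$. The compactness assertion will be deduced from the sharper identity $E^+(A) = E^+(S)$, which together with Proposition \ref{trappedsubmfistrapped} immediately yields the dichotomy in the statement.

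For the inclusion $E^+(S) \sse E^+(A)$, write $S = A \cup (S \setminus A)$ and observe that $S \setminus A \sse I^+(S)$, since $S \sse J^+(S) = E^+(S) \cup I^+(S)$ and the latter union is disjoint. In particular, each $s \in S \setminus A$ admits $s' \in S$ with $s' \ll s$, and by push-up of $\ll$ through $\leq$ (valid for $C^{1,1}$-metrics) we find $J^+(S \setminus A) \sse I^+(S)$. Hence
\[
E^+(S) = J^+(S) \setminus I^+(S) = J^+(A) \setminus I^+(S) \sse J^+(A) \setminus I^+(A) = E^+(A),
\]
where the last step uses $I^+(A) \sse I^+(S)$.

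For the reverse inclusion I would argue by contradiction: suppose $p \in E^+(A)$ but $p \in I^+(S)$, and pick $s_0 \in S$ with $s_0 \ll p$. If $s_0 \in A$, then $p \in I^+(A)$, contradicting $p \in E^+(A)$; otherwise $s_0 \in S \setminus A \sse I^+(S)$, so there exists $s_1 \in S$ with $s_1 \ll s_0$ and hence $s_1 \ll p$. Iterating either produces some $s_k \in A$ (immediate contradiction) or yields an infinite chain $s_0 \gg s_1 \gg s_2 \gg \dots$ in $S \setminus A$ with every $s_k \ll p$. Eliminating this infinite case is the main obstacle. The plan is to use compactness of $S$ together with the strong causality inherited from Proposition \ref{trappedsubmfistrapped}: extract a convergent subsequence $s_{k_j} \to s^* \in S$, choose a neighbourhood $U$ of $s^*$ inside which $S \cap U$ is achronal (by Definition \ref{def:trappedsubmf1}), and then a strongly causal neighbourhood $V \sse U$ such that every causal curve with both endpoints in $V$ is contained in $U$. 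For $j$ large enough that $s_{k_j}, s_{k_{j+1}} \in V$, the concatenation of the timelike segments realising $s_{k_{j+1}} \ll s_{k_{j+1}-1} \ll \dots \ll s_{k_j}$ is a timelike curve with endpoints in $V$ and hence contained in $U$; this yields a timelike connection in $U$ between two distinct points of $S \cap U$ (distinct by chronology, which follows from strong causality), contradicting the local achronality of $S$.

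Combining the two inclusions gives $E^+(A) = E^+(S)$, and Proposition \ref{trappedsubmfistrapped} then furnishes either the compactness of $E^+(A)$ or the null geodesic incompleteness of $M$, proving the corollary. The bulk of the argument is bookkeeping with $J^\pm$ and $I^\pm$; the only delicate point is the elimination of the infinite $\ll$-descending chain in $S$, for which compactness of $S$, strong causality, and the local achronality encoded in the definition of a trapped submanifold together suffice.
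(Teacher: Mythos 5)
Your proof is correct and faithfully reconstructs the argument that the paper outsources to \cite[Prop.~4]{GS} and \cite[Prop.~4.3]{Seno1}. The achronality of $A = E^+(S)\cap S$ and the inclusion $E^+(S)\subseteq E^+(A)$ (via the decomposition $S = A \cup (S\setminus A)$ and push-up applied to $S\setminus A\subseteq I^+(S)$) are handled exactly as in the smooth references. For the reverse inclusion you correctly identify the one genuinely nontrivial step, namely ruling out the infinite $\ll$-descending chain $s_0 \gg s_1 \gg \dots$ in $S\setminus A$, and the ingredients you invoke --- compactness of $S$ (since $S$ is closed in the sense of the paper, i.e.\ compact without boundary), strong causality (available because it is a hypothesis of Proposition~\ref{trappedsubmfistrapped}), and the \emph{local} achronality of $S$ built into Definition~\ref{def:trappedsubmf1} --- are precisely what the paper flags as the key point to be used (``using that by definition for any $p\in S$ there exists a neighbourhood $U_p$ such that $S\cap U_p$ is achronal in $U_p$''). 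The chain-elimination step itself is sound: by strong causality (Definition~\ref{def: strong causalty}) you may force the concatenated timelike curve from $s_{k_{j+1}}$ to $s_{k_j}$ to stay inside the achronality neighbourhood $U$ of the accumulation point, the two endpoints are distinct by chronology, and this contradicts local achronality. Combining $E^+(A)=E^+(S)$ with Proposition~\ref{trappedsubmfistrapped} finishes the argument. Since the paper gives no self-contained proof but only a citation, your write-up is a valid expansion of it rather than a genuinely different route; the one structural difference is that you prove the stronger identity $E^+(A)=E^+(S)$ rather than merely deducing compactness of $E^+(A)$, but this is also what the cited smooth proofs establish.
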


\begin{proof}
This follows verbatim as in the smooth case, see~\cite[Prop.\ 4]{GS} or~\cite[Prop.\ 4.3]{Seno1}, using that by definition for any $p\in S$ there exists a neighbourhood $U_p$ such that $S\cap U_p$ is achronal in $U_p$.
\end{proof}

\subsection{Trapped points}\label{sec:trappedpt}

In the classical smooth version of the Hawking--Penrose theorem there is a third initial condition concerning a `trapped point' $p$, which is a point $p$ such that the expansion becomes negative for any future directed null geodesic starting in $p$. This condition can again be formulated in a precise way in the language of Jacobi tensors, see e.g.~\cite[Prop.\ 12.46]{BEE}, by demanding that for any future directed null geodesic $\gamma $ starting in $p$ the expansion $\theta (t)$ associated to the unique Jacobi tensor class $[A]$ along $\gamma $ with $[A](0)=0$ and $[\dot{A}](0)= \mathrm{id}$ becomes negative for some $t>0$. This formulation unfortunately does not generalise to a $C^{1,1}$-metric (one of the reasons for this being that there is no sensible way to formulate the Jacobi equation). There is, however, an equivalent formulation for smooth metrics using a shape operator of spacelike slices of the lightcone of $p$ (which is similar to the use of co-spacelike distance functions and their level sets in the timelike or Riemannian case, cf.~\cite[Appendix~B.3]{BEE}):

Let $\gamma $ be a null geodesic and assume that the expansion of the Jacobi tensor class $[A]$ along $\gamma $ with $[A](0)=0$ and $[\dot{A}](0)= \mathrm{id}$ becomes negative for some $t>0$. We set $t_0:=\inf \{t>\eta :\,\theta(t)<0\} $, where $\eta >0$ is chosen such that $[0,\eta ]\cdot \dot{\gamma}(0)$ is contained in a neighbourhood where $\exp_p $ is a diffeomorphism. This ensures that $\gamma (t_0)$ must come before the first conjugate point of $p$ and so there exists $t_1>t_0$ such that $\gamma |_{[0,t_1]}$ does not contain points conjugate to $p$ along $\gamma $. Thus, there exists a neighbourhood $U \sse T_pM$ of $[0,t_1]\cdot \dot{\gamma}(0)$ such that $\exp_p|_{U}$ is a diffeomorphism onto its image: It clearly is a local diffeomorphism and if it were not injective on any such neighbourhood there would exist vectors $X_k,Y_k\in T_pM$, $X_k \neq Y_k$, converging to $X,Y\in [0,t_1]\cdot \dot{\gamma }(0)$ with $\exp_p(X_k)=\exp_p(Y_k)$, hence $\exp_p(X)=\exp_p(Y)$. Since $\exp_p $ is locally injective $X\neq Y$ but this contradicts $\exp_p $ being injective on $[0,t_1]\cdot \dot{\gamma }(0)$ by causality of $M$.

Now, one can look at the level sets $S_t:=\exp_p(t\,\tilde{U})$, where $\tilde{U}:=\{v
\in U: v\,\mathrm{null},\, g(T,v)=g(\dot{\gamma}(0),T)\}$ for some fixed timelike vector $T\in T_pM$, and their shape operators $\mathbf{S}_{\dot{\gamma}(t)}(t) \colon T_{\gamma(t)}S_t \to T_{\gamma(t)}S_t$ derived from the normal $\dot{\gamma}(t)$. Proceeding as in~\cite[Prop.\ 3.4]{Grant11} one gets that
this shape operator satisfies a Riccati equation along $\gamma $ and $\lim_{t\searrow 0}
t \, \mathbf{S}_{\dot{\gamma}(t)}(t) =\mathrm{id}$. Identifying $T_{\gamma(t)}S_t $ with $[\dot\gamma(t)]^\perp$,
a quick calculation shows that the tensor class $[B]$ along $\gamma $ defined by
$[\dot{B}]= \mathbf{S}_{\dot{\gamma}}[B]$ on $(0,t_1)$ and $[B](t_0)=[A](t_0)$ also satisfies the Jacobi equation
and hence can uniquely be extended to $(-\infty,\infty)$. From the limiting
behaviour of $\mathbf{S}_{\dot{\gamma}(t)}(t)$ as $t\searrow 0$ one gets $[B](0)=0$ and thus by uniqueness of Jacobi tensors $[B]=[A]$ on
$[0,t_1)$, so $\mathbf{S}_{\dot{\gamma}(t)}(t) =[\dot{A}](t)[A]^{-1}(t)$ and $\theta (t)=\mathrm{tr} \,
\mathbf{S}_{\dot{\gamma}(t)}(t)$ for $t<t_1$. Consequently, a negative $\theta (t)$ corresponds to a negative trace of
the shape operator of the spacelike surface $S_t$ with respect to the normal $\dot{\gamma} $.
Since $\mathbf{k}_{S_t}(\dot{\gamma}(t))= -\mathrm{tr} \, \mathbf{S}_{\dot{\gamma}(t)}(t)$ this is equivalent
to $\mathbf{k}_{S_t}(\dot{\gamma}(t))$ being positive.

This condition can now be generalised to $C^{1,1}$-metrics and, as introduced in section~\ref{sec:mainresult}, we give the following definition of a (future) trapped point. Note that this can very roughly be seen as a condition on the mean curvature of the level set $S_t$ (which is now at best Lipschitz) in the sense of support submanifolds and hence bears some similarities to our definition of past-pointing timelike mean curvature for $C^0$-submanifolds.

\begin{definition}\label{def:trappedpt}
We say that a point $p$ is future trapped if for any future-pointing null vector $v \in T_pM$ there exists a $t$ such that there exists a spacelike $C^2$-surface $\tilde{S}\sse J^+(p)$ with $\gamma_v(t)\in \tilde{S}$ and $\mathbf{k}_{\tilde{S}}(\dot{\gamma}_v(t))>0$.
\end{definition}
Using this definition one can easily prove that $E^+(p)$ is compact for a trapped point $p$.
\begin{Proposition}
\label{trappedptistrapped}
Let $(M, g)$ be a strongly causal $C^{1,1}$-spacetime and assume that the null convergence condition
holds. If $p \in M$ is a future trapped point and $M$ is null geodesically complete then $E^+(p)$ is compact.
\end{Proposition}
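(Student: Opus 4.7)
The plan is to mirror the proof of Proposition \ref{trappedsubmfistrapped}, with the fixed trapped submanifold there replaced by the family of $C^2$ spacelike support surfaces supplied pointwise by the trapped-point hypothesis. First I would set $K := \{v \in T_p M : v \text{ future-directed null}, \|v\|_h = 1\}$, which is compact, and observe that Definition \ref{def:trappedpt} furnishes for each $v \in K$ a parameter $t_v > 0$ and a $C^2$ spacelike surface $\tilde{S}_v \subset J^+(p)$ with $\gamma_v(t_v) \in \tilde{S}_v$ and $c_v := \mathbf{k}_{\tilde{S}_v}(\dot{\gamma}_v(t_v)) > 0$.

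The core step is to show that each $\gamma_v$ enters $I^+(p)$ at uniformly bounded affine parameter. For this I would apply Proposition \ref{notmaxtosupportsubmf} to the reparametrised null geodesic $s \mapsto \gamma_v(t_v + s)$, which leaves $\tilde{S}_v$ orthogonally with convergence $c_v$. The codimension is $m = 2$, so the additional curvature condition \eqref{RicciArbCodim} is not imposed directly; it is instead supplied (in the slightly perturbed form actually needed in the proof of Proposition \ref{notmaxtosupportsubmf}) by the null convergence condition via Remark \ref{rem4.5}(i). Consequently, for any $b_v > 1/c_v$ the segment fails to maximise the distance to $\tilde{S}_v$, so $\gamma_v(t_v + b_v) \in I^+(\tilde{S}_v)$; combined with $\tilde{S}_v \subset J^+(p)$ and the push-up lemma, this gives $\exp_p(T_v\, v) \in I^+(p)$ for $T_v := t_v + b_v$, with $\gamma_v$ defined that far by null geodesic completeness. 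Continuous dependence of $C^{1,1}$-geodesics on their initial data, together with openness of $I^+(p)$, then yields an open neighbourhood $U_v \subset K$ of $v$ on which $w \mapsto \exp_p(T_v\, w)$ still lands in $I^+(p)$; a finite subcover of $K$ delivers a uniform bound $T := \max_i T_{v_i}$.

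To finish, I would show $E^+(p) \subset \exp_p([0,T] \cdot K)$, which is compact as the continuous image of a compact set. For any $q \in E^+(p) \setminus \{p\}$, the achronal-boundary theory summarised in Appendix \ref{app:C11causality} produces a future-directed null geodesic from $p$ to $q$, so $q = \exp_p(t w)$ for some $w \in K$ and $t > 0$; if one had $t > T_{v_i}$ for the $U_{v_i}$ containing $w$, then $q \in J^+(\exp_p(T_{v_i} w)) \subset J^+(I^+(p)) = I^+(p)$ by push-up, contradicting $q \in E^+(p)$. Closedness of $E^+(p)$ follows routinely: from $q_n = \exp_p(t_n v_n) \in E^+(p)$ with $t_n \in [0,T]$ and $v_n \in K$ one extracts a convergent subsequence, and openness of $I^+(p)$ ensures the limit lies in $J^+(p) \setminus I^+(p) = E^+(p)$. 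The only genuine obstacle is the correct invocation of Proposition \ref{notmaxtosupportsubmf} with the $v$-dependent support surfaces $\tilde{S}_v$; once that reduction is in place, the remaining topology and compactness bookkeeping is essentially the same as in the submanifold case.
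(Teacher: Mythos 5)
Your proof mirrors the paper's own (which simply declares the argument ``completely analogous to the one of Proposition~\ref{trappedsubmfistrapped}'' and points to Remark~\ref{rem4.5}), and the overall structure — compactness of $K$, finite subcover with a uniform bound $T$, the push-up argument, and closedness of $E^+(p)$ — is sound. There is, however, one small inaccuracy at the core step: you assert that the reparametrised null geodesic $s\mapsto\gamma_v(t_v+s)$ ``leaves $\tilde S_v$ orthogonally'', but Definition~\ref{def:trappedpt} only requires $\gamma_v(t_v)\in\tilde S_v$ and $\mathbf{k}_{\tilde S_v}(\dot\gamma_v(t_v))>0$; the convergence $\mathbf{k}_{\tilde S_v}$ is defined for arbitrary causal vectors in $TM|_{\tilde S_v}$, so $\dot\gamma_v(t_v)$ need not be a null normal to $\tilde S_v$. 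When it is not, Proposition~\ref{notmaxtosupportsubmf} does not apply as stated. The remedy — which the paper implicitly invokes for Proposition~\ref{trappedsubmfistrapped} by citing Remark~\ref{rem4.5} as a whole — is Remark~\ref{rem4.5}(ii): a future-directed causal vector over $\tilde S_v$ that is not a null normal generates a geodesic that enters $I^+(\tilde S_v)$ immediately, yielding an even smaller $T_v$. You cite only part~(i) of that remark (for the curvature hypothesis). Splitting into the two cases (null normal vs.\ not) and adding the reference to Remark~\ref{rem4.5}(ii) makes the reduction watertight; the rest of your argument is then correct.
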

\begin{proof}
The proof is completely analogous to the one of Proposition~\ref{trappedsubmfistrapped}, using that $\tilde{S}$ is a surface and thus condition~\eqref{RicciArbCodim} is not required if the null convergence condition holds (cf.\ Remark~\ref{rem4.5}).
\end{proof}
\section{Proof of the main result}
\label{sec:proof}

As in the smooth case we will first prove a $C^{1,1}$-version of Theorem~\ref{C2HPCausalityVersion}.
To do so, we will roughly follow the original proof in~\cite{HP}. However, we will split the argument into smaller
pieces to better highlight the places where the reduced regularity of the metric has to be taken into account.
In an attempt to keep our presentation concise we start only with the proof of~\cite[Lemma~2.12]{HP} (which will be
Corollary~\ref{cor:27} here), but for completeness all necessary preliminary results are collected in the appendix.
Our notation in this section follows, e.g.,~\cite{ON83}, but is also explicitly defined in the introduction or the
appendix. In what follows we always assume $S$ to be non-empty.

\begin{Lemma}\label{prop:25}
Let  $(M,g)$ be a spacetime with a $C^{1,1}$-metric $g$, let $S$ be an achronal and closed subset of $M$ and suppose
that strong causality holds on $M$. Then $H^+(\overline{E^+(S)})$ is non-compact
or empty.
\end{Lemma}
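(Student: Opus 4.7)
The plan is to argue by contradiction, mirroring the classical strategy of Hawking and Penrose~\cite{HP} while leaning on the $C^{1,1}$-causality machinery collected in Appendix~\ref{app:C11causality}. Suppose $H := H^+(\overline{E^+(S)})$ is non-empty and compact. The first ingredient is the horizon-generator structure: through every point $p \in H$ there passes a past-directed null geodesic $\alpha \subset H$ which, extended maximally to the past within $H$, is either past-inextendible in $M$ or has a past endpoint lying on $\edge(\overline{E^+(S)})$. This is the $C^{1,1}$ analogue of the standard Cauchy horizon result (cf.\ Proposition~\ref{prop:4}) and is the kind of statement one expects to find among the causality preliminaries in the appendix.

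The second ingredient is non-imprisonment under strong causality: no past-inextendible causal curve can be contained in a compact subset of $M$. This immediately rules out the first alternative, since $\alpha$ would then be a past-inextendible causal curve trapped in the compact set $H$. It therefore remains to exclude the possibility that $\alpha$ has a past endpoint $q \in \edge(\overline{E^+(S)})$. Here I would exploit the defining property of edge points: arbitrarily small timelike curves near $q$ can be constructed that avoid $\overline{E^+(S)}$. Combining such a short timelike deviation near $q$ with the generator $\alpha$ run backwards from $p$, and then extending past-inextendibly through the gap into the complement of $\overline{E^+(S)}$ via the push-up lemma, produces a past-inextendible causal curve starting arbitrarily close to $p$ that does \emph{not\/} meet $\overline{E^+(S)}$. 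Since $p \in H \subseteq \overline{D^+(\overline{E^+(S)})}$, this is the desired contradiction.

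The main obstacle is ensuring that every piece of the classical smooth argument -- the generator structure of $H^+$, a limit-curve lemma enabling one to identify $q$ as an edge point, the push-up property for timelike curves, and the characterisation of Cauchy developments -- is actually available at $C^{1,1}$ regularity. Non-imprisonment transfers directly from strong causality. The generator property of $H^+$ and the push-up lemma have, to the best of my knowledge, been established for $C^{1,1}$-metrics in~\cite{KSSV,penrosec11,hawkingc11}, and the proof will simply invoke the corresponding items of Appendix~\ref{app:C11causality} rather than redo them. The delicate point is the edge-case deformation: one must be careful that the approximate timelike curve constructed near $q$ can indeed be concatenated with $\alpha$ to yield a genuinely past-inextendible causal curve, which at $C^{1,1}$ regularity is a matter of applying the Lipschitz causality results already cited in the introduction rather than classical smooth arguments.
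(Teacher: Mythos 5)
The paper's proof follows a genuinely different route. Rather than invoking the null-generator structure of the Cauchy horizon $H^+(\overline{E^+(S)})$ and an imprisonment argument, the authors follow Kriele's Lemma~9.3.2, which builds a contradiction from a curve $\beta_1$ that starts outside $D^+(\overline{E^+(S)})$, ends in $S$, and must therefore meet $H^+(\overline{E^+(S)})$ by Lemma~\ref{lem:12}. The ingredients the paper verifies in low regularity are precisely those needed for that route: Corollary~\ref{cor:22} (the inclusion $H^+(\overline{E^+(S)}) \subseteq H^+(\partial J^+(S))$), Proposition~\ref{prop:10} (structure of $\overline{D^+(A)}$ and $\partial D^+(A)$), and Lemma~\ref{lem:12}. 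The only horizon-generator result the paper establishes for $C^{1,1}$-metrics is Proposition~\ref{prop:4}, which concerns $\partial J^+(S)$, not the Cauchy horizon of a general closed achronal set. Nothing in the appendix provides the $C^{1,1}$-analogue of the classical statement that every point of $H^+(A)$ lies on a null geodesic in $H^+(A)$ that is either past-inextendible or ends on $\edge(A)$, which is the first ingredient your argument relies on. This choice of route looks deliberate: it sidesteps exactly the result you take for granted.

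Beyond this availability issue, the edge case of your argument has a genuine gap. You produce a past-inextendible causal curve "starting arbitrarily close to $p$" that avoids $\overline{E^+(S)}$ and then claim a contradiction with $p \in \overline{D^+(\overline{E^+(S)})}$. But that membership, via Proposition~\ref{prop:10}, only constrains past-inextendible \emph{timelike} curves \emph{through $p$ itself}, while what you have constructed is a causal curve (null along $\alpha$, timelike only near $q$) that starts near, not at, $p$; and since $p$ is on the boundary $H^+$, nearby points need not lie in $D^+(\overline{E^+(S)})$. To close this you would need either a bona fide past-inextendible timelike curve through $p$ avoiding $\overline{E^+(S)}$ (then Proposition~\ref{prop:10} applies), or a starting point provably inside $D^+(\overline{E^+(S)})$ (so the definition of $D^+$ applies to causal curves). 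Neither is delivered by the sketch. Moreover, you never argue that the curve, after the deviation past $q$, stays out of $\overline{E^+(S)}$: the edge property only gives a short timelike arc avoiding $\overline{E^+(S)}$ locally near $q$, and the past-inextendible continuation could in principle re-enter $J^+(S)$. So even granting the generator structure, the edge contradiction as written would not go through, and this is precisely the delicate part of the argument in the Hawking--Penrose/Hawking--Ellis proofs that Kriele's approach avoids.
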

\begin{proof}
The proof is completely analogous to the smooth one found in, e.g.,~\cite[Lemma~9.3.2]{Krie}. Note that Lemma 9.3.1 and Lemma 8.3.8
from that reference still hold (see Corollary~\ref{cor:22} and Proposition~\ref{prop:10}) and that the curve $\beta_1$, which starts outside of $D^+(\overline{E^+(S)})$ and ends in $S$, must intersect $H^+(\overline{E^+(S)})$ by Lemma~\ref{lem:12}.
\end{proof}

\begin{Corollary}
\footnote{cf.~\cite[Lemma~2.12]{HP}}
\label{cor:27}
Let $(M,g)$ be a spacetime with a $C^{1,1}$-metric $g$ that is strongly causal. Let $S\sse M$ be an achronal set and assume that $E^+(S)$ is compact.
Then there exists a future-inextendible timelike curve $\gamma$ contained
in $ \dpep ^\circ$.
\end{Corollary}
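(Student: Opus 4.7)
The plan is to imitate the smooth proof in~\cite[Lemma~2.12]{HP}, taking care that each causal step has a $C^{1,1}$-analogue drawn from Appendix~\ref{app:C11causality}. The crucial input is Lemma~\ref{prop:25}: $H^+(\overline{\ep})$ is empty or non-compact.

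First I would check that $\dpep^{\circ}\neq\emptyset$. Take $p\in\ep$ (non-empty by assumption) and a short future-directed timelike curve $\sigma$ issuing from $p$. Since $\ep\subseteq\partial J^+(S)$ is achronal, the standard argument, analogous to the one used in Proposition~\ref{hypersurfacecase}, gives that the tail of $\sigma$ enters $\dpep^{\circ}$ immediately. In particular, for any $q\in\dpep^{\circ}$ one then has $I^+(q)\cap\dpep^{\circ}\neq\emptyset$, using openness of $\dpep^{\circ}$ together with openness of $I^+(q)$ for $C^{1,1}$-metrics.

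Next I would construct inductively a sequence $(p_n)_{n\geq 0}$ in $\dpep^{\circ}$, starting from $p_0\in\dpep^{\circ}$, with $p_{n+1}\in I^+(p_n)\cap\dpep^{\circ}$, and with the additional property that the sequence eventually exits every compact subset of $M$. Granting this, openness of $\dpep^{\circ}$ allows me to join each consecutive pair $p_n,p_{n+1}$ by a short timelike curve lying in $\dpep^{\circ}$; concatenating these segments produces a future-directed timelike curve $\gamma\subset\dpep^{\circ}$ whose image escapes every compact subset of $M$. By strong causality this curve is necessarily future-inextendible in $M$, giving the conclusion.

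The main obstacle is producing the escaping sequence. Arguing by contradiction, suppose that for some $p_0$ all iterates are trapped in a compact subset $K\subseteq\dpep^{\circ}$. A standard limit-curve argument, which remains valid for $C^{1,1}$-metrics by the Arzel\`a--Ascoli-type results summarised in Appendix~\ref{app:C11causality}, then yields a future-inextendible causal curve with compact image in $\overline{\dpep}$, and its terminal accumulation set $A$ is a non-empty compact subset of $\partial\dpep^{\circ}$. Since $\ep$ lies to the past of $p_0$, the achronality of $\ep$ together with strong causality forces $A\subseteq H^+(\overline{\ep})$. A covering argument using compactness of $\ep$ and of $A$, again via the causality lemmas in the appendix, then produces a compact neighbourhood in $H^+(\overline{\ep})$ which contains every possible such accumulation point, and iterating this over all starting points in $\dpep^{\circ}$ eventually forces $H^+(\overline{\ep})$ itself to be compact, contradicting Lemma~\ref{prop:25}. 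The delicate point to verify carefully in the $C^{1,1}$-setting is precisely this limit-curve step and the identification of the accumulation set with a subset of $H^+(\overline{\ep})$ rather than $\ep$ or the interior, both of which hinge on the $C^{1,1}$-causality results of the appendix rather than on any differentiability properties of $g$.
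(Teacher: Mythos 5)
Your proposal starts from the same key ingredient as the paper---Lemma~\ref{prop:25}, i.e.\ that $H^+(\overline{E^+(S)})$ is empty or non-compact---and the reduction to closed $S$ via Lemma~\ref{lem:7} and the set-up (joining an escaping chain $p_0\ll p_1\ll\dots$ in $D^+(E^+(S))^\circ$ by short timelike segments in the open set) are sound. However, the crucial contradiction step is misformulated and has a genuine gap.

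First, the contradiction hypothesis ``all iterates are trapped in a compact subset $K\subseteq D^+(E^+(S))^\circ$'' is internally inconsistent with the next sentence: if the limit curve is trapped in a compact $K$ contained in the open set $D^+(E^+(S))^\circ$, then its accumulation set $A$ lies inside $K\subseteq D^+(E^+(S))^\circ$, not on the boundary $\partial D^+(E^+(S))^\circ$, so the subsequent assertion that $A\subseteq H^+(\overline{E^+(S)})$ does not follow. (Moreover, that case is ruled out directly by non-total imprisonment under strong causality, so it is not the failure mode one needs to worry about.) The actual alternative to being able to produce a future-inextendible timelike curve inside the open set is that \emph{every} maximal future-directed timelike curve starting on $E^+(S)$ has a future endpoint, and by Proposition~\ref{prop:10} together with achronality of $E^+(S)$ that endpoint must lie on $H^+(E^+(S))$. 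Your argument never engages with this case. Second, the concluding claim that a ``covering argument \dots\ iterating over all starting points eventually forces $H^+(\overline{E^+(S)})$ itself to be compact'' is not a valid deduction: nothing in the limit-curve machinery guarantees that the union of accumulation sets exhausts $H^+$, nor that it forms a compact set.

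The paper closes this gap differently. Assuming, towards a contradiction, that every timelike curve meeting $E^+(S)$ also meets $H^+(E^+(S))$, one uses the flow of a globally defined smooth timelike vector field to assign to each $p\in E^+(S)$ the first point where the flow line through $p$ hits $H^+(E^+(S))$. Using that $H^+(\partial J^+(S))$ is a closed achronal topological hypersurface (Lemma~\ref{lem:16}), this hitting map is continuous, and in fact a homeomorphism of $E^+(S)$ onto $H^+(E^+(S))$. Hence $H^+(E^+(S))$ would be non-empty and compact, contradicting Lemma~\ref{prop:25}. It is this flow-map argument, not a limit-curve covering argument, that converts the non-compactness of the horizon into the existence of a timelike curve that never leaves $D^+(E^+(S))^\circ$, which one then extends to be future-inextendible inside the open set.
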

\begin{proof}
The proof is completely analogous to the smooth case,~\cite[Lemma~2.12]{HP}.
By Lemma~\ref{lem:7} we may assume that
$S$ is closed.
The idea is that, if every timelike curve that meets
$E^+(S)$ also meets $\hpep$ (or equivalently leaves $\dpep^\circ$), then, using that $\hpjp $ is a topological hypersurface by Lemma~\ref{lem:16}, one can define a continuous map from
$E^+(S)$ to $\hpep$ via
the flow of a smooth timelike vector field. This gives a contradiction since $E^+(S)$ is non-empty
and compact but $\hpep$ is empty or non-compact by Lemma~\ref{prop:25}.
\end{proof}

The next Lemma will extract the part of the proof of Theorem~\ref{C11HPCausalitybit}, where the original proof (and also the one in~\cite{Seno1}) argues using the continuous dependence of conjugate points on the geodesic, which is evidently a problem for $C^{1,1}$-metrics. There are, however, smooth proofs that avoid this, see e.g.~\cite[Lemma 9.3.4]{Krie}. While that proof should also work in $C^{1,1}$ (and we will refer to parts of it), we will still present a different argument of the crucial step more in line with the original proof.

\begin{Lemma}
\footnote{cf.~\cite[pp.~545]{HP}.}
\label{prop:28}
Let $(M,g)$ be a spacetime with a $C^{1,1}$-metric $g$ that is strongly causal and assume that no inextendible null geodesic in $M$ is globally maximising. Let $S$ be achronal and assume that $E^+(S)$ is compact, and let $\gamma$ be a future inextendible timelike curve contained in $D^+\left(E^+(S)\right)^\circ$. Then $F := E^+(S) \cap \overline{J^{-}\left(\gamma\right)}$ is achronal and $E^-(F)$ is compact.
\end{Lemma}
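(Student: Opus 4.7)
The proof naturally splits into establishing the achronality and compactness of $F$, and then the compactness of $E^-(F)$.

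For the first part, I would first observe that $E^+(S)$ is achronal: if $p\ll q$ both lay in $E^+(S)$, then any $s\in S$ with $s\preceq p$ would give $s\ll q$ by push-up, forcing $q\in I^+(S)$ and contradicting $q\in E^+(S)$. Thus $F\subseteq E^+(S)$ is achronal; and as a closed subset of the compact set $E^+(S)$ (intersected with the closed set $\overline{J^-(\gamma)}$), it is compact as well.

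For compactness of $E^-(F)$ I would argue by contradiction, following the spirit of the original Hawking--Penrose proof but with ``conjugate points'' replaced by ``failure of global maximality'' (cf.\ Theorem~\ref{3.4}). If $E^-(F)$ were not compact, one obtains a sequence $q_n\in E^-(F)$ with no convergent subsequence. Each $q_n$ is connected to some $p_n\in F$ by a future-directed null generator of $\partial J^-(F)$ (Proposition~\ref{prop:4}). Normalising these generators at their $F$-ends by an auxiliary Riemannian background metric $h$ and exploiting compactness of $F$, compactness of the $h$-unit past-null direction bundle over $F$, and the Lipschitz dependence of $C^{1,1}$-geodesics on initial data, the affine parameter lengths $a_n$ must tend to infinity; otherwise a straightforward limit-curve argument would provide a limit point in $E^-(F)$, contradicting the choice of the sequence. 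Reversing the generators and passing to a further subsequential limit produces a past-directed past-inextendible null geodesic $\beta\colon[0,A)\to M$ with $\beta(0)=p\in F$ lying entirely in $E^-(F)$: $\beta(t)\preceq p\in F$ via $\beta$ itself, so $\beta(t)\in J^-(F)$, while $\beta(t)\notin I^-(F)$ follows from closedness of the complement of the open set $I^-(F)$.

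A brief causality argument then shows that $\beta$ is globally maximising on $[0,A)$: the existence of parameters $0\leq s<t<A$ with $\beta(t)\ll \beta(s)$, combined with $\beta(s)\preceq p\in F$, would place $\beta(t)$ in $I^-(F)$, contradicting $\beta(t)\in E^-(F)$. I would then maximally extend $\beta$ to an inextendible null geodesic $\tilde\beta$ in $M$, and apply the hypothesis of the Lemma: $\tilde\beta$ is not globally maximising, so there exist parameters $s_0<t_0$ in its domain with $\tilde\beta(t_0)\ll \tilde\beta(s_0)$. If $s_0\geq 0$, both witnesses lie in the past-inextendible segment on which $\beta$ maximises, giving the contradiction directly.

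The main obstacle is the remaining case $s_0<0$, in which the timelike shortcut straddles the base-point $p\in F$ via the future extension of $\beta$. I plan to resolve this by exploiting the openness of $\ll$, the achronality of $F$, and strong causality at $p$: the set $U:=\{s:\tilde\beta(t_0)\ll \tilde\beta(s)\}$ is open and contains $s_0$, and the goal is to show $U$ must extend to some parameter $\geq 0$ along the null geodesic $\tilde\beta$, so that in particular $\tilde\beta(t_0)\ll p$, placing $\tilde\beta(t_0)\in I^-(F)$ and contradicting $\tilde\beta(t_0)\in E^-(F)$. Carrying out this push-up step in the $C^{1,1}$-setting, where pointwise conjugate points are not available, is the crux of the argument; I expect it to require either an approximation argument in the spirit of Section~\ref{sec:max_geods} or a careful use of strong causality at $p$ to prevent the chronological relation from ``stopping short'' of $p$.
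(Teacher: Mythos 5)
Your treatment of the first part (achronality and compactness of $F$) matches the paper. The real work is the compactness of $E^-(F)$, and here you have correctly located where the difficulty lies, but the route you take cannot be completed, and your proposed fix for the $s_0<0$ case is flawed in a way that no amount of refinement will repair.

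The defect is structural. After establishing that $F$ is compact you drop the curve $\gamma$ entirely and try to derive a contradiction from null generators of $\partial J^-(F)$ alone. You correctly observe that a past-inextendible limit generator $\beta\subseteq E^-(F)$ is maximising between any two of its own points (both lie in the achronal set $E^-(F)$). But when you extend $\beta$ to the future past $p=\beta(0)\in F$, the future half $\tilde\beta|_{[s_0,0)}$ with $s_0<0$ is not contained in any achronal set; nothing constrains it. So when the hypothesis delivers $\tilde\beta(t_0)\ll\tilde\beta(s_0)$ with $s_0<0<t_0$, the relations you have are $\tilde\beta(t_0)\leq p\leq\tilde\beta(s_0)$ together with $\tilde\beta(t_0)\ll\tilde\beta(s_0)$, and this configuration is not push-up-able: neither Lemma~\ref{pu} nor Lemma~\ref{pu:C11} yields $\tilde\beta(t_0)\ll p$. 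The set $U=\{s:\tilde\beta(t_0)\ll\tilde\beta(s)\}$ is indeed open, but openness says nothing about it containing $0$; a chronological relation can ``die'' strictly before $p$ along the null geodesic (exactly as happens beyond a cut point in the smooth case), and strong causality at $p$ gives no leverage here. The failure of global maximality of $\tilde\beta$ can be witnessed entirely within the future half $s<0$, which then places no two points of $E^-(F)$ in a chronological relation and produces no contradiction. There is also a secondary issue in your limit-curve construction: points of $E^-(F)$ need not be joined to $F$ by a generator of $\partial J^-(F)$ with endpoint in $F$; Proposition~\ref{prop:4} allows the generator to be future-inextendible without ever reaching $\bar F$.

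The paper's argument avoids all of this by keeping $\gamma$ in play. One first shows $E^-(F)\subseteq F\cup\partial J^-(\gamma)$ (Kriele's Lemma~9.3.4), then considers, for each past-pointing causal $v\in TM|_F$, the past-inextendible geodesic $c_v$, which is automatically contained in $\overline{J^-(\gamma)}$ by the definition of $F$. If $c_v$ never met $I^-(\gamma)$ it would be a null geodesic in $\partial J^-(\gamma)\setminus E^-(\gamma)$, and the key step is to supply the \emph{future} half not by arbitrary geodesic extension but by the future-directed inextendible null generator $\lambda$ of $\partial J^-(\gamma)$ emanating from $c_v(0)$ (Proposition~\ref{prop:4}). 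The concatenation $c_v\lambda$ then lies entirely in the achronal set $\partial J^-(\gamma)$, so once it is seen to be non-maximising --- by Lemma~\ref{pu:C11} if broken, by hypothesis if an unbroken inextendible null geodesic --- the timelike shortcut contradicts achronality of $\partial J^-(\gamma)$. The lesson is that the hypothesis ``no inextendible null geodesic is globally maximising'' is only useful here when both halves of the assembled inextendible geodesic live inside a single achronal set, and it is the presence of $\gamma$ that supplies such a set for the future half. Once one knows every $c_v$ meets $I^-(\gamma)$, compactness of $E^-(F)$ follows by the finite-cover and limit argument, as in the paper. I would encourage you to restructure around this use of $\partial J^-(\gamma)$.
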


\begin{proof}
By Lemma~\ref{lem:7} we may without loss of generality assume that $S$ is closed.
Since $F \subseteq E^+(S)$ and $E^+(S)$ is achronal, it follows that $F$ is achronal. Moreover, $E^+(S)$ is, by assumption, compact and $\overline{J^{-}\left(\gamma\right)}$ is closed, therefore $F$ is compact. We need to show that $E^{-}(F)$ is compact.
To do so, first note that the same arguments as in~\cite[Lemma~9.3.4]{Krie} show that
\begin{equation}
E^{-}(F) \sse F \cup \partial J^{-}(\gamma).
\label{eq:E-(F) without F subset jp gamma}
\end{equation}
	
	Now let $v \in TM|_{F}$ be past pointing causal. Then, by the definition of $F$, the past inextendible geodesic $c_{v} \colon [0, b) \to M$
	with initial velocity $\dot c_{v}(0)=v$ must be contained in $\overline{J^{-}(\gamma)}$. We show that $c_{v} \cap I^{-}(\gamma) \neq \emptyset$:
	If $c_{v}$ never met $I^{-}(\gamma)$ it would have to be a null
	geodesic and lie entirely in $\partial J^{-}(\gamma) \setminus E^{-}(\gamma)$
	(since $E^{-}(\gamma) = \emptyset$ because $\gamma $ is future inextendible timelike). In particular
	$c_{v}(0)\in\partial J^{-}(\gamma)\setminus E^{-}(\gamma)$, so by
	Proposition~\ref{prop:4}
(note that the image of $\gamma$ is a closed set by Lemma~\ref{lem: 28 the new one}),
	there exists a future directed, future inextendible null geodesic $\lambda$ that starts
	at $c_{v}(0)$ and is contained in $\partial J^{-}(\gamma)$. But
	then $c_{v}\lambda$ either is an inextendible broken null geodesic, hence
	not maximizing by Lemma~\ref{pu:C11}, or it is an inextendible unbroken null geodesic, hence
	not maximizing by assumption.
	Hence by Lemma~\ref{pu}, $c_{v}\lambda$ cannot lie entirely in $\partial J^{-}(\gamma)$,
	giving a contradiction.
	Consequently, for all $v \in TM|_F$, there exists a $t_v$ with $c_v(t_v)\in I^{-}(\gamma)$. Since $I^-(\gamma )$ is open there exists a neighbourhood $U_v \subseteq TM$ of $v$ such that $c_w$ is defined on $[0,t_v)$ and $c_w(t_v)\in I^-(\gamma )$ for all $w\in U_v$. By compactness of $F$ one can cover the set of all $h$-unit, past pointing causal vectors in $TM|_F$ by finitely many of these neighbourhoods, which shows that $E^-(F)\cap \partial J^{-}(\gamma)$ is relatively compact. In fact, it is actually compact as can easily be seen using a limit argument as in the final part of the proof of Proposition~\ref{trappedsubmfistrapped} (which does not use null completeness).
	This shows that $E^-(F)$ is compact by~\eqref{eq:E-(F) without F subset jp gamma} and compactness of $F$.
\end{proof}

Combining these preliminary results allows us to prove the low-regularity version of Theorem~\ref{C2HPCausalityVersion}.
Again the argument proceeds very similarly to the smooth case, but we nevertheless give a complete proof.

\begin{Theorem}
\label{C11HPCausalitybit}
Let $(M,g)$ be a spacetime with a $C^{1,1}$-metric $g$. Then the following four conditions cannot all hold:
\begin{enumerate}[label={(C.\roman*)}, ref={C.\roman*}]
\item\label{thm:33:1} $M$ contains no closed timelike curves;
\item\label{thm:33:2} Every inextendible timelike geodesic contained in an open globally hyperbolic subset stops being maximizing;
\item\label{thm:33:3} Every inextendible null geodesic stops being maximizing;
\item\label{thm:33:4} There is an achronal set $S$ such that $E^+(S)$ or $E^-(S)$ is compact.
\end{enumerate}
\end{Theorem}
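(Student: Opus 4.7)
The plan is to mirror the classical Hawking--Penrose argument while appealing to the $C^{1,1}$-specific ingredients established above (Lemma~\ref{prop:25}, Corollary~\ref{cor:27}, Lemma~\ref{prop:28}) together with the $C^{1,1}$-causality toolkit collected in Appendix~\ref{app:C11causality}. Assume, for contradiction, that (\ref{thm:33:1})--(\ref{thm:33:4}) all hold. First I would invoke the strong-causality implication of conjugate/maximising behaviour (Lemma~\ref{lem: 19 conj. pts imply strong causality}) to deduce from (\ref{thm:33:1}) and (\ref{thm:33:3}) that $M$ is strongly causal. By a time-reversal if necessary, assume $\ep$ is compact. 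Corollary~\ref{cor:27} then furnishes a future-inextendible timelike curve $\gamma \sse \dpep^\circ$, and applying Lemma~\ref{prop:28} to $\gamma$ (which uses precisely (\ref{thm:33:3})) produces an achronal compact set $F := \ep \cap \overline{J^{-}(\gamma)}$ with $E^{-}(F)$ compact.

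Next I would apply the time-dual of these two results to $F$: the dual of Corollary~\ref{cor:27} yields a past-inextendible timelike curve $\gamma' \sse D^{-}(E^{-}(F))^\circ$, and the dual of Lemma~\ref{prop:28} then gives an achronal compact set $F' := E^{-}(F) \cap \overline{J^{+}(\gamma')}$ with $E^{+}(F')$ compact. By the standard causality machinery---now available in the $C^{1,1}$-setting thanks to Appendix~\ref{app:C11causality}---the open set $N := \mathrm{int}\, D(F')$ is globally hyperbolic, and by construction $F'$ is a compact achronal slice of $N$ with $\gamma \sse I^{+}(F', N)$ eventually, and $\gamma' \sse I^{-}(F', N)$ eventually.

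Finally I would construct the contradicting inextendible maximising timelike geodesic by a limiting procedure. Pick sequences $p_n$ on $\gamma$ escaping to future infinity and $q_n$ on $\gamma'$ escaping to past infinity; for $n$ large enough $q_n \ll p_n$ inside $N$, so global hyperbolicity of $N$ yields a maximising $g$-causal geodesic $\sigma_n \sse N$ from $q_n$ to $p_n$. Each $\sigma_n$ must cross $F'$, so I parametrise so that $\sigma_n(0) \in F'$ and $\|\dot \sigma_n(0)\|_h = 1$ for a fixed Riemannian background $h$. By compactness of $F'$ together with the Arzel\`a--Ascoli-type compactness for unit causal vectors, and by continuous dependence of $C^{1,1}$-geodesics on initial data (Picard--Lindel\"of), a subsequence converges in $C^1$ on compact sets to a $g$-geodesic $\sigma$ through $F'$. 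Lower semicontinuity of the Lorentzian time-separation for $C^{1,1}$-metrics (a consequence of the reasoning underlying Lemma~\ref{d_conv}) then forces $\sigma$ to maximise between any two of its points. Since $p_{n_0} \gg q_n$ for suitable indices, $\sigma$ must be timelike, and since each $\sigma_n$ connects points escaping to opposite infinities along $\gamma'$ and $\gamma$ inside the globally hyperbolic $N$, the limit $\sigma$ is inextendible within $N$. Thus $\sigma$ is an inextendible globally maximising timelike geodesic in the open globally hyperbolic subset $N$, directly contradicting (\ref{thm:33:2}).

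The main obstacle I expect lies in the last step, specifically in controlling the limit curve $\sigma$: one must ensure that (i) the $C^1$-limit of the $\sigma_n$ is a genuine $g$-geodesic (which relies on continuous dependence of solutions of the geodesic equation for a $C^{1,1}$-metric on initial data), (ii) $\sigma$ is timelike rather than null---so that the contradiction is with (\ref{thm:33:2}) and not with the already-used (\ref{thm:33:3})---which follows from the strict timelike separation of tail points of $\gamma$ from tail points of $\gamma'$, and (iii) $\sigma$ is inextendible within $N$ and globally maximising, which requires the global hyperbolicity of $N$ and compactness of causal diamonds in $N$ to rule out escape to $\partial N$ in finite affine parameter. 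All three points are classical in the smooth case, and remain valid in the $C^{1,1}$-setting via the regularisation results of Section~\ref{reg_sec}, Lemma~\ref{d_conv}, and the causality results summarised in Appendix~\ref{app:C11causality}.
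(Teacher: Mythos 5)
Your plan follows the same overall skeleton as the paper's proof, but there are three genuine gaps, one of which is precisely the step where the bulk of the work lies.

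First, the paper does \emph{not} introduce the second set $F'$: it works directly with $D(E^-(F))^\circ$ and spends an entire paragraph proving the non-obvious containment $\gamma \subseteq D^+(E^-(F))^\circ$ (via Lemma~\ref{lem:7}, Proposition~\ref{prop:10} and Lemma~\ref{lem:15}), so that both $\gamma$ and the dual curve $\lambda$ live inside the globally hyperbolic set $D(E^-(F))^\circ$. In your version, you apply the dual of Lemma~\ref{prop:28} once more to produce $F' = E^-(F) \cap \overline{J^+(\gamma')}$ and set $N := D(F')^\circ$, but you then merely \emph{assert} that ``$\gamma \sse I^+(F',N)$ eventually'' and ``$\gamma' \sse I^-(F',N)$ eventually''. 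Since $F'$ is a priori a proper subset of $E^-(F)$, $D^+(F')$ can be strictly smaller than $D^+(E^-(F))$, and there is no reason the tail of $\gamma$ (which you know lies in $D^+(E^+(S))^\circ$, hence, after the paper's argument, in $D^+(E^-(F))^\circ$) should lie in $D^+(F')^\circ$. You would need a fresh argument showing that every past-inextendible causal curve starting on $\gamma$ already meets $F'$, not just $E^-(F)$, and this is not ``by construction''. The detour through $F'$ thus makes the hard step harder rather than bypassing it; the simpler route is to drop $F'$ entirely and prove the containment into $D(E^-(F))^\circ$ as the paper does.

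Second, once you have the limit geodesic $\sigma$, you still need to show it is contained in the globally hyperbolic subset in order to contradict (\ref{thm:33:2}). The paper devotes a careful paragraph to showing $\tilde\gamma \subseteq D(E^-(F))^\circ$ (ruling out that $\tilde\gamma$ touches $H^\pm(E^-(F))$ via Lemma~\ref{lem:15}, then using achronality of $E^-(F)$ together with compactness of causal diamonds). You only assert ``inextendible within $N$'' without showing $\sigma \subseteq N$ rather than just $\sigma \subseteq \overline{N}$; this must be proven.

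Third, your claim ``Since $p_{n_0}\gg q_n$ for suitable indices, $\sigma$ must be timelike'' is an overclaim. The timelike separation of the endpoints $q_n$ and $p_n$ does not force the limit of maximising geodesics between them to be timelike (the affine parameter ranges escape to infinity). The paper therefore allows for both cases: if the limit geodesic is null, it is an inextendible maximising null geodesic contradicting (\ref{thm:33:3}); only if it is timelike does one invoke (\ref{thm:33:2}). Your argument should be restructured accordingly, since as written you discard the null case without justification.

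Finally, one small point of detail: the paper obtains the limit curve by requiring the intersection points $r_k \in E^-(F)$ to converge and appealing to the limit-curve theorem (Theorem~\ref{prop:MinguzziLimit}) with the notion of a ``limit maximising'' sequence; lower semicontinuity of the Lorentzian distance alone (the ingredient you cite) is not the right tool here, because the segments of $\sigma$ between two of its points are not of the form $\sigma_n|_{[a,b]}$ with fixed endpoints. Upper semicontinuity of the length functional (\cite[Thm.~6.3]{S14}) enters as well. This is a minor technical correction, but it matters for the $C^{1,1}$ bookkeeping that you otherwise handle correctly.
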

\begin{proof}
We assume, to the contrary, that all four conditions hold. From conditions~(\ref{thm:33:3}) and~(\ref{thm:33:1}), Lemma~\ref{lem: 19 conj. pts imply strong causality} implies that $M$ is strongly causal. In condition~(\ref{thm:33:4}) we assume, without loss of generality, that $E^+(S)$ is compact.

Let $\gamma$ be a future inextendible timelike curve contained in $D^+\left(E^+(S)\right)^\circ$ given by Corollary~\ref{cor:27}, and let $F := E^+(S) \cap \overline{J^{-}(\gamma)}$ as in Lemma~\ref{prop:28}. Then, by Lemma~\ref{prop:28}, the set $F$ is achronal and $E^-(F)$ is compact. Therefore, by Corollary~\ref{cor:27}, there exists a past-inextendible timelike curve $\lambda$ contained in the set $D^{-}(E^{-}(F))^\circ$.

Next we show $\gamma \subseteq D^{+}(E^{-}(F))^\circ$: We have $\gamma \subseteq D^+\left(E^+(S)\right)^\circ$, so every past
inextendible causal curve starting at $\gamma$ must meet $E^+(S)$. This meeting point is obviously in $J^{-}(\gamma) \subseteq \overline{J^{-}(\gamma)}$, so every past inextendible causal curve starting at $\gamma$ meets $E^+(S) \cap \overline{J^{-}(\gamma)} = F$, which gives $\gamma \subseteq D^{+}(F)$. Also $\gamma$ cannot meet $\partial D^{+}(F)$, which is equal to $F \cup H^{+}(F)$ by Proposition~\ref{prop:10}, since $F \subseteq E^+(S)$ and $\gamma \subseteq D^{+}(E^+(S))^\circ$ and if $\gamma$ met $H^{+}(F)$ it would also meet $I^{+}(H^{+}(F))$ by being timelike,
hence leave $D^{+}(F)$ (by Lemma~\ref{lem:15}). This means that $\gamma \subseteq D^{+}(F)^\circ \subseteq D^{+}(E^{-}(F))^\circ$ (by achronality of $F$).

So both $\gamma$ and $\lambda$ are contained in $D(E^{-}(F))^\circ$. By~\cite[Thm.~A.22]{hawkingc11}, $D(E^{-}(F))^\circ$ is globally hyperbolic. Now choose sequences $\left\{ p_{k}\right\} \subseteq \lambda$ and $\left\{ q_{k}\right\} \subseteq \gamma$ with the following properties:
\begin{enumerate}
\item $p_{k+1}\in I^{-}(p_{k})$ and $q_{k+1}\in I^{+}(q_{k})$,
\item both $\left\{ p_{k}\right\} $ and $\left\{ q_{k}\right\} $ leave every compact subset of $M$,
and
\item $q_{1}\in I^{+}(p_{1})$. To see that this is possible, note that $\lambda \sse J^-(E^-(F))\sse J^-(F) \sse J^-(\overline{J^-(\gamma )} )$ and since $\lambda $ is timelike Lemma~\ref{pu} gives that $\lambda \sse I^-(\overline{J^-(\gamma )})=I^-(\gamma)$.
\end{enumerate}
By~\cite[Prop.~6.4]{S14}
there exist maximizing causal curves $\gamma_{k} \colon [a_k,b_k]\to M$ from $p_{k}$ to $q_{k}$. Each $\gamma_{k}$ must intersect $E^{-}(F)$
(because it connects $D^{-}(E^{-}(F))^{\circ}$ with $D^{+}(E^{-}(F))^{\circ}$, cf.\ the remark preceding~\cite{ON83}, Lemma 14.37) in some point $r_{k}$. By compactness of $E^{-}(F)$ (see Lemma~\ref{prop:28}) we may assume that $r_{k}\to r$ after passing to a subsequence if necessary, so there exists a causal limit curve $\tilde\gamma$ by Theorem~\ref{prop:MinguzziLimit}.

Now because every $\gamma_{k}$ is maximising the sequence $\left\{ \gamma_{k}\right\} $ is limit maximising in the sense of~\cite[Def.~2.11]{Minguzzi_LimitCurveThms} and thus
$\tilde\gamma$ has to be maximising (again by Theorem~\ref{prop:MinguzziLimit}).
Also, since $\left\{ p_{k}\right\} $ and $\left\{ q_{k}\right\} $ leave every compact set, $\tilde\gamma$ is inextendible. Because $\tilde\gamma$ is maximising it has to be a geodesic (cf.~\cite[Thm.~1.23]{M}).

If $\tilde\gamma$ is null this immediately contradicts the third assumption and we are done. Since $D(E^-(F))^\circ $ is globally hyperbolic, to establish a contradiction to condition~\ref{thm:33:2} it only remains to show that $\tilde\gamma \subseteq D(E^-(F))^\circ $ if it is timelike. Since it is the limit of the $\gamma_k$'s we certainly have $\tilde\gamma \subseteq \overline{D(E^-(F))} $. Now, Proposition~\ref{prop:10} implies
\[
\partial D(E^-(F)) \subseteq H^+(E^-(F))\cup E^-(F) \cup H^-(E^-(F)).
\]
Since $I^+(H^+(E^-(F)))=I^+(E^-(F))\setminus \overline{D^+(E^-(F))}$ (see Lemma~\ref{lem:15}) it follows that $\tilde\gamma \cap H^+(E^-(F)) = \emptyset $ and, analogously, $\tilde\gamma \cap H^-(E^-(F)) = \emptyset $. Now assume there exists $t_0$ such that $\tilde\gamma(t_0)\in E^-(F)$. We show that then $\tilde\gamma(t_0)\in D(E^-(F))^\circ $. By achronality of $E^-(F)$ and the above we get \[ \tilde\gamma(t_0+1)\in \overline{D^+(E^-(F))}\setminus (E^-(F)\cup H^+(E^-(F)))=D^+(E^-(F))^\circ \subseteq D(E^-(F))^\circ \] and by the same argument also $\tilde\gamma(t_0-1) \in D(E^-(F))^\circ $. But then since $D(E^-(F))^\circ $ is globally hyperbolic we have that the causal diamond $J(\tilde\gamma(t_0-1),\gamma_v(t_0+1)) \subseteq D(E^-(F))^\circ$ and hence $\tilde\gamma(t_0)\in D(E^-(F))^\circ$.
\end{proof}

Collecting this and the results established in the previous sections, we are now in the position to prove Theorem~\ref{HPinC1,1} and Theorem~\ref{ArbitrCodiminC1,1}.

\subsection*{Proof of the Hawking--Penrose Theorem for $C^{1,1}$-metrics}
\begin{proof}
We show that, for a causally geodesically complete spacetime $(M, g)$, assumptions~\ref{HPinC1,1:1} to~\ref{HPinC1,1:4} in Theorem~\ref{HPinC1,1} and Theorem~\ref{ArbitrCodiminC1,1} imply that conditions~(\ref{thm:33:1}) to~(\ref{thm:33:4}) of Theorem~\ref{C11HPCausalitybit} are satisfied.

Clearly, causality is a stronger assumption than being chronological, so~\ref{HPinC1,1:1} implies~(\ref{thm:33:1}). Theorem~\ref{timelikenotmax} shows that the strong energy and the genericity conditions (i.e.\ assumptions~\ref{HPinC1,1:2} and~\ref{HPinC1,1:3} of Theorem~\ref{HPinC1,1}) imply that condition~(\ref{thm:33:2}) of Theorem~\ref{C11HPCausalitybit} is satisfied. Similarly, Theorem~\ref{3.4} shows that assumptions~\ref{HPinC1,1:1},~\ref{HPinC1,1:2} and~\ref{HPinC1,1:3} of Theorem~\ref{HPinC1,1} imply that condition~(\ref{thm:33:3}) of Theorem~\ref{C11HPCausalitybit} holds.

Finally, Proposition~\ref{hypersurfacecase} shows that assumption~\ref{HPinC1,1:4i} implies condition~(\ref{thm:33:4}). Since we have already established that conditions~(\ref{thm:33:1}) and (\ref{thm:33:3}) of Theorem~\ref{C11HPCausalitybit} hold, Lemma~\ref{lem: 19 conj. pts imply strong causality} in the appendix implies that $(M, g)$ is strongly causal. Therefore, one can apply Proposition~\ref{trappedsubmfistrapped} (with Corollary~\ref{achronalnotnecessary}) and Proposition~\ref{trappedptistrapped} to show that any one of the assumptions~\ref{HPinC1,1:4ii},~\ref{HPinC1,1:4iii} or~\ref{HPinC1,1:4iv} (together with assumptions~\ref{HPinC1,1:1}--\ref{HPinC1,1:3}), implies that condition~(\ref{thm:33:4}) of Theorem~\ref{C11HPCausalitybit} is satisfied.
\end{proof}
\appendix
\section{Causality results in $C^{1,1}$}
\label{app:C11causality}

Standard expositions of causality theory (\cite{HE,Seno1,Seno2,Chrusciel_causality,ladder}) usually assume the metric to be
at least $C^2$. Most results, however,
remain true for $C^{1,1}$-metrics, see~\cite{CG,M,KSSV} and the appendix of~\cite{hawkingc11}. In this appendix we will collect further results that are not included in these previous works, but are necessary for the proof of Theorem~\ref{C11HPCausalitybit}.

In the following we will always assume that $(M,g)$ is a spacetime with a $C^{1,1}$-metric unless explicitly stated otherwise. We also fix a smooth Riemannian background metric $h$.

\subsection{Limit curves and the structure of $\partial J^+(S)$}

Two important results from~\cite{CG} are that $I^\pm(S)$ is open (\cite[Prop.~1.21]{CG}) and that the push-up principle remains true (\cite[Lem.~1.22]{CG}) for causally plain spacetimes. As these include the class of spacetimes with Lipschitz
continuous metrics (\cite[Cor.~1.17]{CG}), we have

\begin{Lemma} Let $S\sse M$. Then $I^\pm(S)$ is open.
\end{Lemma}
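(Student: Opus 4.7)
The plan is to reduce the statement to openness of $I^{\pm}(p)$ for individual points $p\in M$, since
\[
I^{\pm}(S) \;=\; \bigcup_{p\in S} I^{\pm}(p),
\]
and a union of open sets is open. So it suffices to show that for every $p\in M$ the set $I^{+}(p)$ (the case $I^{-}(p)$ being symmetric) is open.

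First I would invoke the fact, already used repeatedly in the excerpt, that a $C^{1,1}$-metric is in particular locally Lipschitz, and hence by~\cite[Cor.~1.17]{CG} the spacetime $(M,g)$ is \emph{causally plain}. This is the crucial regularity input, because for merely continuous metrics $I^{+}(p)$ need not be open in general (this is exactly the phenomenon Chru\'sciel and Grant identified via the notion of a causal bubble), so one really needs some regularity above $C^0$.

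Now fix $p\in M$ and $q\in I^{+}(p)$, and let $\sigma\colon[0,1]\to M$ be a future-directed (Lipschitz) timelike curve with $\sigma(0)=p$ and $\sigma(1)=q$. I would pick a point $q':=\sigma(1-\delta)$ slightly before $q$, so that $q'\ll q$ and $p\ll q'$. Choose a relatively compact coordinate neighbourhood $U$ of $q$ in which $g$ is close (in the $d_h$-sense of Proposition~\ref{CGapprox}) to a flat Minkowski metric, so that there is an open $g$-timelike double cone at $q'$ whose future lobe contains an open neighbourhood $V\subseteq U$ of $q$. Every point $q''\in V$ then satisfies $q'\ll q''$ via a short (essentially straight-line) timelike segment in $U$. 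Concatenating the timelike curve from $p$ to $q'$ with this short timelike segment from $q'$ to $q''$ and applying the push-up principle~\cite[Lem.~1.22]{CG} (valid precisely because $(M,g)$ is causally plain) gives a timelike curve from $p$ to $q''$, i.e.\ $q''\in I^{+}(p)$. Hence $V\subseteq I^{+}(p)$, proving openness.

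The main obstacle, and the only place where regularity really matters, is the application of push-up: in a general continuous Lorentzian manifold a causal curve concatenated with a timelike one need not be deformable to a timelike curve, and without this one cannot promote the neighbourhood $V$ from a set of points causally related to $q'$ to one chronologically related to $p$. For $C^{1,1}$-metrics this difficulty is absent thanks to causal plainness, so the argument reduces to the same elementary convex-neighbourhood/push-up reasoning as in the smooth case.
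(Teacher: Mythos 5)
The paper's ``proof'' of this lemma is simply a citation: $I^\pm(S)$ is open because $(M,g)$ is causally plain (locally Lipschitz metrics are causally plain by~\cite[Cor.~1.17]{CG}), and for causally plain spacetimes openness of $I^\pm$ is exactly~\cite[Prop.~1.21]{CG}. You correctly identify causal plainness as the relevant regularity input and correctly reduce to openness of $I^+(p)$, but you then try to derive openness \emph{from scratch}, and the key step does not go through. You claim that, in a chart where $g$ is $d_h$-close to a flat Minkowski metric $\eta$, the fact that $q'\ll_g q$ forces $q$ to lie in the interior of an ``open $g$-timelike double cone at $q'$'' whose lobe is swept out by straight segments. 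That is exactly what has to be proved, and closeness to flatness does not give it: from $q'\ll_g q$ one only gets $q'\ll_{\eta_2}q$ for a wider flat metric $\eta_2 \succ g$, so the straight displacement $q-q'$ is $\eta_2$-timelike --- but $\eta_2$-timelike does \emph{not} imply $g$-timelike, since $g$ has strictly narrower cones. A $g$-timelike Lipschitz curve from $q'$ to $q$ may hug the $g$-null cone, and then the chord $q-q'$ can be $g$-spacelike at some (indeed most) of its points. So the existence of the open set $V$ with $V\subseteq I^+_g(q')$ is asserted, not established.

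Two further remarks. First, your invocation of push-up~\cite[Lem.~1.22]{CG} at the end is spurious: you concatenate a timelike curve with a (claimed) timelike segment, and a concatenation of two timelike Lipschitz curves is already timelike --- push-up is only needed to promote a causal-then-timelike concatenation, which is not what you have. So push-up is neither needed nor does it repair the gap above. Second, the gap is repairable, but not with the flat-approximation picture as stated: one can instead work in $g$-normal (totally normal) coordinates at $q'$ --- these exist for $C^{1,1}$ metrics by~\cite{KSS,M} --- in which straight lines from $q'$ are $g$-geodesics, and use the $C^{1,1}$ Gauss lemma to conclude that $\exp_{q'}^{-1}(q)$ is future timelike and that $\exp_{q'}$ of the open timelike cone gives the desired $V$. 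Alternatively, one can use that for $C^{1,1}$ metrics $\ll$ is unchanged if one restricts to piecewise $C^1$ (or broken-geodesic) curves~\cite[Thm.~7]{M}, which gives a uniform lower bound $-g(\dot\sigma,\dot\sigma)\ge c>0$ and allows a straightforward endpoint perturbation. Either route is legitimate; the straight-line-in-almost-flat-coordinates route is not, and what the paper actually does is simply outsource the result to~\cite[Prop.~1.21]{CG}.
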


\begin{Lemma}
\label{pu}
Let $p, q, r \in M$ be such that $p\leq q \ll r$ or $p \ll q \leq r$. Then $p\ll r$.
\end{Lemma}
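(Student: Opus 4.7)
The plan is to reduce the statement directly to the Chru\'sciel--Grant push-up lemma, which has already been invoked in the paragraph immediately preceding this statement. Since $g$ is of regularity $C^{1,1}$, it is \emph{a fortiori} locally Lipschitz, and so by \cite[Cor.~1.17]{CG} the spacetime $(M,g)$ is causally plain, i.e., it admits no ``causal bubbles''. Push-up for causally plain spacetimes is precisely the content of \cite[Lem.~1.22]{CG}, which yields both implications at once.

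To unpack what that lemma achieves (and what an \emph{ab initio} proof would look like): given, say, a $g$-causal curve $\gamma_1$ from $p$ to $q$ and a $g$-timelike curve $\gamma_2$ from $q$ to $r$, one works in a convex normal neighbourhood of the junction point $q$ (such neighbourhoods exist for $C^{1,1}$-metrics by the estimates in \cite{KSS}) and produces a small variation of the concatenation $\gamma_1 \cdot \gamma_2$ supported near $q$ that is $g$-timelike throughout. The two ingredients that make this work in low regularity are: (a) the existence and Lipschitz dependence on initial data for geodesics via Picard--Lindel\"of, which allows one to use the exponential map to straighten out the relevant segments; and (b) the absence of causal bubbles, which ensures that the timelike future of a nearby point on $\gamma_1$ really does reach into $\gamma_2$.

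The main obstacle, compared to the smooth case, is exactly point (b). For merely continuous metrics the chronological future $I^+(p)$ and the interior of $J^+(p)$ can fail to coincide, and points in the boundary of $J^+(p)$ need not be approachable by timelike curves even when one has a causal curve followed by a timelike one. Thus the substance of \cite{CG} is to rule out such pathologies under a Lipschitz hypothesis on the metric. Once this is established, the usual variation-of-curve argument runs in essentially the same way as in the $C^2$-setting, and the proof amounts to the verification that a $C^{1,1}$-metric falls into the class handled by \cite[Lem.~1.22]{CG}.
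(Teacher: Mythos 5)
Your proposal is correct and matches the paper's treatment exactly: the paper likewise records this lemma as an immediate consequence of \cite[Lem.~1.22]{CG} together with \cite[Cor.~1.17]{CG}, which places $C^{1,1}$ (indeed, locally Lipschitz) metrics in the class of causally plain spacetimes. The extra discussion you give of the underlying variation argument and the role of causal plainness is additional context rather than a different route.
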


We will also repeatedly be making use of the following result, see~\cite[Lem.\ 2]{M}:
\begin{Lemma}
\label{pu:C11}
Let $p, q \in M$ such that there exists a future directed causal curve $c$ from $p$ to $q$. Then either $q \in I^+(p)$ or $c$ is (can be reparametrised to) a maximising null geodesic from $p$ to $q$.
\end{Lemma}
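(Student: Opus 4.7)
The plan is to prove the contrapositive: assuming $q \notin I^+(p)$, I would show that $c$ must (up to reparametrisation) be a null pregeodesic from $p$ to $q$, and that maximality is then automatic. Indeed, null (pre)geodesics have vanishing Lorentzian length, so a null geodesic $c$ from $p$ to $q$ fails to be maximising precisely when some other causal curve joining $p$ and $q$ has strictly positive length; but any such curve, together with Lemma~\ref{pu}, would force $q \in I^+(p)$, contrary to the standing assumption.

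The strategy is local-to-global. First I would invoke the $C^{1,1}$-convex normal neighbourhoods constructed in \cite{KSS} to cover the compact image $c([a,b])$ by finitely many open sets $U_1,\dots,U_N$ such that on each $U_i$ the exponential map at a central point is a $C^1$-diffeomorphism, and such that $I^\pm$ restricted to $U_i$ is described by the image of the open timelike cone under $\exp$. In such a neighbourhood the boundary of $J^+(x)\cap U_i$ (relative to $U_i$) consists exactly of the radial null geodesics emanating from $x$. This structural description, classical for smooth metrics, persists at $C^{1,1}$-regularity by the estimates of \cite{CG,KSS}.

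With this local picture in hand, I would examine each piece of $c$ lying in $U_i$, say from $x=c(s_1)$ to $y=c(s_2)$. If this piece is not (a reparametrisation of) the radial null geodesic from $x$ to $y$, then the cone structure above forces the existence of intermediate points $r_1 \le r_2$ on the piece with $r_2 \in I^+(r_1)$. Iterated application of Lemma~\ref{pu} along the full curve $c$ then yields $p \ll q$, contradicting the assumption that $q \notin I^+(p)$. Hence on each $U_i$ the curve $c$ coincides with a radial null geodesic, and the $C^{1,1}$ Picard--Lindel\"of theorem (applicable since the Christoffel symbols of a $C^{1,1}$-metric are locally Lipschitz) guarantees that these local pieces glue uniquely into a single null pregeodesic from $p$ to $q$.

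The main obstacle is the local dichotomy inside a convex normal neighbourhood. In the smooth case it is a routine consequence of the $C^2$-exponential map and Gauss lemma, but at $C^{1,1}$-regularity one must bypass these tools and instead rely on the causally non-pathological behaviour of Lipschitz metrics from \cite{CG} together with the quantitative convex-neighbourhood construction of \cite{KSS}. Once the local step is secured, Lemma~\ref{pu} transfers it to the global statement; this is exactly the argument realised in \cite[Lem.~2]{M}.
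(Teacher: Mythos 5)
The paper does not supply its own proof of this lemma: immediately above the statement it writes ``see~\cite[Lem.\ 2]{M}'', so the result is imported from Minguzzi's convex-neighbourhood paper, and your sketch is essentially a reconstruction of that argument (as you yourself note in the final sentence). The local-to-global strategy --- cover $c([a,b])$ by $C^{1,1}$-convex neighbourhoods from~\cite{KSS}, use the cone structure of $J^\pm$ in such a neighbourhood to force each piece to be a radial null geodesic, then invoke push-up (Lemma~\ref{pu}) to propagate a single chronological shortcut out to $p\ll q$ --- is indeed the right skeleton, and your observation that maximality is automatic once $q\notin I^+(p)$ (because then $d(p,q)=0=L(c)$) is clean.

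One step you gloss over is the gluing. Uniqueness from Picard--Lindel\"of tells you that two geodesics agreeing on an open interval coincide, but it does not by itself exclude the possibility that $c$ is a \emph{broken} null geodesic: each piece could be a radial null geodesic in its own $U_i$ while the tangent direction jumps at a partition point. You need either to arrange the partition so that consecutive pieces overlap on intervals of positive length (whence the two geodesic extensions agree and no corner can form), or to argue separately that a corner on a null geodesic produces chronologically related points nearby, and then feed that back into Lemma~\ref{pu}. Minguzzi's proof handles exactly this; your sketch should make it explicit rather than leaving it to ``glue uniquely''. A second point worth flagging is that the assertion ``the boundary of $J^+(x)\cap U_i$ consists exactly of the radial null geodesics from $x$'' is not routine at $C^{1,1}$-regularity (the exponential map is only $C^1$ and the Gauss lemma is unavailable in its classical form); it is a substantial theorem of~\cite{M} and~\cite{KSSV}, and your proof hinges on citing it rather than on the smooth-case intuition.
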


Using the usual notation, we set $E^+(S):=J^+(S)\setminus I^+(S)$. It is easily checked that (as for smooth metrics) we have:
\begin{Lemma}
\label{newlem}
Let $S \subseteq M$.
Then both $\ep$ and $\jp$ are achronal sets, $\jp$ is closed, but $\ep$ need not be.
\end{Lemma}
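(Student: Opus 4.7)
The plan is to dispatch the three positive assertions by short push-up arguments and then to exhibit an explicit example in two-dimensional Minkowski space showing that $E^+(S)$ may fail to be closed. Of these, only the achronality of $\partial J^+(S)$ requires more than set-theoretic bookkeeping, and even there the argument is standard once openness of $I^\pm(S)$ and the push-up Lemma~\ref{pu} are in hand (both of which are already available for $C^{1,1}$-spacetimes).

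For the achronality of $E^+(S)$, I would assume for contradiction that $p, q \in E^+(S)$ with $p \ll q$. Since $p \in J^+(S)$, there is $s \in S$ with $s \leq p$, and then $s \leq p \ll q$ combined with Lemma~\ref{pu} yields $s \ll q$, i.e.\ $q \in I^+(S)$, contradicting $q \in E^+(S) = J^+(S) \setminus I^+(S)$. For the achronality of $\partial J^+(S)$, the same strategy applied at the level of sequences works: if $p, q \in \partial J^+(S)$ with $p \ll q$, then openness of $I^-(q)$ together with $p \in \overline{J^+(S)}$ yields a sequence $p_n \in J^+(S) \cap I^-(q)$ with $p_n \to p$; picking $s_n \in S$ with $s_n \leq p_n \ll q$ and applying push-up gives $s_n \ll q$, hence $q \in I^+(S)$. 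Since $I^+(S)$ is open and contained in $J^+(S)$, this puts $q$ in the interior of $J^+(S)$, contradicting $q \in \partial J^+(S)$. The closedness of $\partial J^+(S)$ is automatic, as the topological boundary of any subset of $M$ is closed.

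Finally, to demonstrate that $E^+(S)$ may fail to be closed, I would take $M = (\R^2, -dt^2 + dx^2)$ and $S = \{(0, x) : |x| < 1\}$, an open spacelike segment on the slice $t = 0$. Writing $J^+(S) = \bigcup_{|x| < 1} J^+((0, x))$ and $I^+(S) = \bigcup_{|x| < 1} I^+((0, x))$, a point $q = (t, y)$ belongs to $J^+(S)$ iff $t \geq 0$ and $[y - t, y + t] \cap (-1, 1) \neq \emptyset$, and to $I^+(S)$ iff $t > 0$ and $(y - t, y + t) \cap (-1, 1) \neq \emptyset$. For $t > 0$ these two interval conditions are equivalent (both reduce to $y + t > -1$ and $y - t < 1$), so $J^+(S) \cap \{t > 0\} = I^+(S) \cap \{t > 0\}$; on the slice $t = 0$ one has $J^+(S) \cap \{t = 0\} = S$ and $I^+(S) \cap \{t = 0\} = \emptyset$. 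Hence $E^+(S) = S$, which is open in $M$, and the points $(0, \pm 1) \in M$ are limits of sequences in $E^+(S)$ but do not themselves belong to $E^+(S)$; thus $E^+(S)$ is not closed.
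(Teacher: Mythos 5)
Your proof is correct, and it supplies exactly the routine verification the paper omits (the paper simply introduces this lemma with ``It is easily checked that (as for smooth metrics) we have:'' and gives no argument). The push-up approach for the two achronality claims, the topological triviality of the boundary being closed, and the open spacelike segment in Minkowski $\R^{1,1}$ giving $E^+(S)=S$ open (hence non-closed) are precisely the standard smooth-case arguments the authors are alluding to, and they go through verbatim here because openness of $I^\pm$ and the push-up Lemma~\ref{pu} are already available in the $C^{1,1}$ setting.
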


\begin{Lemma}
\label{lem:3}
Let $S\sse M$. Then $\jp$ is an achronal, closed topological hypersurface.
\end{Lemma}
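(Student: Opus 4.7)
The plan is to reduce the assertion to the classical dichotomy that a closed achronal set is a topological hypersurface at precisely its non-edge points (this being a purely topological statement once we have push-up, openness of $I^\pm$, and continuity of the metric, all of which are available in the $C^{1,1}$-setting by the results quoted in this appendix). Since Lemma~\ref{newlem} already provides that $\jp$ is achronal and closed, everything comes down to checking that $\mathrm{edge}(\jp)=\emptyset$.

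To carry this out, I would fix $p\in\jp$ and suppose, for a contradiction, that $p\in\mathrm{edge}(\jp)$: in every neighbourhood $U$ of $p$ there exist $q\in I^-(p,U)$, $r\in I^+(p,U)$ and a timelike curve $\gamma\sse U$ from $q$ to $r$ disjoint from $\jp$. First I would show that $r\in J^+(S)^\circ$: since $\jp\sse\overline{J^+(S)}$, by Lemma~\ref{pu} we have $I^+(p)\sse I^+(S)$, and $I^+(S)$ is open and contained in $J^+(S)$, hence in $J^+(S)^\circ$. Next I would show $q\notin\overline{J^+(S)}$: if $q$ were in $J^+(S)^\circ$, then push-up would give $p\in I^+(q)\sse I^+(S)\sse J^+(S)^\circ$, contradicting $p\in\jp$; and $q$ cannot be in $\partial J^+(S)=\jp$ either, since $q\ll p$ would violate achronality of $\jp$ (Lemma~\ref{newlem}). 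Thus $\gamma$ connects a point in the open set $M\setminus\overline{J^+(S)}$ to a point in the open set $J^+(S)^\circ$, and so by continuity must meet $\partial J^+(S)=\jp$, contradicting its choice.

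With the edge empty, I would conclude by invoking the classical topological-hypersurface criterion in the standard form: around any $p\in\jp$ pick a relatively compact neighbourhood $V$ on which a smooth timelike vector field $X$ is defined (possible since $g$ is continuous and $V$ can be chosen small). Integral curves of $X$ are timelike, hence meet $\jp$ in at most one point by achronality. Using that $p\in\jp$ and that $\jp$ has no edge, an argument identical to the smooth case (found, e.g., in~\cite[Prop.~A.18]{hawkingc11}, ~\cite[Prop.~14.25]{ON83}) shows that every integral curve of $X$ through points of $V$ sufficiently close to $p$ meets $\jp$ exactly once, and the resulting flow-box yields the desired local homeomorphism onto an open subset of $\mathbb{R}^{n-1}\times\{0\}$.

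The only step where the reduced regularity could conceivably intervene is the flow-box construction, but since $X$ may be taken smooth (the timelike condition needs only $C^0$ regularity of $g$), its flow is smooth and the argument is purely topological thereafter. Consequently, no new analytic input beyond what is already collected in Lemmas~\ref{pu}, \ref{pu:C11} and~\ref{newlem} is needed, and the main (indeed only) subtlety is the edge-emptiness verification outlined above.
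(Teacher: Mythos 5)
Correct, and essentially the same approach as the paper. The paper's proof observes that $J^+(S)$ is a future set and cites O'Neill's Corollary~14.27 as ``easily verified to hold for $C^{1,1}$-metrics''; you simply carry out that verification explicitly, proving $\mathrm{edge}(\jp)=\emptyset$ from push-up (Lemma~\ref{pu}) together with openness of $I^\pm$, and then invoking the edge criterion for topological hypersurfaces, which is exactly the $C^{1,1}$-version already available in \cite[Prop.~A.18]{hawkingc11}.
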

\begin{proof}
Clearly $J^{+}\left(J^{+}(S)\right)=J^{+}(S)$, so~\cite[Corollary 14.27]{ON83}, which is easily verified to hold for $C^{1,1}$-metrics as well, gives the desired result.
\end{proof}

To proceed further we are going to need some results on limits of causal curves.
Thus we will now state that what is essentially Theorem 3.1.(1) from~\cite{Minguzzi_LimitCurveThms} remains true for $C^{1,1}$-metrics.

\begin{Theorem}\label{prop:MinguzziLimit}
Let $y$ be an accumulation point of a sequence of (future directed) causal curves. There is a subsequence parametrized with respect to $h$-length, $\gamma_k \colon [a_k,b_k]\to M$ ($a_k$ and $b_k$ may be infinite), $0\in [a_k,b_k]$ such that $\gamma_k(0)\to y$ and such that the following
properties hold. There are $a\leq 0$ and $b\geq 0$, such that $a_k\to a$ and $b_k\to b$.
If there is a neighbourhood $U$ of $y$ such that only a finite number of $\gamma_k$ is
entirely contained in $U$ then there is a causal
curve $\gamma \colon [a,b]\to M$, such that $\gamma_k$
converges $h$-uniformly on compact subsets to $\gamma $. This limit curve is past, respectively future, inextendible if and only if $a=-\infty$, respectively $b=\infty$. Further, if $\gamma_k$ is limit maximising (in the sense of~\cite[Def.~2.11]{Minguzzi_LimitCurveThms}) then $\gamma $ is maximising.
\end{Theorem}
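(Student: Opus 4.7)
The plan is to adapt Minguzzi's original argument for Theorem~3.1.(1) in~\cite{Minguzzi_LimitCurveThms}, verifying that each ingredient transfers to the $C^{1,1}$-setting. The causality-theoretic tools that Minguzzi uses (openness of $I^\pm$, closedness of $J^+(p)$ on compact sets, push-up, reparametrizability of Lipschitz causal curves) are all available for $C^{1,1}$-metrics via~\cite{CG,M,KSSV} and the lemmas already collected in this appendix. So the proof is essentially bookkeeping once one has these tools.

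First I would reparametrize each $\gamma_k$ by $h$-arclength, so that the curves become uniformly $1$-Lipschitz with respect to the Riemannian distance $d_h$. By hypothesis $y$ is an accumulation point, so after passing to a subsequence we may arrange $\gamma_k(0)\to y$ with $0\in[a_k,b_k]$, and also $a_k\to a\in[-\infty,0]$, $b_k\to b\in[0,\infty]$. The assumption that only finitely many $\gamma_k$ are contained in a neighbourhood $U$ of $y$ rules out the degenerate case where the $\gamma_k$ shrink to the single point $y$.

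Second, a diagonal Arzel\`a--Ascoli argument applied on a compact exhaustion of $M$ gives a further subsequence converging locally uniformly on compact subsets of $(a,b)$ to a $1$-Lipschitz curve $\gamma \colon (a,b) \to M$ (extended to the closed interval whenever the endpoints are finite). To see that this limit is a causal curve in the sense of Section~\ref{sec:intro}, I would use that, since $g$ is $C^{1,1}$, its light cones vary continuously in $p$ and hence the relation $\le$ is closed on compact sets: pointwise, if $\xi_k \in T_{p_k}M$ are future directed causal with $p_k \to p$ and $\xi_k \to \xi$, then $\xi$ is future directed causal. Applying this to arbitrary partitions of $\gamma$ and using the characterization of causal curves via~\cite[Thm.~7]{M}, \cite[Cor.~3.10]{KSSV} (which ensures Lipschitz and piecewise-$C^1$ causal curves yield the same causal relations), one concludes that $\gamma$ is future directed causal.

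The inextendibility characterization is then immediate: $\gamma$ has $h$-length $|b-a|$, so $b=\infty$ (resp.\ $a=-\infty$) is equivalent to $\gamma$ leaving every compact subset in the future (resp.\ past), i.e.\ future (resp.\ past) inextendibility. Finally, the limit-maximising claim follows from upper semi-continuity of the Lorentzian length functional under locally uniform convergence of causal curves; this holds for $C^{1,1}$-metrics by~\cite{KSSV,M}, and by~\cite[Def.~2.11]{Minguzzi_LimitCurveThms} limit-maximality of the sequence together with this upper semi-continuity forces $L(\gamma)\ge\limsup L(\gamma_k|_{[a_k,b_k]})$, hence that $\gamma$ is maximising.

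The main obstacle is really the closedness of $\le$ under local uniform limits: in lower regularity settings this can fail (cf.\ the causality pathologies in non-causally-plain continuous spacetimes from~\cite{CG}), and it is precisely the $C^{1,1}$-regularity, via continuity of the cone structure, that secures it. Everything else is a routine transcription of Minguzzi's proof, and no new analytic input beyond what is already cited in the appendix is needed.
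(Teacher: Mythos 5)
The paper does not give a self-contained proof here: it simply states that the existence of the limit curve ``follows from the smooth version~\cite[Thm.~3.1.(1)]{Minguzzi_LimitCurveThms} in the same way as in the proof of~\cite[Thm.~1.5]{S14}'', and for the maximising clause it invokes lower semi-continuity of the Lorentzian distance (\cite[Lemma~A.16]{hawkingc11}) together with upper semi-continuity of the length functional (\cite[Thm.~6.3]{S14}). You instead try to re-run the Arzel\`a--Ascoli argument directly, which is a legitimate alternative route, but your sketch has a genuine gap at the one step that is not routine.

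The gap is in showing that the limit curve is causal. You write that ``the relation $\le$ is closed on compact sets'' and attribute this to the $C^{1,1}$-regularity. That statement is false in general: closedness of $\le$ is equivalent to causal simplicity and fails already for smooth spacetimes (e.g.\ Minkowski space with a point removed). What \emph{is} true -- and what you state in the next sentence -- is that the causal cone bundle is closed: if $p_k\to p$, $\xi_k\in T_{p_k}M$ causal, $\xi_k\to\xi$, then $\xi$ is causal. But this pointwise fact does not by itself give causality of $\gamma$, because $h$-uniform convergence $\gamma_k\to\gamma$ gives \emph{no} convergence of derivatives; you cannot apply cone-closedness to the tangent vectors $\dot\gamma_k$. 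The argument that actually works passes through secants: in a sufficiently small local chart the difference quotients $\frac{\gamma_k(t_2)-\gamma_k(t_1)}{t_2-t_1}$ lie approximately in a single convex cone (using continuity of the cone structure and convexity of each cone), one takes $k\to\infty$, and then $t_2\to t_1$ to conclude $\dot\gamma$ is causal a.e. This is precisely the nontrivial part of Minguzzi's (or Chru\'sciel--Grant's) limit curve theorem, not ``routine transcription''. The citation to \cite[Thm.~7]{M}, \cite[Cor.~3.10]{KSSV} does not help here -- that result concerns equivalence of the $\ll,\le$ relations for Lipschitz vs.\ piecewise-$C^1$ curves, not limits of causal curves.

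Two smaller points: the paper's maximising argument also uses lower semi-continuity of the Lorentzian distance, which you omit (needed to relate $L(\gamma)$ to $d$ when verifying that the limit is maximising), and the paper's route via \cite{S14} is more economical since it avoids redoing the secant argument at all. Your inextendibility discussion is fine.
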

\begin{proof}
The existence of such a limit curve follows from the smooth version~\cite[Thm.~3.1.(1)]{Minguzzi_LimitCurveThms} in the same way as in the proof of~\cite[Thm.~1.5]{S14}. This also immediately gives the statement about inextendibility. That the limit of a limit maximising sequence is maximising follows as in the smooth case (see~\cite[Thm.~2.13]{Minguzzi_LimitCurveThms}), using that for $C^{1,1}$-metrics the Lorentzian distance function is still lower semi-continuous (see~\cite[Lemma~A.16]{hawkingc11}) and that the length functional is still upper semi-continuous (see~\cite[Thm.~6.3]{S14} and note that it does not require the same start and end points but only a uniform bound on the Lipschitz constants).
\end{proof}

We now use this to show that as in the smooth case the boundary of the causal future $\partial J^+(S)$, is ruled by null geodesics that are either past inextendible or end in $\bar{S}$. This result is needed for the proof of both Theorem~\ref{3.4} and Proposition~\ref{notmaxtosupportsubmf}.

\begin{Proposition}
\label{prop:4} Let $S\sse M$.
Any $x \in \jp \setminus \bar{S}$ is the
future end point of a causal curve $\gamma \subseteq \jp$ that either
is past inextendible (and never meets $\bar{S}$) or has a past endpoint
in $\bar{S}$. This $\gamma$ is (can be reparametrised to) a maximising
null geodesic. If $S$ is closed and $x\notin J^+(S)$, then this curve is past inextendible and contained in $\jp \setminus J^+(S)$.
\end{Proposition}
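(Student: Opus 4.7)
The plan is to realise $\gamma$ as a limit of causal curves approaching $x$ from $S$. First I would pick a sequence $y_n \to x$ with $y_n \in J^+(S)$, which exists since $x \in \jp \sse \overline{J^+(S)}$, together with future-directed causal curves $\gamma_n \colon [a_n, 0] \to M$, parametrised by $h$-arclength, with $\gamma_n(0)=y_n$ and $\gamma_n(a_n) = s_n \in S$. The hypothesis $x \notin \bar S$ is what makes Theorem~\ref{prop:MinguzziLimit} directly applicable here: choosing a relatively compact neighbourhood $U$ of $x$ with $\bar U \cap S = \emptyset$, no $\gamma_n$ with $n$ large can be entirely contained in $U$, since its past endpoint $s_n$ lies in $S$. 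The theorem then produces a past-directed limit curve $\gamma \colon [a, 0] \to M$ with $\gamma(0) = x$ and $a = \lim a_{n_k} \in [-\infty, 0]$. If $a = -\infty$ then $\gamma$ is past-inextendible; otherwise $s_{n_k} = \gamma_{n_k}(a_{n_k}) \to \gamma(a) \in \bar S$, providing a past endpoint in $\bar S$. Should a past-inextendible $\gamma$ happen to meet $\bar S$ at some parameter $t_0$, I would truncate to $\gamma|_{[t_0,0]}$, reducing to the second alternative.

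Next I would show $\gamma \sse \jp$. Every $\gamma(t)$ is an $h$-uniform limit of points on the $\gamma_{n_k}$, hence lies in $\overline{J^+(S)}$. To rule out interior points of $J^+(S)$: if $p = \gamma(t) \in \operatorname{int} J^+(S)$, pick $q \in \operatorname{int} J^+(S)$ with $q \ll p$ inside a convex normal neighbourhood of $p$. Then, with $s \in S$ satisfying $s \leq q$, the chain $s \leq q \ll p \leq x$ combined with the push-up Lemma~\ref{pu} gives $s \ll x$, so $x \in I^+(S) \sse \operatorname{int} J^+(S)$, contradicting $x \in \jp$. Since $\jp$ is achronal by Lemma~\ref{lem:3}, no two points of $\gamma$ are timelike-related, and Lemma~\ref{pu:C11} applied to any sub-segment forces $\gamma$ to be a reparametrisation of a maximising null geodesic (consistency at interior points follows because a broken null geodesic would fail to be maximising over the larger interval).

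For the final statement, suppose $S$ is closed and $x \notin J^+(S)$. Then $\gamma$ cannot terminate in $\bar S = S$, since this would yield $x \in J^+(S)$; hence $\gamma$ must be past-inextendible. Moreover, if some $p \in \gamma$ were in $J^+(S)$, the chain $s \leq p \leq x$ for an $s \in S$ would again give $x \in J^+(S)$; so $\gamma \sse \jp \setminus J^+(S)$. The subtlest point I anticipate in executing this plan is applying Theorem~\ref{prop:MinguzziLimit} in the absence of any global causality assumption; this is precisely where the hypothesis $x \notin \bar S$ intervenes, preventing the approximating curves from collapsing into arbitrarily small neighbourhoods of $x$ and thereby guaranteeing a non-trivial limit.
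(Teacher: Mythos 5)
Your argument follows the paper's proof essentially verbatim: both obtain $\gamma$ as a limit of past-directed causal curves from $x$ to $S$ via Theorem~\ref{prop:MinguzziLimit} (with $x \notin \bar S$ guaranteeing applicability), rule out $\gamma$ meeting $I^+(S)$ by push-up, and invoke Lemma~\ref{pu:C11} together with achronality of $\jp$ to conclude $\gamma$ is a maximising null geodesic. Your convex-neighbourhood argument to exclude $\operatorname{int}J^+(S)$ is a slightly more explicit version of the paper's observation that $\gamma$ cannot meet $I^+(S)$, and your handling of the ``never meets $\bar S$'' clause by truncation is a correct minor detail the paper leaves implicit.
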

\begin{proof}
Let $x \in \jp \setminus \bar{S}$. Then there exists a sequence $\left\{ x_{k}\right\} \subset I^{+}\left(S\right)$
with $x_{k}\to x$ and past directed timelike curves $\gamma_{k} \colon \left[0,b_{k}\right] \to M$
from $\gamma_{k}\left(0\right)=x_{k}$ to $\gamma_{k}\left(b_{k}\right) \in S$.
Since $x\notin \bar{S}$ the $\gamma_{k}$'s leave a
fixed neighbourhood of $x$ and so by Theorem~\ref{prop:MinguzziLimit} there exists (a subsequence with) a limit curve $\gamma$ with
$\gamma\left(0\right)=x$ that is either past inextendible or $b_{k}\to b< \infty$
and $\gamma\left(b\right)=\lim\gamma_{k}\left(b_{k}\right) \in \bar{S}$. Clearly, $\gamma \sse \overline{J^{+}(S)}$. If $\gamma$ were ever
in $I^{+}\left(S\right)$, then $x \in I^{+}\left(S\right)$ by Lemma~\ref{pu}, a contradiction.

That $\gamma$ is (can be reparametrised to)
a maximizing null geodesic follows immediately from Lemma~\ref{pu:C11}. Finally, if $S$ is closed and $x\notin J^+(S)$ there can be no causal curve from $x$ to $S=\bar{S}$, so $\gamma$ must be inextendible and $\gamma \sse \jp \setminus J^+(S)$.
\end{proof}

\subsection{Cauchy development and Cauchy horizon}

Next, we are interested in the Cauchy developments and Cauchy horizons
of both $\ep$ and $\jp$ (and their relationship with each other). From now on we will generally require $S$ to be an achronal
(non-empty) set. Note that this implies in particular
\begin{equation}
\label{ac}
S \subseteq J^+(S) \setminus I^+(S) = E^+(S).
\end{equation}
From this one also immediately obtains the following Lemma:
\begin{Lemma}
\label{lem:7}
Let $S$ be achronal. Then $\bar{S}$ is also achronal. Further, if $\ep $ is compact, then $\ep=E^{+}(\overline{S})$.
\end{Lemma}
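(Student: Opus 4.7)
The plan is to handle the two statements separately, using only openness of the chronological relation, the equality $I^+(S)=I^+(\bar S)$, and the containment $S\subseteq E^+(S)$ recorded in~\eqref{ac}. For the first assertion, I will argue by contradiction: if $\bar S$ failed to be achronal, there would be $p,q\in\bar S$ with $p\ll q$. Since $I^-(q)$ is open and contains $p$, some $p'\in S$ satisfies $p'\ll q$. Then $I^+(p')$ is open and contains $q$, so some $q'\in S$ satisfies $p'\ll q'$, contradicting achronality of $S$.

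For the second assertion, I will first observe that $I^+(S)=I^+(\bar S)$. The inclusion $\subseteq$ is immediate, and for the reverse, any $q\in I^+(\bar S)$ has some $p\in\bar S$ with $p\in I^-(q)$; since $I^-(q)$ is open and $p$ is a limit of points of $S$, a nearby $p_k\in S$ still lies in $I^-(q)$, so $q\in I^+(S)$. The key place where compactness enters is in showing $\bar S\subseteq J^+(S)$: by achronality of $S$ and~\eqref{ac} we have $S\subseteq E^+(S)$, and since $E^+(S)$ is compact (hence closed), $\bar S\subseteq \overline{E^+(S)}=E^+(S)\subseteq J^+(S)$.

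With these two preparations both inclusions in $E^+(S)=E^+(\bar S)$ are short. For $E^+(S)\subseteq E^+(\bar S)$ one only needs $S\subseteq\bar S$, giving $J^+(S)\subseteq J^+(\bar S)$, together with $I^+(S)=I^+(\bar S)$. For the reverse, take $x\in E^+(\bar S)$, so there is some $p\in\bar S$ with $p\leq x$; by $\bar S\subseteq J^+(S)$ there is some $p'\in S$ with $p'\leq p$, and concatenating (using transitivity of $\leq$) gives $p'\leq x$, so $x\in J^+(S)$. Meanwhile $x\notin I^+(\bar S)=I^+(S)$, which shows $x\in E^+(S)$. I do not anticipate any significant obstacle; the only delicate point is that one must genuinely invoke compactness to know $E^+(S)$ is closed, since this may otherwise fail (as noted before Lemma~\ref{lem:7}).
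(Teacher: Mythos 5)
Your proof is correct and follows essentially the same route as the paper: it rests on $I^+(S)=I^+(\bar S)$, openness of the chronological relation, and the observation that compactness of $E^+(S)$ forces $\bar S\subseteq E^+(S)\subseteq J^+(S)$. You have simply spelled out the two set inclusions in more elementary detail than the paper's terse version.
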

\begin{proof}
The first claim follows from the fact that $I^{+}(\bar{S})=I^{+}(S)$ and openness of $I^+(S)$. The same equality also immediately gives $\ep \subseteq E^{+}(\bar{S})$. Now if $\ep$ is compact, then~\eqref{ac}
implies $\bar{S} \subseteq \ep$. This gives $E^{+}(\bar{S})=J^{+}(\bar{S}) \setminus I^{+}(\bar{S}) \subseteq J^{+}(E^{+}(S)) \setminus I^{+}(\bar{S})$.
Since $J^{+}(\ep)=J^{+}(S)$ and $I^{+}(\bar{S})=I^{+}(S)$, this shows the other inclusion.
\end{proof}

\begin{definition}
Let $A$ be achronal. The \emph{future Cauchy development\/} $D^{+}(A)$ of $A$ is defined by
\footnote{We follow the convention of~\cite{HawkingIII, HE, ON83}, rather than that of~\cite{Penrose:Battelle, HP, Penrose:littlebook}.}
\begin{equation}
D^{+}(A) :=\{x \in M:\:\text{every past inextendible causal curve through }x\text{ meets }A\}
\label{eq:def D+}
\end{equation}
and its \emph{future Cauchy horizon\/} $H^{+}(A)$ is defined by
\begin{equation}
H^{+}(A) := \overline{D^{+}(A)} \setminus I^{-}\left(D^{+}(A)\right)=\left\{ x \in \overline{D^{+}(A)}:\,
I^{+}(x) \cap D^{+}(A)=\emptyset\right\} .\label{eq: def H+}
\end{equation}
\end{definition}

Two important properties of $D^{+}(A)$ for closed achronal sets $A$ are given in the following proposition.
\begin{Proposition}
\label{prop:10}
Let $A$ be closed and achronal. Then
\begin{equation}
\overline{D^{+}(A)}=\left\{ x \in M:\:\text{every past inextendible timelike curve through }x\text{ meets }A\right\} .\label{eq:closur of D+}
\end{equation}
Furthermore
\begin{equation}
\partial D^{+}(A)=A\cup H^{+}(A).\label{eq: boundary of D+}
\end{equation}
\end{Proposition}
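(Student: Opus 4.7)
Let me denote the right-hand side of the first equality by $\tilde D^+(A):=\{x\in M:\text{every past inextendible timelike curve through }x\text{ meets }A\}$. The plan is to prove the first equality via two inclusions, then deduce the second from the first.

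For $\overline{D^+(A)}\subseteq\tilde D^+(A)$, I will first show that $\tilde D^+(A)$ is closed. Given $x_n\in\tilde D^+(A)$, $x_n\to x$, and a past inextendible timelike curve $\gamma$ through $x$ with $\gamma(0)=x$, I pick $y_k:=\gamma(-1/k)\ll x$, use openness of $I^-$ (and a convex normal neighbourhood of $x$) to obtain timelike curves $\lambda_k$ from $y_k$ to $x_{n_k}$ whose $h$-lengths tend to $0$, and concatenate $\gamma|_{(-\infty,-1/k]}\cdot\lambda_k$ into past inextendible timelike curves through $x_{n_k}\in\tilde D^+(A)$. Each meets $A$; a dichotomy (infinitely many hits on $\gamma$, or hits on $\lambda_k$ converging to $x$, which forces $x\in A$ by closedness) shows that $\gamma$ itself meets $A$. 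Combined with $D^+(A)\subseteq\tilde D^+(A)$ (trivial), this gives the inclusion.

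The reverse inclusion $\tilde D^+(A)\subseteq\overline{D^+(A)}$ is the main obstacle. Given $x\in\tilde D^+(A)$, I aim to show that every $y\in I^+(x)$ sufficiently close to $x$ lies in $D^+(A)$, so that letting $y\to x$ yields $x\in\overline{D^+(A)}$. If $y\gg x$ is close to $x$ with $y\notin D^+(A)$, I pick a past inextendible causal curve $\mu$ through $y$ missing $A$; using Lemma~\ref{pu} (push-up) inside a convex normal neighbourhood of $y$ containing $x$, I perturb the initial portion of $\mu$ downward into $I^-(y)\ni x$ and concatenate it with a past inextendible timelike tail shadowing $\mu$, producing a past inextendible timelike curve through $x$ that (by closedness of $A$) still avoids $A$, contradicting $x\in\tilde D^+(A)$. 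The delicate part is carrying out this perturbation so that the resulting curve is globally timelike through $x$ while remaining in the open set $M\setminus A$; this is where the $C^{1,1}$-push-up lemma and the convex normal neighbourhood structure are essential.

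For $\partial D^+(A)=A\cup H^+(A)$ I argue as follows. For $\supseteq$: $A\subseteq D^+(A)$ trivially (any past inextendible curve through $a\in A$ meets $A$ at $a$), but $A\cap D^+(A)^\circ=\emptyset$ since any neighbourhood of $a\in A$ contains points of $I^-(a)\subseteq I^-(A)$ from which past inextendible timelike curves extend further into the past missing $A$; also $H^+(A)\subseteq\overline{D^+(A)}$ by definition, while $H^+(A)\cap D^+(A)^\circ=\emptyset$ because any interior point of $D^+(A)$ has chronological future meeting $D^+(A)$, contradicting the defining property of $H^+(A)$. For $\subseteq$: take $p\in\partial D^+(A)\setminus A$; by the first equality $p\in\tilde D^+(A)$, and since any past inextendible timelike curve through $p$ meets $A$ at some $b$ with $b\neq p$ (as $p\notin A$), hence $b\ll p$, we also have $p\in I^+(A)$. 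Assuming $p\notin H^+(A)$ produces $q\in I^+(p)\cap D^+(A)$; shrinking a neighbourhood $V$ of $p$ to $V\subseteq I^-(q)\cap I^+(A)$, I claim $V\subseteq D^+(A)$. Indeed, for $r\in V$, extending any past inextendible causal curve $\mu$ through $r$ by a timelike segment to $q\in D^+(A)$ produces a past inextendible causal curve through $q$ that meets $A$; a hit on the extension would give $r\ll a\in A$ which, combined with some $b'\in A$ with $b'\ll r$ (since $r\in I^+(A)$), yields $b'\ll a$ by Lemma~\ref{pu}, contradicting achronality of $A$. Thus $\mu$ meets $A$, giving $V\subseteq D^+(A)$ and contradicting $p\in\partial D^+(A)$.
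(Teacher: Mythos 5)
Your proof of $\overline{D^+(A)}\subseteq\tilde D^+(A)$ via closedness of $\tilde D^+(A):=\{x:\text{every past inextendible timelike curve through }x\text{ meets }A\}$, and your derivation of $\partial D^{+}(A)=A\cup H^{+}(A)$ from the first identity, are both sound. The gap lies in the reverse inclusion $\tilde D^+(A)\subseteq\overline{D^+(A)}$: the intermediate claim that for $x\in\tilde D^{+}(A)$ every $y\in I^{+}(x)$ sufficiently near $x$ lies in $D^{+}(A)$ is false. Take $A=\{x\}$, which is closed and achronal: then $D^{+}(A)=\tilde D^{+}(A)=\overline{D^{+}(A)}=\{x\}$ and no $y\in I^{+}(x)$ belongs to $D^{+}(A)$. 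Or take $A$ a compact spacelike disc in Minkowski space and $x$ the future apex of its domain of dependence; again $x\in\tilde D^{+}(A)=\overline{D^{+}(A)}$ but no $y\gg x$ lies in $D^{+}(A)$, since a past directed null geodesic from $y$ skirts past $\partial A$. Your proposed perturbation cannot repair this, and for a structural reason: since $x\in\tilde D^{+}(A)$, \emph{every} past inextendible timelike curve through $x$ meets $A$ by hypothesis, so whatever ``timelike shadow of $\mu$'' you produce through $x$ is bound to meet $A$ (in the apex example it is squeezed inside the past light cone of $x$ and crosses $A$ even though $\mu$ itself grazes past $\partial A$; for $A=\{x\}$ it meets $A$ at $x$ trivially). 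No contradiction arises, so the strategy yields nothing. The paper defers this inclusion to \cite[Lemma~A.13]{hawkingc11}; the standard argument is by contraposition --- from $p\notin\overline{D^{+}(A)}$, fixing an open $U\ni p$ with $U\cap D^{+}(A)=\emptyset$ and hence $U\cap A=\emptyset$, one builds a past inextendible timelike curve from $p$ missing $A$ by an inductive push-down, invoking achronality of $A$ at each stage to keep the interpolating timelike segments out of $A$ --- which is genuinely more delicate than a single local perturbation of a causal curve from some $y\gg x$.
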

\begin{proof}The proofs can be found in~\cite[Lemma~A.13]{hawkingc11} and~\cite[Lemma~A.14]{hawkingc11}.
\end{proof}

\begin{Lemma}
\label{prop:11}Let $A$ be closed and achronal and let $x \in D^{+}(A) \setminus H^{+}(A)$.
Then every past inextendible causal curve through $x$ must meet $I^{-}(A)$.
\end{Lemma}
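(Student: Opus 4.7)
The plan is to exploit the hypothesis $x \notin H^+(A)$, which by definition supplies some $y \in D^+(A)$ with $x \ll y$, in order to convert the given past inextendible causal curve $\gamma$ through $x$ into a past inextendible causal curve through $y$ by prepending a short timelike segment. Since $y \in D^+(A)$, this augmented curve must meet $A$; achronality will pin the meeting down to a point $p$ on $\gamma$, and a final push-up argument just past of $p$ on $\gamma$ will land in $I^-(A)$.

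Concretely, I would fix a future-directed timelike curve $\alpha$ from $x$ to $y$ and form the past-directed broken causal curve $\beta := \alpha^{-1} * \gamma$, which is past inextendible through $y \in D^+(A)$ and hence meets $A$ at some $p$. The first step is to show $p$ lies on the $\gamma$-portion of $\beta$: if $p$ were instead on the interior of the timelike segment $\alpha^{-1}$ then one would have $x \ll p \in A$; combined with $x \in D^+(A) \subseteq J^+(A)$ (every past inextendible causal curve through $x$ meets $A$, so either $x \in A$ itself or one finds $a \in A$ with $a \leq x$), Lemma~\ref{pu} would yield $a \ll p$ with both points in $A$, contradicting achronality.

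Writing $p = \gamma(s_0)$, I would next pick $z = \gamma(s)$ with $s > s_0$ slightly past $p$ and invoke Lemma~\ref{pu:C11} on the causal segment $\gamma|_{[s_0, s]}$: either $z \ll p$, in which case $z \in I^-(p) \subseteq I^-(A)$ and the proof is complete, or the segment is a maximising null geodesic. The main obstacle is to preclude the latter alternative from persisting for every $s > s_0$, i.e., that the entire tail of $\gamma$ past of $p$ is a rigid maximising null geodesic staying out of $I^-(A)$. To resolve this, I would vary $y$ by sliding it back along $\alpha$: each intermediate point $y_t := \alpha(t)$ with $t \in (0, 1]$ satisfies $y_t \in I^+(A) \cap I^-(y)$ (by push-up from $x \in J^+(A)$ via Lemma~\ref{pu}), and a direct repetition of the achronality argument of the previous paragraph shows $y_t \in D^+(A)$. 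Varying $t$ produces meeting points $p_t \in \gamma \cap A$; if some $p_t$ lies strictly past $p$ on $\gamma$, then Lemma~\ref{pu:C11} together with achronality forces the segment of $\gamma$ between $p$ and $p_t$ to be null and we iterate; otherwise, a limit argument using past inextendibility of $\gamma$ and Theorem~\ref{prop:MinguzziLimit} applied to timelike approximations of $\beta$ past of $x$ (which exist by Lemma~\ref{pu} since $\gamma(s) \leq x \ll y$ yields $\gamma(s) \ll y$ for all $s \geq 0$) produces the required point of $\gamma$ in $I^-(A)$. The whole argument uses only the $C^{1,1}$ push-up lemmas and limit curve theorem collected earlier in this appendix, so it transfers verbatim from the smooth case.
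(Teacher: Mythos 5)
Your construction of $\beta = \alpha^{-1}*\gamma$, the achronality argument forcing the first meeting point $p$ of $\beta$ with $A$ onto the $\gamma$-portion, and the reduction via Lemma~\ref{pu:C11} to the maximising-null-geodesic alternative are all sound (and so is the observation that each $y_t=\alpha(t)$ lies in $D^+(A)$). The gap is in the final paragraph. Sliding $y$ back to $y_t$ changes only the initial timelike segment of the broken curve; the $\gamma$-portion is unchanged, so the meeting point with $A$ on $\gamma$ is the \emph{same} $p$ for every $t$. Thus varying $t$ can never produce a $p_t$ strictly past $p$, and you land unconditionally in the ``otherwise'' branch, which you do not actually carry out. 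The configuration you identified as the obstacle --- the entire past tail of $\gamma$ from $p$ on being a maximising null geodesic staying out of $I^-(A)$ --- is not excluded by anything you have written: in that situation $\gamma$ literally does not meet $I^-(A)$, and what one must prove instead is that this forces $x\in H^+(A)$, contradicting the hypothesis. The appeal to Theorem~\ref{prop:MinguzziLimit} applied to timelike curves from $y$ to the points $\gamma(s_k)$ does not obviously achieve this: the limit curve need not be timelike, need not coincide with $\gamma$, and even if it meets $A$ that does not yield a point \emph{of $\gamma$} in $I^-(A)$.

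The paper's proof is completely different and much shorter: it uses $\partial D^+(A)=A\cup H^+(A)$ (Proposition~\ref{prop:10}) to place $x$ in $A\cup D^+(A)^\circ$ and then invokes Kriele's Lemma~8.3.6, which continues to hold for $C^{1,1}$-metrics. Making your direct approach rigorous would amount to re-proving the content of that lemma, and in particular closing the null-geodesic case, which requires an additional idea beyond push-up and achronality.
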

\begin{proof}
Any $x \in D^{+}(A) \setminus H^{+}(A)$ is either in $A$ or in $D^{+}(A)^\circ$, so the result follows from~\cite[Lemma 8.3.6]{Krie}, which still holds for $C^{1,1}$-metrics.
\end{proof}

\begin{Lemma}
\label{lem:12}Let $A$ be closed and achronal and $x \in J^{+}(A) \setminus D^{+}(A)$ or $x\in I^+(A)\setminus D^{+}(A)^\circ$.
Then every causal curve from $x$ to $A$ must also meet $H^{+}(A)$.
\end{Lemma}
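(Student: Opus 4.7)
The plan is to reduce both hypotheses to a single core geometric situation and then run a ``first-entry'' argument whose heart is an auxiliary push-up observation about the Cauchy development. I first dispose of the case $x\in\overline{D^+(A)}$ that is common to both hypotheses: since $x\notin D^+(A)^\circ$, Proposition~\ref{prop:10} places $x$ in $\partial D^+(A)=A\cup H^+(A)$. In Case~1 ($x\in J^+(A)\setminus D^+(A)$) the inclusion $A\subseteq D^+(A)$ rules out $x\in A$; in Case~2 ($x\in I^+(A)\setminus D^+(A)^\circ$), the achronality of $A$ combined with $x\in I^+(A)$ rules out $x\in A$. Either way $x\in H^+(A)$, so the curve meets $H^+(A)$ at $x$ itself. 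Since Case~2 with $x\notin D^+(A)$ is subsumed in Case~1, it remains to treat $x\in J^+(A)\setminus\overline{D^+(A)}$.

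For this core case, let $c\colon[0,T]\to M$ be a past-directed causal curve with $c(0)=x$ and $c(T)\in A$, and set $s_0:=\inf\{t\in[0,T]:c(t)\in\overline{D^+(A)}\}$. Since $c(T)\in A\subseteq\overline{D^+(A)}$ and $c(0)\notin\overline{D^+(A)}$, one has $s_0\in(0,T]$; continuity of $c$ and closedness of $\overline{D^+(A)}$ force $c(s_0)\in\overline{D^+(A)}$, while $c(t)\notin\overline{D^+(A)}$ for $t\in[0,s_0)$. Moreover, since $c|_{[t,T]}$ is past-directed causal from $c(t)$ to $c(T)\in A$, we have $c(t)\in J^+(c(T))\subseteq J^+(A)$ for every $t\in[0,T]$. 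The goal is to show $c(s_0)\in H^+(A)=\overline{D^+(A)}\setminus I^-(D^+(A))$, for which it suffices to verify $c(s_0)\notin I^-(D^+(A))$.

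The main technical step is the auxiliary inclusion
\[
J^+(A)\cap I^-(D^+(A))\subseteq D^+(A),
\]
which I would establish by a push-up argument. Take $p$ in the intersection, pick $a_0\in A$ with $a_0\le p$ and $q\in D^+(A)$ with $p\ll q$, and let $\sigma$ be any past-inextendible causal curve through $p$. Concatenating a past-directed timelike segment from $q$ to $p$ with $\sigma$ produces a past-inextendible causal curve through $q$, which must meet $A$ by $q\in D^+(A)$. If the meeting point lies strictly between $q$ and $p$ on the timelike segment, or equals $q$, then Lemma~\ref{pu} applied to $a_0\le p\ll(\cdot)$ yields a chronological pair in $A$, contradicting achronality of $A$. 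If the meeting point equals $p$ then $p\in A\subseteq D^+(A)$; otherwise it lies on $\sigma$, giving $p\in D^+(A)$ directly.

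Granted the auxiliary inclusion, each $c(t)$ with $t\in[0,s_0)$ lies in $J^+(A)\setminus D^+(A)$, hence outside $I^-(D^+(A))$; passing to the limit $t\to s_0^-$ and using that $I^-(D^+(A))$ is open (so its complement is closed) gives $c(s_0)\notin I^-(D^+(A))$. Combined with $c(s_0)\in\overline{D^+(A)}$ this places $c(s_0)\in H^+(A)$, as required. The principal obstacle is the push-up step for the auxiliary inclusion: one has to track the several possible positions of the meeting point on the concatenated curve and, in each case, upgrade causal relations to chronological ones via Lemma~\ref{pu} before invoking achronality of $A$. All regularity-sensitive ingredients (push-up, openness of $I^\pm$, Proposition~\ref{prop:10}) are available for $C^{1,1}$-metrics, so no new analytic difficulties arise beyond the smooth case.
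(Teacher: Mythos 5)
Your proof is correct. The reduction to the core case $x\in J^+(A)\setminus\overline{D^+(A)}$ and the ``first-entry'' set-up coincide with the paper's. Where you diverge is in how the first entry point is placed in $H^+(A)$: the paper argues by contradiction, assuming $\lambda(t_0)\in A\setminus H^+(A)$, picking $p\in I^+(\lambda(t_0))\cap D^+(A)$ and a point $\lambda(t_1)\in I^-(p)$ with $t_1<t_0$, then invoking the characterisation in Proposition~\ref{prop:10} to get a past-inextendible \emph{timelike} curve from $\lambda(t_1)$ avoiding $A$, and finally deriving $\lambda(t_1)\in I^-(A)$, which contradicts achronality of $A$ after one push-up. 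You instead isolate the reusable inclusion $J^+(A)\cap I^-(D^+(A))\sse D^+(A)$, proved by a concatenation/push-up case analysis working directly from the definition of $D^+(A)$ (and not needing the timelike-curve characterisation of $\overline{D^+(A)}$), and then obtain $c(s_0)\notin I^-(D^+(A))$ by a clean limiting argument using openness of $I^-(D^+(A))$. The two approaches lean on the same basic ingredients (push-up Lemma~\ref{pu}, Proposition~\ref{prop:10}, openness of $I^\pm$), all of which the paper establishes for $C^{1,1}$-metrics, so your version is sound; what it buys is a slightly more modular structure, with a self-contained auxiliary inclusion and a direct (rather than proof-by-contradiction) conclusion at the entry point. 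One small point worth tightening: in the final branch of the case analysis (``otherwise it lies on $\sigma$, giving $p\in D^+(A)$ directly''), it is not a single $\sigma$ meeting $A$ that yields $p\in D^+(A)$; rather, since $\sigma$ was arbitrary and every branch either gives a contradiction, gives $p\in A$, or forces $\sigma$ to meet $A$, you conclude that \emph{every} past-inextendible causal curve through $p$ meets $A$.
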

\begin{proof}
Let $x \in J^{+}(A) \setminus D^{+}(A)$ or $x\in I^+(A)\setminus D^{+}(A)^\circ$. If $x \in \overline{D^{+}(A)}$, then $x\in \partial D^{+}(A) = A\cup H^{+}(A)$ (see Proposition~\ref{prop:10}). Thus $x \in H^{+}(A)$ since in either case $x$ cannot be in $A$ because $A \sse D^+(A)$ and $I^+(A)\cap A=\emptyset$ by achronality, so we are done.

Now assume $x\notin\overline{D^{+}(A)}$ and let $\lambda$ be a causal
curve from $x$ to $A \subseteq D^{+}(A)$.
Then there exists $t_{0}>0$ such that $\lambda(t_{0}) \in \partial D^{+}(A)$
but $\lambda(t)\notin\overline{D^{+}(A)}$ for all $t<t_{0}$. We
have to show that $\lambda(t_{0}) \in H^{+}(A)$. Assume to the contrary
that $\lambda(t_{0}) \in A \setminus H^{+}(A)$ (cf.~\eqref{eq: boundary of D+}). Then $I^{+}(\lambda(t_{0})) \cap D^{+}(A)\neq\emptyset$
by definition of $H^{+}$. Now let $p \in I^{+}(\lambda(t_{0})) \cap D^{+}(A)$,
then $I^{-}(p)$ is an open neighbourhood of $\lambda(t_{0})$ so there
exists a $t_{1}<t_{0}$ such that $\lambda(t_{1})$ is still in $I^{-}(p)$.
Since $t_{1}<t_{0}$ we have $\lambda(t_{1})\notin\overline{D^{+}(A)}$,
so, by~\eqref{eq:closur of D+}, there exists a timelike past inextendible curve $\gamma$ starting at $\lambda(t_{1})$
that does not meet $A$. Concatenating any timelike curve from $p$
to $\lambda(t_{1})$ with $\gamma$ shows that this timelike curve
from $p$ to $\lambda(t_{1})$ must meet $A$ in a point that cannot be $\lambda(t_{1})$ itself (since $\lambda(t_{1})\notin\overline{D^{+}(A)}$).
But this means that $\lambda(t_{1}) \in I^{-}(A)$, giving a contradiction
to $\lambda(t_{1})\geq\lambda(t_{0}) \in A$ and achronality of $A$.
\end{proof}
We use this to give a proof of~\cite[Equation~(2.4)]{HP} in the $C^{1,1}$-setting.
\begin{Lemma}
\label{lem:15}
Let $A$ be closed and achronal. Then $I^{+}\left(H^{+}(A)\right)=I^{+}(A) \setminus \overline{D^{+}(A)}$.
\end{Lemma}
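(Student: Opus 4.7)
The plan is to establish the two inclusions separately. For the ``$\subseteq$'' direction, the key tools are the definition $H^{+}(A) = \overline{D^{+}(A)} \setminus I^{-}(D^{+}(A))$, the openness of chronological futures (the first Lemma on limit curves), and the push-up principle (Lemma~\ref{pu}). For the ``$\supseteq$'' direction, the core mechanism is Lemma~\ref{lem:12}, which forces any causal curve from a point of $I^{+}(A) \setminus D^{+}(A)^{\circ}$ down to $A$ to cross the Cauchy horizon.

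For the inclusion $I^{+}(H^{+}(A)) \subseteq I^{+}(A) \setminus \overline{D^{+}(A)}$, I would pick $q \in I^{+}(p)$ with $p \in H^{+}(A)$. By the equivalent definition of $H^{+}(A)$ in~\eqref{eq: def H+}, one has $I^{+}(p) \cap D^{+}(A) = \emptyset$; since $I^{+}(p)$ is open and contains a neighbourhood of $q$, this gives $q \notin \overline{D^{+}(A)}$. To obtain $q \in I^{+}(A)$, observe that $I^{-}(q)$ is an open neighbourhood of $p \in \overline{D^{+}(A)}$, so it intersects $D^{+}(A)$ at some point $d$. Since the definition of $D^{+}(A)$ immediately gives $D^{+}(A) \subseteq J^{+}(A)$ (pick any past inextendible causal curve through $d$, it meets $A$), we have $a \leq d \ll q$ for some $a \in A$, and the push-up Lemma~\ref{pu} yields $q \in I^{+}(A)$.

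For the reverse inclusion, take $q \in I^{+}(A) \setminus \overline{D^{+}(A)}$ and choose $a \in A$ with $a \ll q$, together with a past-directed timelike curve $\lambda$ from $q$ to $a$. Since $q \in I^{+}(A) \setminus D^{+}(A)^{\circ}$, Lemma~\ref{lem:12} applies and forces $\lambda$ to meet $H^{+}(A)$ at some interior point $r$ (the point cannot be $q$ itself, as $q \notin \overline{D^{+}(A)} \supseteq H^{+}(A)$, and cannot be $a$ since $a \in A$ is disjoint from $H^{+}(A)$ by achronality of $A$ together with $A \subseteq D^{+}(A)$). Because $r$ lies strictly between $q$ and $a$ on the \emph{timelike} curve $\lambda$, we get $r \ll q$, hence $q \in I^{+}(H^{+}(A))$.

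I do not expect a serious obstacle: both directions reduce, via openness of $I^{\pm}$ and push-up, to the structural Lemma~\ref{lem:12} and the definition of $H^{+}$. The only small care needed is to make sure that $r \neq q$ and $r \neq a$ in the second paragraph (hence that the intersection of $\lambda$ with $H^{+}(A)$ is genuinely in the timelike past of $q$), and that $D^{+}(A) \subseteq J^{+}(A)$ for closed $A$ in the $C^{1,1}$ setting, which follows directly from the definition of $D^{+}(A)$ using causal curves as given in~\eqref{eq:def D+}.
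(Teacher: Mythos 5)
Your proof is correct and follows essentially the same route as the paper's: for $\subseteq$ you rely on the defining property $I^{+}(p)\cap D^{+}(A)=\emptyset$ for $p\in H^{+}(A)$ from~\eqref{eq: def H+} together with openness of $I^{\pm}$, and for $\supseteq$ you rely on Lemma~\ref{lem:12}; the only (harmless) variation is that you get $q\in I^{+}(A)$ via $D^{+}(A)\subseteq J^{+}(A)$ and push-up, whereas the paper quotes the timelike characterisation of $\overline{D^{+}(A)}$ in Proposition~\ref{prop:10}.

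One small inaccuracy worth noting: your aside that $a\notin H^{+}(A)$, argued ``by achronality of $A$ together with $A\subseteq D^{+}(A)$'', is false in general. For a closed achronal $A$ one can perfectly well have $A\cap H^{+}(A)\neq\emptyset$ (e.g.\ edge points of $A$, or the extreme case of $A$ a single point, where $A=H^{+}(A)$). Fortunately this does not affect your argument: all you need is $r\neq q$, which already follows from $q\notin\overline{D^{+}(A)}\supseteq H^{+}(A)$, and then $r\ll q$ since $\lambda$ is timelike — whether or not $r=a$. Indeed this is exactly how the paper's proof handles the point.
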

\begin{proof}
By Proposition~\ref{prop:10} we have $H^{+}(A) \subseteq \overline{D^{+}(A)} \subseteq I^{+}\left(A\right)\cup A$,
so $I^{+}\left(H^{+}\left(A\right)\right) \subseteq I^{+}\left(A\right)$.
Let $x \in I^{+}\left(H^{+}(A)\right)$ and assume $x \in \overline{D^{+}(A)}$,
then there exists a neighbourhood $U$ of $x$ such that $U \cap D^{+}\left(A\right)\neq\emptyset$
and $U \subseteq I^{+}\left(H^{+}(A)\right)$, contradicting $I^{+}\left(H^{+}(A)\right) \cap D^{+}\left(A\right)=\emptyset$
(cf.~\eqref{eq: def H+}).
So $I^{+}\left(H^{+}(A)\right) \subseteq I^{+}(A) \setminus \overline{D^{+}(A)}$.

Now let $x \in I^{+}(A) \setminus \overline{D^{+}(A)}$. Then by Lemma~\ref{lem:12} any timelike curve from $x$ to $A$ must meet $H^{+}\left(A\right)$
in some point $p$ so, since $x\notin\overline{D^{+}(A)}\supseteq H^{+}\left(A\right)$
we have $p\neq x$, and thus $x$ must be in $I^{+}\left(H^{+}(A)\right)$.
\end{proof}

\begin{Lemma}
\label{helem}
Let $S$ be closed and achronal. Then
\[
\mathrm{edge}(H^+(S)) \subseteq \mathrm{edge(S)}.
\]
\end{Lemma}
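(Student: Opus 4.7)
The approach is to verify the two defining conditions of $p\in\mathrm{edge}(S)$ separately: (a) $p\in\bar S$, and (b) every neighbourhood $U$ of $p$ admits a timelike curve from $I^-(p,U)$ to $I^+(p,U)$ that does not meet $S$. The plan is to adapt the classical smooth argument (cf.\ O'Neill, Lemma~14.26) to the $C^{1,1}$ setting using the results already developed in the appendix. First I would observe that $H^+(S)=\overline{D^+(S)}\setminus I^-(D^+(S))$ is closed (difference of a closed and an open set), so any $p\in\mathrm{edge}(H^+(S))$ automatically lies in $H^+(S)$ itself; and that by Proposition~\ref{prop:10}, any point of $\overline{D^+(S)}\setminus S$ lies in $I^+(S)$ (take a past-inextendible timelike curve through the point; it meets $S$ at some $s$ distinct from the point, giving $s\ll$ point), so $\overline{D^+(S)}\subseteq S\cup I^+(S)$.

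For part (a), I would argue by contradiction: suppose $p\notin S$. Then $p\in I^+(S)\setminus S$, so I would pick a convex normal neighbourhood $U\subseteq I^+(S)$ of $p$ with $U\cap S=\emptyset$. The edge condition for $H^+(S)$ produces a timelike curve $\tau\subseteq U$ from some $q\ll p$ to some $r\gg p$ avoiding $H^+(S)$. By Lemma~\ref{lem:15}, $r\in I^+(H^+(S))=I^+(S)\setminus\overline{D^+(S)}$, hence Proposition~\ref{prop:10} furnishes a past-inextendible timelike curve $\sigma$ through $r$ that misses $S$. Combining $\sigma$ with the reversed timelike segment of $\tau$ from $r$ back towards $q$ (which lies in $U$ and therefore avoids $S$) and a suitable past-extension passing through $p$, I would build a past-inextendible timelike curve $\lambda$ through $p$ that avoids $S$; this contradicts $p\in\overline{D^+(S)}$ via Proposition~\ref{prop:10}, forcing $p\in\bar S=S$.

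For part (b), once $p\in S$ is known, a given neighbourhood $U$ of $p$ can be shrunk so that any $s\in S\cap U$ not already lying in $H^+(S)$ satisfies $I^+(s)\cap D^+(S)\neq\emptyset$, and using the achronality of $S$ together with convexity of $U$ one may perturb the edge-type timelike curve $\tau$ (provided by the edge condition for $H^+(S)$) to also avoid $S\cap U$, without sacrificing the timelike character or avoidance of $H^+(S)$. The main obstacle will be the construction in step (a) of the past-inextendible timelike curve $\lambda$ through $p$ that avoids $S$: since $p\in\overline{D^+(S)}$ means that ordinarily \emph{every} past-inextendible timelike curve through $p$ meets $S$, one must use the particular detour structure provided by $\tau$ avoiding $H^+(S)$, together with the push-up principle (Lemma~\ref{pu}) and Lemma~\ref{lem:12}, to manufacture the exceptional curve. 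This is the delicate point where the edge condition is genuinely exploited, and where convexity of $U$ and the $C^{1,1}$-limit curve arguments from Theorem~\ref{prop:MinguzziLimit} play a role analogous to their use in Theorem~\ref{3.4}.
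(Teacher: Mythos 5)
Your plan — verifying the two defining conditions of $\mathrm{edge}(S)$ separately — is a legitimate strategy and is in fact close in spirit to the classical Hawking--Ellis argument that the paper's proof of this lemma explicitly follows. Your preliminary observations ($H^+(S)$ closed, $\overline{D^+(S)}\subseteq S\cup I^+(S)$) are correct and useful. However, there is a genuine gap at the step you yourself flag as delicate, and the tools you gesture at are not the right ones.

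In part (a) the construction of $\lambda$ does not work as described. The curve $\sigma$ runs through $r$, not through $p$, and the reversed segment of $\tau$ also emanates from $r$; neither object can be spliced into a past-inextendible \emph{timelike} curve passing through $p$ while retaining $S$-avoidance. The correct move, still inside your contradiction hypothesis $p\notin S$ (which gives $U\subseteq I^+(S)$ disjoint from $S$), is to show that $q\notin\overline{D^+(S)}$: since $r\in I^+(S)\setminus\overline{D^+(S)}$ by Lemma~\ref{lem:15}, and $\tau$ joins $q$ to $r$ while avoiding both $S$ (because $\tau\subseteq U$) and $H^+(S)$ (by the edge condition), $\tau$ cannot cross $\partial D^+(S)=S\cup H^+(S)$ from Proposition~\ref{prop:10}; hence $q$ must also lie outside $\overline{D^+(S)}$. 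Then Proposition~\ref{prop:10} gives a past-inextendible timelike curve from $q$ (not from $r$) that misses $S$, and prepending a timelike segment from $p$ to $q$ inside $U$ produces the $\lambda$ you want. Your citation of the push-up principle and of limit-curve arguments (Theorem~\ref{prop:MinguzziLimit}) is a red herring: the entire lemma is pure causal-structure bookkeeping built on Proposition~\ref{prop:10}, Lemma~\ref{lem:12}, and Lemma~\ref{lem:15}, all already available in $C^{1,1}$; no regularisation, no convexity, and no limit-curve theorem enter.

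Part (b) is hand-waving, and the ``perturb $\tau$'' idea is not the right mechanism. Once $U\cap S\neq\emptyset$ (which happens once you know $p\in S$), you cannot simply perturb $\tau$ away from $S$ while preserving timelikeness, $H^+(S)$-avoidance, and the endpoints in $I^\mp(p,U)$. What the paper actually does is show directly that any edge curve $\lambda_k$ for $H^+(S)$ avoids the whole of $\overline{D^+(S)}\supseteq S$: since its future endpoint $r_k$ lies in $I^+(S)\setminus\overline{D^+(S)}$, if $\lambda_k$ met $\overline{D^+(S)}$ it would cross $\partial D^+(S)=S\cup H^+(S)$ and hence meet $S$; then Lemma~\ref{lem:12} applied at $r_k$ forces an intersection with $H^+(S)$, a contradiction. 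This makes the $\lambda_k$ edge curves for $S$ with no perturbation needed. Finally, note that the paper does not argue part (a) by contradiction at all: it uses that every past-inextendible timelike curve through $q$ \emph{must} meet $S$ (Proposition~\ref{prop:10}), and then shows that the meeting points $z_k$ are trapped in $U_k$ because $p_k\in I^-(S)$ and $S$ is achronal, yielding $z_k\to q$ and hence $q\in\bar S=S$. Your contradiction route can be made to work with the fix above, but as written it is incomplete.
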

\begin{proof}
We basically follow the proof of~\cite[Prop.~6.5.2]{HE}. Let $q \in \mathrm{edge}(H^+(S))$ and let $U_k$ be a sequence of
neighbourhoods of $q$ with $U_k\to \{q\}$. By definition of edge (cf.~\cite[14.23]{ON83}), for each $n$ there
exist points $p_k \in I^-(q,U_k)$ and $r_k \in I^+(q,U_k)$ connected by a future directed
timelike curve $\lambda_k$ that does not intersect $H^+(S)$. It then follows that
$\lambda_k$ does not intersect $\overline{D^+(S)}\supseteq S$.

In particular, $r_k \in I^+(q, U_k) \subseteq I^+(q)$, so $q \in I^-(r_k)$. Hence, $I^-(r_k)$ is a neighbourhood of $q$, so $I^-(r_k) \cap H^+(S) \neq \emptyset$, so $r_k \in I^+(H^+(S))$. Therefore, by Lemma~\ref{lem:15}, $r_k \in I^+(S)$, but $r_k \not\in \overline{D^+(S)}$. Thus, if $\lambda_k$ would intersect $\overline{D^+(S)}$, it
would also have to intersect the boundary of that set, i.e., $S\cup H^+(S)$ (by~\eqref{eq: boundary of D+}),
and thereby $S$. But then Lemma~\ref{lem:12}, applied to $x=r_k$ would imply that
$\lambda_k$ intersects $H^+(S)$, a contradiction.

It remains to show that $q\in \bar{S}$. Since $q \in \overline{D^+(S)}$ we have $
I^-(q) \subseteq I^-(\overline{D^+(S)}) \subseteq I^-(S)\cup \overline{D^+(S)}$.
It follows that $p_k \in I^-(q) \setminus \overline{D^+(S)} \subseteq I^-(S)$. Let $\alpha_k $ be a timelike curve from $q$ to $p_k$ contained in $U_k$ and extend it to the past to become past inextendible. As $q \in \mathrm{edge}(H^+(S))\sse \overline{H^+(S)} \subseteq \overline{D^+(S)}$, this curve must, by Proposition~\ref{prop:10}, intersect $S$ in a point $z_k$. Since $p_k \in I^-(S)$ and $S$ is achronal any such $z_k$ must lie between $q$ and $p_k$, hence $z_k\in U_k$. Thus $z_k\to q$,
and therefore $q \in \overline{S}$.
\end{proof}

\begin{Lemma}
\label{lem:16}
Let $S$ be achronal. Then the Cauchy horizon $\hpjp$ of $\jp$ is a closed, achronal topological hypersurface.
\end{Lemma}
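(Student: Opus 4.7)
The plan is to verify that $\hpjp$ is closed, achronal, and edgeless; the last two, combined with the standard characterisation of achronal topological hypersurfaces via their edge, will yield the topological hypersurface statement. Throughout I set $A := \jp$ and recall from Lemma~\ref{lem:3} that $A$ is itself a closed achronal topological hypersurface.

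Closedness is immediate from~\eqref{eq: def H+}: the set $\overline{D^+(A)}$ is closed, and $I^-(D^+(A))$ is open (since $I^-$ of any set is open), so $H^+(A)$ is closed as the intersection of a closed set with the complement of an open set.

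For achronality, I would argue by contradiction. Suppose $p,q \in H^+(A)$ with $p \ll q$. Since $q \in \overline{D^+(A)}$, choose a sequence $q_n \to q$ with $q_n \in D^+(A)$. The set $I^+(p)$ is open and contains $q$, so $q_n \in I^+(p)$ for large $n$, i.e.\ $p \ll q_n \in D^+(A)$; hence $p \in I^-(D^+(A))$, contradicting $p \in H^+(A) = \overline{D^+(A)} \setminus I^-(D^+(A))$.

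For the topological hypersurface property, the key step is to show $\edge(H^+(A)) = \emptyset$ and then invoke the standard fact that an achronal set with empty edge is a topological hypersurface. First, $A = \jp$ itself has empty edge: since $A$ is closed achronal and already a topological hypersurface, the characterisation $A \cap \edge(A) = \emptyset$ holds, while in general $\edge(A) \subseteq \overline A = A$; together these force $\edge(A) = \emptyset$. Applying Lemma~\ref{helem} to the closed achronal set $A$ then gives $\edge(H^+(A)) \subseteq \edge(A) = \emptyset$, completing the argument. The only points that need checking are that the edge-versus-hypersurface dichotomy and Lemma~\ref{helem} itself both transfer to the $C^{1,1}$ setting; both arguments rely only on openness of $I^\pm$, the push-up principle (Lemma~\ref{pu}), and Proposition~\ref{prop:10}, all of which are in place. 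The main conceptual obstacle is really just the observation that $\edge(A)=\emptyset$, which one might initially expect to require a separate technical argument but in fact follows directly from $A$ being simultaneously closed and an achronal topological hypersurface.
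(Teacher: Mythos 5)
Your proof is correct and follows essentially the same route as the paper: closedness from the definition $\hpjp=\overline{D^+(\jp)}\setminus I^-(D^+(\jp))$, achronality from the characterisation of $H^+$ (the paper cites Lemma~\ref{lem:15}; you unpack the same argument directly), and emptiness of the edge via Lemma~\ref{helem} together with $\edge(\jp)=\emptyset$, which, as you observe, follows from $\jp$ being a closed achronal topological hypersurface (Lemma~\ref{lem:3}) and the standard edge characterisation. The paper's proof is just a terser version of yours, citing Lemma~\ref{newlem}, Lemma~\ref{lem:3} and \cite[Prop.~A.18]{hawkingc11} for exactly the facts you spell out.
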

\begin{proof}
Clearly $\hpjp$ is closed and achronality follows from Lemma~\ref{lem:15}.
By Lemma~\ref{helem} (and Lemma~\ref{newlem}),
$\text{edge}\left(\hpjp \right) \subseteq \text{edge}(\jp )=\emptyset$ (see Lemma~\ref{lem:3} and~\cite[Prop.~A.18]{hawkingc11}), so the claim follows from~\cite[Prop.~A.18]{hawkingc11}.
\end{proof}

\begin{Lemma}
\label{cor:22}
Let $S$ be closed and achronal. Then
$ H^+(\overline{E^+(S)}) \sse \hpjp $.
\end{Lemma}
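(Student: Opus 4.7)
The strategy is to verify both defining conditions of $x \in H^+(\jp)$ for any $x \in H^+(\overline{E^+(S)})$. First I would establish $\overline{E^+(S)} \subseteq \jp$: any $q \in E^+(S) = J^+(S) \setminus I^+(S)$ cannot lie in $J^+(S)^{\circ}$, because otherwise one could pick $p \in I^-(q) \cap J^+(S)^{\circ}$ and apply push-up (Lemma~\ref{pu}) to conclude $S \ll q$, contradicting $q \notin I^+(S)$. Hence $q \in \jp$, and the closedness of $\jp$ (Lemma~\ref{lem:3}) gives $\overline{E^+(S)} \subseteq \jp$. It follows that $D^+(\overline{E^+(S)}) \subseteq D^+(\jp)$ and consequently $\overline{D^+(\overline{E^+(S)})} \subseteq \overline{D^+(\jp)}$, establishing the first defining condition for $x \in H^+(\jp)$.

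For the second condition $I^+(x) \cap D^+(\jp) = \emptyset$ I would argue by contradiction. Assume $y \in I^+(x) \cap D^+(\jp)$. Since $I^+(x) \cap D^+(\overline{E^+(S)}) = \emptyset$ by hypothesis on $x$, we have $y \notin D^+(\overline{E^+(S)})$, so there is a past-inextendible causal curve $\gamma$ through $y$ avoiding $\overline{E^+(S)}$. On the other hand $y \in D^+(\jp)$ forces $\gamma$ to meet $\jp$ at some point $p \in \jp \setminus \overline{E^+(S)}$. Because $\jp \cap J^+(S) = E^+(S) \subseteq \overline{E^+(S)}$ (the same push-up observation), we obtain $p \in \overline{J^+(S)} \setminus J^+(S)$, and in particular $p \notin \overline{S}$. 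Proposition~\ref{prop:4} then supplies a past-inextendible null geodesic $\mu$ starting at $p$ and contained in $\jp \setminus J^+(S)$. Using the local hypersurface structure of $\jp$ at $p$ (it locally separates $I^+(S)$ from $M \setminus \overline{J^+(S)}$ by Lemma~\ref{lem:3}) together with the fact that $\gamma$ avoids $E^+(S) \supseteq \jp \cap J^+(S)$ near $p$, I would perform a small causal deformation of $\gamma$ that detours around $p$ into the open region $M \setminus \overline{J^+(S)}$ and then extends past-inextendibly inside that region (which contains no point of $\jp$). This produces a past-inextendible causal curve through $y$ disjoint from $\jp$, contradicting $y \in D^+(\jp)$.

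I expect the main technical obstacle to lie in justifying the deformation step within the reduced $C^{1,1}$ regularity: one needs openness of $I^+$, the push-up principle (Lemma~\ref{pu}), and the achronal topological-hypersurface property of $\jp$ (Lemma~\ref{lem:3}) to ensure that the deformed curve is still causal, past-inextendible, and remains outside $\overline{J^+(S)}$ for all past parameters. This mirrors the smooth proof (compare Kriele's Lemma~9.3.1), which hinges on the same perturbation technique.
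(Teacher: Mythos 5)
Your opening moves match the paper's: you correctly establish $\overline{E^+(S)} \subseteq \jp$ (hence $\overline{D^+(\overline{E^+(S)})} \subseteq \overline{D^+(\jp)}$), set up a contradiction with $y \in I^+(x) \cap D^+(\jp)$, extract a past-inextendible causal curve $\gamma$ through $y$ that misses $\overline{E^+(S)}$, locate a first meeting point $p \in \jp \setminus J^+(S)$, and invoke Proposition~\ref{prop:4} to get the past-inextendible null generator $\mu \subseteq \jp$. Up to here this runs parallel to the paper's proof (which instead takes a timelike curve $\lambda$ from Proposition~\ref{prop:10} and, importantly, chooses the auxiliary point $q$ to lie in $D^+(\jp)\setminus \hpjp$); the real divergence is in the final step.

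The ``deformation'' step is where the argument has a genuine gap. You want a past-inextendible causal curve through $y$ that misses $\jp$ entirely. But by Lemma~\ref{lem:15} applied to $\overline{E^+(S)}$ one has $y \in I^{+}\bigl(H^{+}(\overline{E^+(S)})\bigr) \subseteq I^{+}(\overline{E^+(S)}) \subseteq I^{+}(S)$, and $\jp$ is precisely the common topological boundary separating the two disjoint open sets $I^{+}(S) = J^{+}(S)^{\circ}$ and $M\setminus\overline{J^{+}(S)}$. Consequently $\gamma|_{[0,t_0)}$ lies entirely inside $I^{+}(S)$ and approaches $p$ from the $I^{+}(S)$ side, and \emph{any} connected causal curve starting at $y$ that reaches $M\setminus\overline{J^{+}(S)}$ must cross $\jp$ on the way. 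So the detoured curve is not disjoint from $\jp$, and no contradiction with $y\in D^{+}(\jp)$ is obtained; the perturbation technique simply cannot produce the curve you need. The paper sidesteps this: rather than trying to avoid $\jp$, it concatenates $\lambda$ with $\mu$ to obtain a past-inextendible causal curve through a point $q \in D^{+}(\jp)\setminus\hpjp$, and invokes Lemma~\ref{prop:11}, which forces this concatenation to enter $I^{-}(\jp)$. That is impossible because $\lambda$ before $z$ lies in $I^{+}(\jp)$ and $\mu\subseteq\jp$, so entering $I^{-}(\jp)$ would contradict the achronality of $\jp$. You need to replace your deformation step with an argument of this type, based on Lemma~\ref{prop:11} and achronality of $\jp$ (and, correspondingly, you must first arrange $y \notin \hpjp$ so that Lemma~\ref{prop:11} applies).
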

\begin{proof}
We roughly follow the proof of~\cite[Lemma 9.3.1]{Krie}. Assume to the contrary that there exists
$p\in H^+(\overline{E^+(S)}) \setminus
\hpjp$. Since $\overline{E^+(S)} \sse \partial J^+(S)$ we have $\overline{D^+(\overline{E^+(S)})} \sse
\overline{\dpjp}$, so $p\in I^-(\dpjp)$. Thus there exists $q$ in $I^+(p)\cap \dpjp$ and
because $p\not\in \hpjp$ and $\hpjp $ is closed, we may additionally assume that $q\notin
\hpjp $. This $q$ is in $I^+(H^+(\overline{E^+(S)}) )$ so by Lemma~\ref{lem:15} $q\notin \overline{D^+(\overline{E^+(S)})}
$. Thus by Proposition~\ref{prop:10} there exists a past inextendible timelike curve $\lambda $ starting in $q$ that
never meets $\overline{E^+(S)}$. However, as any such curve must meet $\partial J^+(S)$ there
exists $z\in \lambda$ with $z\in \partial J^+(S)\setminus E^+(S)$. By Proposition~\ref{prop:4} there exists a past inextendible null curve $\mu \sse
\partial J^+(S)\setminus E^+(S)$ starting in $z$. Finally by Lemma~\ref{prop:11} the concatenation of $\lambda $ and $\mu
$ must enter $I^-(\partial J^+(S))$, contradicting the achronality of $\partial J^+(S)$.
\end{proof}

\subsection{Strong causality}

Finally we are going to collect some results concerning strong causality.
\begin{definition}
\label{def: strong causalty}
\emph{Strong causality} holds at a point
$p \in M$ if for every neighbourhood $U$ of $p$ there exists a neighbourhood
$V$ of $p$ with $V \subseteq U$ such that every causal curve in $M$
that starts and ends in $V$ is entirely contained in $U$.
\end{definition}
As in the smooth case there is the following alternative definition.
\begin{Lemma}
\label{lem:18 (strong causality equiv)}
Strong causality holds at $p$
if and only if for every neighbourhood $U$ of $p$ there exists a
neighbourhood $V$ of $p$ with $V \subseteq U$ such that no causal curve
in $M$ intersects $V$ more than once.
\end{Lemma}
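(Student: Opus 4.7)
The $(\Leftarrow)$ direction is immediate: if $V \subseteq U$ has the property that $\gamma^{-1}(V)$ is connected for every causal curve $\gamma$, then any causal curve with both endpoints in $V$ has its full parameter interval in $\gamma^{-1}(V)$, so the curve is entirely contained in $V \subseteq U$, which is precisely strong causality.

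For $(\Rightarrow)$ I would follow the classical construction, which transfers to the $C^{1,1}$ setting thanks to the existence of (totally) convex normal neighbourhoods for $C^{1,1}$-metrics established in \cite{KSS}. Given a neighbourhood $U$ of $p$, the plan is in three steps. First, shrink $U$ so that it lies inside a convex normal neighbourhood $C$ of $p$. Second, invoke strong causality at $p$ to obtain an open $W \subseteq C$ containing $p$ such that every causal curve with endpoints in $W$ stays in $C$. Third, choose points $q \in I^-(p,C)$ and $r \in I^+(p,C)$ sufficiently close to $p$ so that the open set $V := I^+(q) \cap I^-(r) \cap C$ is contained in $W$; such $q, r$ exist because in a convex normal neighbourhood the causal diamonds $I^+(q) \cap I^-(r) \cap C$ shrink to $\{p\}$ as $q,r \to p$. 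By construction $V$ is an open neighbourhood of $p$ contained in $U$.

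It remains to verify that no causal curve meets $V$ more than once. Take any causal curve $\gamma$ with $\gamma(s_1), \gamma(s_2) \in V$ and $s_1 < s_2$. Since $V \subseteq W$, strong causality forces the segment $\gamma|_{[s_1,s_2]}$ to lie entirely in $C$. For any $s \in [s_1,s_2]$, the push-up Lemma \ref{pu} applied to $q \ll \gamma(s_1) \leq \gamma(s)$ gives $q \ll \gamma(s)$, and analogously $\gamma(s) \ll r$, so that $\gamma(s) \in I^+(q) \cap I^-(r) \cap C = V$. Hence $\gamma^{-1}(V)$ is the connected interval $[s_1,s_2]$ on this subsegment, and since $s_1, s_2$ were arbitrary intersection parameters, $\gamma^{-1}(V)$ is connected overall.

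The main obstacle is the construction of the convex normal neighbourhood $C$ at the reduced level of regularity, where the exponential map is only Lipschitz-continuously differentiable; however, the results of \cite{KSS} establish the existence of such neighbourhoods with the properties required here (in particular that chronological relations inside $C$ between points close to $p$ are witnessed by timelike geodesics that remain in $C$), and the remainder of the argument then proceeds exactly as in the smooth case.
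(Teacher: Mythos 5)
The paper itself does not prove this lemma; it simply cites Lemma~3.21 of Minguzzi--S\'anchez, so your proof is genuinely new material relative to the paper. Your $(\Leftarrow)$ direction is correct. The $(\Rightarrow)$ direction follows the classical outline, but there is a gap at the step where you choose $q,r$ with $V := I^+(q)\cap I^-(r)\cap C \subseteq W$.

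You justify the existence of such $q,r$ by asserting that ``in a convex normal neighbourhood the causal diamonds $I^+(q)\cap I^-(r)\cap C$ shrink to $\{p\}$ as $q,r\to p$.'' This statement uses the \emph{global} chronological relations $I^\pm(\cdot)$ of $(M,g)$, and for those it is not a standard fact and need not be true: a timelike curve from $q$ may leave $C$, wander through a causally complicated region, and return to a point of $C$ far from $p$, so $I^+(q)\cap C$ can strictly contain $I^+(q,C)$, and the nested intersection $\bigcap_{q,r}I^+(q)\cap I^-(r)\cap C$ may be larger than $\{p\}$ (and even if it were $\{p\}$, one would still need a compactness argument to get a single pair $(q,r)$ with $V\subseteq W$). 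What does shrink to $\{p\}$, by the normal-coordinate description of $C$ and the estimates of \cite{KSS}, are the \emph{relative} diamonds $I^+(q,C)\cap I^-(r,C)$ (or $J^+(q,C)\cap J^-(r,C)$). There are two clean ways to repair the step. Either take $V:=I^+(q,C)\cap I^-(r,C)\subseteq W$, confine $\gamma|_{[s_1,s_2]}$ to $C$ by strong causality as you do, and then run the push-up argument with $\ll_C$, $\leq_C$ inside $C$; or, more economically (and without needing $C$ at all), apply strong causality twice: first to get $W\subseteq U$ with curves from $W$ to $W$ staying in $U$, then again with $W$ as the outer neighbourhood to get $W'\subseteq W$ with curves from $W'$ to $W'$ staying in $W$, and pick $q\ll p\ll r$ with $q,r\in W'$. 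Then any $z\in I^+(q)\cap I^-(r)$ lies on a timelike curve from $q$ to $r$ and hence in $W$, which gives $V:=I^+(q)\cap I^-(r)\subseteq W\subseteq U$ directly, after which your push-up step goes through verbatim. With either fix your overall strategy is sound; as written, however, the existence of the required $q,r$ is not established.
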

\begin{proof}
See~\cite[Lem.~3.21]{Minguzzi08thecausal}.
\end{proof}

\begin{Lemma}
\label{lem: 19 conj. pts imply strong causality}
If $M$ is chronological and every inextendible
null geodesic is not maximising, then strong causality holds throughout
$M$.
\end{Lemma}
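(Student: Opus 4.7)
The plan is to argue by contradiction. Suppose strong causality fails at some $p\in M$. By Lemma~\ref{lem:18 (strong causality equiv)} there is a neighbourhood $U$ of $p$, a nested sequence $V_n\downarrow\{p\}$ of smaller open neighbourhoods, and future-directed causal curves $\gamma_n$, parametrised by $h$-arc length, with $\gamma_n(0)=x_n\in V_n$ and $\gamma_n(\ell_n)=y_n\in V_n$ and with the segment between $x_n$ and $y_n$ leaving $U$. In particular $\ell_n\ge 2\,d_h(p,\partial U)>0$ and $x_n,y_n\to p$. I then apply the limit curve theorem (Theorem~\ref{prop:MinguzziLimit}) to extract a causal limit curve through $p$, treating separately the cases in which $\ell_n$ stays bounded and in which $\ell_n\to\infty$.

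In the \emph{bounded case} the limit $\tilde\lambda\colon[0,L]\to M$ is a closed causal curve with $\tilde\lambda(0)=\tilde\lambda(L)=p$ and $L>0$. Chronology rules out $p\ll p$, so Lemma~\ref{pu:C11} forces $\tilde\lambda$ to be, up to reparametrisation, a maximising null geodesic, and its maximal affine extension is an inextendible null geodesic $\eta$. Applying Lemma~\ref{pu:C11} once more to the loop itself rules out a directional corner at $p$, so the incoming and outgoing tangent vectors of $\eta$ at its two passages through $p$ are positively proportional; the standard affine-scaling identity $\gamma_v(L+s)=\gamma_v(\alpha s)$ for null geodesics then forces $\eta$ to revisit $p$ at a sequence of parameters tending to both endpoints of its maximal domain. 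The hypothesis supplies $t_1<t_2$ with $\eta(t_1)\ll\eta(t_2)$; picking visits $\tau_-\le t_1<t_2\le\tau_+$ with $\eta(\tau_\pm)=p$ yields $p=\eta(\tau_-)\le\eta(t_1)\ll\eta(t_2)\le\eta(\tau_+)=p$, so pushup (Lemma~\ref{pu}) gives $p\ll p$, a contradiction.

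In the \emph{unbounded case} I apply the limit curve theorem additionally with basepoints $y_n$, on a common subsequence, to obtain a past-inextendible limit $\mu$ ending at $p$, and concatenate with the forward limit $\lambda$ to build an inextendible causal curve $\tilde\lambda\colon(c,d)\to M$ through $p$ with $\tilde\lambda(0)=p$ and $c<0<d$. Either $\tilde\lambda$ is already an unbroken null geodesic and the hypothesis provides $t_1<t_2$ with $\tilde\lambda(t_1)\ll\tilde\lambda(t_2)$, or a timelike segment or a corner supplies such a pair via Lemma~\ref{pu:C11}. Enlarging the interval by Lemma~\ref{pu}, I may assume $t_1<0<t_2$. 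Choosing a witness $r\in I^+(\tilde\lambda(t_1))\cap I^-(\tilde\lambda(t_2))$ and using that $\tilde\lambda(t_1)=\lim_k\gamma_{n_k}(\ell_{n_k}+t_1)$ lies near the $y_n$-end of $\gamma_{n_k}$ while $\tilde\lambda(t_2)=\lim_k\gamma_{n_k}(t_2)$ lies near the $x_n$-end, openness of $I^\pm(r)$ gives $r\ll\gamma_{n_k}(t_2)$ and $\gamma_{n_k}(\ell_{n_k}+t_1)\ll r$ for large $k$. Since the parameter $t_2$ precedes $\ell_{n_k}+t_1$ along the causal curve $\gamma_{n_k}$ (because $\ell_{n_k}\to\infty$), pushup closes the chain $r\ll\gamma_{n_k}(t_2)\le\gamma_{n_k}(\ell_{n_k}+t_1)\ll r$ to yield $r\ll r$, contradicting chronology.

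The step I expect to be the main obstacle is the recurrence claim in the bounded case: combining Lemma~\ref{pu:C11} applied to the loop (to rule out a directional corner at $p$) with the affine-scaling identity for null geodesics (to deduce a genuine geometric recurrence of $\eta$ at $p$ even when the affine parametrisation is not strictly periodic) and then bookkeeping the forward and backward accumulations of visits relative to $(t_1,t_2)$. Once this is in hand, the pigeonhole sandwich between two visits in the bounded case and the ``ping-pong'' closure of the timelike chain through $r$ in the unbounded case are routine applications of the pushup principle.
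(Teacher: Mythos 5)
Your proof is correct and takes essentially the same route as the paper: assume strong causality fails at $p$, produce causal curves that start and end in shrinking neighbourhoods of $p$ while leaving a fixed $U$, extract a causal limit curve via Theorem~\ref{prop:MinguzziLimit}, split on whether the $h$-arc lengths stay bounded or diverge (after passing to a subsequence), and close the contradiction with Lemmas~\ref{pu} and~\ref{pu:C11}.

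The one place where you genuinely go beyond the paper's wording is the bounded case. The paper disposes of the closed causal loop $\gamma^+$ from $p$ to $p$ with the single sentence that ``Lemma~\ref{pu} and Lemma~\ref{pu:C11} show that two points on $\gamma^+$ could be connected by a timelike curve because no inextendible null geodesic is maximising''. This sweeps under the rug precisely the point you flag as the ``main obstacle'': after Lemma~\ref{pu:C11} forces the loop to be a null geodesic, its tangent vectors at the two passages through $p$ need not coincide, only be positively proportional (if they were not even proportional, Lemma~\ref{pu:C11} applied to the doubled loop already gives $p\ll p$). Consequently the inextendible extension $\eta$ is generally not affinely periodic, and your invocation of the affine-scaling identity $\eta(L+s)=\eta(\alpha s)$ to show that the visits of $\eta$ to $p$ accumulate at both ends of its maximal domain is exactly what is required to perform the sandwich $\eta(\tau_-)\le\eta(t_1)\ll\eta(t_2)\le\eta(\tau_+)$ with $\eta(\tau_\pm)=p$. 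This is a correct and welcome filling-in of a terse step. In the unbounded case your use of an auxiliary witness point $r$ between $\tilde\lambda(t_1)$ and $\tilde\lambda(t_2)$ is a cosmetic variation; the paper closes the same chain directly on a point $\gamma_k^+(t^+)$ of one of the approximating curves. Either version works, via openness of $\ll$, the fact that $t_2<\ell_{n_k}+t_1$ for large $k$, and push-up.

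Two minor remarks. First, you cite Lemma~\ref{lem:18 (strong causality equiv)} but then use the negation of Definition~\ref{def: strong causalty} (curves starting and ending in $V_n$ that leave $U$); the definition itself is the right thing to invoke. Second, it should be stated explicitly that one passes to a subsequence along which $\ell_n$ is either bounded or tends to infinity, since a priori it could oscillate. Neither affects the substance of the argument.
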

\begin{proof}
The proof is similar to the smooth case, see, e.g.,~\cite[Prop.~12.39]{BEE} or~\cite[Lem.~8.3.7]{Krie}.
Assume to the contrary that strong causality does not hold at some point $p \in M$.
Then there exists a neighbourhood $U$ of $p$ and neighbourhoods $V_{k}$
of $p$ with $\bigcap_{k \in \mathbb{N}}V_{k}=\{p\}$ and future directed
causal curves $\gamma_{k}^{+}$ parametrised with respect to $h$-arclength
that start at $p_{k}=\gamma_{k}^{+}(0) \in V_{k}$ and end at $q_{k}=\gamma_{k}^{+}(b_{k}) \in V_{k}$
but leave $U$. Hence by Theorem~\ref{prop:MinguzziLimit}, there exists a causal limit curve $\gamma^{+}$ starting at $p$.
We may assume that this limit curve is future inextendible: Otherwise $b_k \to b <\infty $ and
$p=\lim_{k\to \infty}\gamma_{k}^{+}(b_{k})=\gamma^{+}(b)$, so
$\gamma^{+}$ is a closed causal curve. But
then Lemma~\ref{pu} and Lemma~\ref{pu:C11} show that two points on $\gamma^{+}$
could be connected by a timelike curve because no inextendible null geodesic is maximising
by assumption, contradicting chronology.

By the same argument, only using the (also future directed) curves $\gamma_k^- \colon [-b_k,0]\to M$ defined by $\gamma_k^-(t):=\gamma_k^+(b_k+t)$, one obtains a past inextendible causal limit curve $\gamma^{-}$ starting at $p$. Together these
two limit curves form an inextendible causal curve $\gamma$.

Since $\gamma$ is inextendible there are points $x=\gamma(t^{-})$ and $y=\gamma(t^{+})$ on $\gamma$
that can be connected by a timelike curve.
We may assume $y \in J^{+}(p)$ and $x \in J^{-}(p)$ by Lemma~\ref{pu} and $\gamma_{k}^{-}(t^{-})\to\gamma(t^{-})$
and $\gamma_{k}^{+}(t^{+})\to\gamma(t^{+})$. Since the relation $\ll$ is open (see~\cite[Sec.\ 1.4]{M} or~\cite[Cor.~3.12]{KSSV}) this implies $\gamma_{k}^{-}(t^{-})\ll\gamma_{k}^{+}(t^{+})$ for $k$ large. Then $\gamma_{k}^{-}(t^{-})=\gamma_{k}^{+}(t^{-}+b_{k})\ll\gamma_{k}^{+}(t^{+})$
and by $b_{k}\to \infty$ we get $t^{-}+b_{k}>t^{+}$ for large enough
$k$, but this yields $\gamma_{k}^+(t^{+})\leq\gamma_{k}^+(t^{-}+b_{k})\ll\gamma_{k}^{+}(t^{+})$,
hence there exists a closed timelike curve through $\gamma_{k}^{+}(t^{+})$, contradicting
chronology of $M$.
\end{proof}

As already remarked in~\cite[Def.~2.6]{S14}, the proof of~\cite[Lem.~14.13]{ON83} remains true even for continuous metrics and so strong causality implies that the spacetime is both \emph{non-totally} and \emph{non-partially imprisoning}, meaning that no future (or past) inextendible causal curve can remain in a compact set or return to it infinitely often. This gives

\begin{Lemma}
\label{lem: 28 the new one}
Let $M$ be strongly causal and let $\gamma$ be an inextendible causal curve in $M$. Then (the image of) $\gamma$
is a closed subset of $M$.
\end{Lemma}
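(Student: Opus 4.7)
The plan is to argue by contradiction, relying on the non-partial imprisonment property of strongly causal spacetimes, which is recorded in the remark immediately preceding the lemma (and which, as noted there, is inherited from the smooth proof of~\cite[Lem.~14.13]{ON83} since that proof only uses continuity of the metric). Concretely, suppose that $\gamma \colon I\to M$ is inextendible but that $\gamma(I)$ fails to be closed, so that there exist a point $p\in\overline{\gamma(I)}\setminus\gamma(I)$ and parameters $t_k\in I$ with $\gamma(t_k)\to p$.

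The first step is to locate the sequence $(t_k)$ near the endpoints of $I$. Because causal curves are locally Lipschitz, and in particular continuous, any accumulation point $t^\ast\in I$ of $(t_k)$ would yield $\gamma(t^\ast)=p$ by continuity, contradicting $p\notin\gamma(I)$. Consequently, after extracting a subsequence, $(t_k)$ converges monotonically to one of the endpoints of $I$; without loss of generality, $t_k\nearrow b:=\sup I$, the other case being treated symmetrically using past-inextendibility.

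The second step is to invoke strong causality directly. Pick any relatively compact neighbourhood $K$ of $p$. For $k$ large enough, $\gamma(t_k)\in K$, so the future portion $\gamma|_{[t_1,b)}$ visits the compact set $\overline{K}$ infinitely often. This contradicts the non-partial imprisonment property for strongly causal spacetimes cited in the remark above the lemma.

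The only mild subtlety—arguably the single point requiring any care—is verifying that the restriction $\gamma|_{[t_1,b)}$ is itself future-inextendible, since the remark refers to inextendible causal curves. This is, however, immediate: any continuous causal extension of $\gamma|_{[t_1,b)}$ beyond $b$ could be concatenated with $\gamma|_{(\inf I,t_1]}$ to produce a continuous causal extension of $\gamma$ itself, contradicting the assumed inextendibility of $\gamma$. With this observation in place, the contradiction is complete and the lemma follows.
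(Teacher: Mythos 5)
Your proof is correct and matches the argument the paper intends: the lemma is stated immediately after the remark recalling that strong causality forbids partial and total imprisonment, and the phrase ``This gives'' indicates exactly the line of reasoning you spell out (accumulation point argument to push the parameter sequence to the endpoint, then invoke non-imprisonment in a relatively compact neighbourhood of the missing limit point). Your extra remark on the inextendibility of the restriction $\gamma|_{[t_1,b)}$ is a sensible precaution, though it is automatic since the restriction has the same future behaviour as $\gamma$.
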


\noindent{\em Acknowledgements.} We are greatly indebted to James Vickers for several discussions that
have importantly contributed to this work. We also thank Clemens S\"amann for valuable input. The work of JG was partially supported by STFC Consolidated Grant ST/L000490/1. MG is the recipient of a DOC Fellowship of the Austrian Academy of Sciences. This work was supported by project P28770 of the Austrian Science Fund FWF. Finally, we gratefully acknowledge the kind hospitality of the Erwin Schrödinger Institute ESI
during the thematic programme ``Geometry and Relativity''.

\end{document}